\documentclass[10pt]{article}
\usepackage{epsfig,graphics,graphicx,subfigure,color}
\usepackage{amsfonts,amssymb,amsthm,amsmath,amsbsy,mathtools,cuted,kotex,subeqnarray,ocgx}
\setlength{\textwidth}{6.5in}
\setlength{\topmargin}{0pt}
\setlength{\voffset}{-.25in}
\setlength{\evensidemargin}{0pt}
\setlength{\oddsidemargin}{0pt}
\setlength{\hoffset}{0pt}
\setlength{\textheight}{600pt}

\newcommand{\interior}[1]{  {\kern0pt#1}^{\mathrm{o}}}
\newcommand{\norm}[1]{\left\lVert#1\right\rVert}

\newtheorem{theorem}{Theorem}
\newtheorem{corollary}[theorem]{Corollary}
\newtheorem{lem}[theorem]{Lemma}

\def\Bbb R{{\rm \bf R}}

\def\gathered{\begin{array}{c}}
\def\endgathered{\end{array}}
\def\text{\mbox}

\begin{document}
\title  {The number of global solutions for GPS source localization in two-dimension }
\author{Kiwoon KWON\footnote{e-mail address:kwkwon@dongguk.edu, Dept. of Math., Dongguk University, 04620 Seoul, South Korea.}
}
\maketitle

\begin{abstract}
Source localization is widely used in many areas including GPS, but the influence of possible noises is not so negligible.
Many optimization methods are attempted to alleviate different kinds of noises.  Needless to say the stability of the solution,
even  the number of global solutions are not fully known. Only local convergence or stability for the optimization problem are known in simple $L^1$\cite{Kwon} or $L^2$\cite{Kwon3} settings. 
In this paper, we prove that the number of possible two dimensional source locations with three measurements  in $L^2$ setting, is at most $5$, which is the complement and correction to 
the previous work \cite{Kwon3}. We also showed the sufficient and necessary condition for the number of the solutions being 1,2,3,4,and 5, where the measurement triangle 
is isosceles and  the measurement distance for the two isosceles triangle bases are the same.

\end{abstract}

\section{Introduction}
Let $Z_1, Z_2, Z_3\in \mathbb R^2$ be the sensor locations and $X\in \mathbb R^2$ be the possible source position. 
Denote $d_i, i=1,2,3$ as the measurement length between the $i$-th sensor and the source having noise $\epsilon_i$:
\begin{equation}\label{eq:main}
                         d_i = ||X-Z_i||+\epsilon_i.
 \end{equation}
 Let $Z=(Z_1,Z_2,Z_3) \in (\mathbb R^2)^3$, $d=(d_1,d_2,d_3)\in\mathbb R^3 , \epsilon =(\epsilon_1,\epsilon_2,\epsilon_3)\in \mathbb R^3$, 
and $X$ be the solutions of the following minimization problem: 
 \begin{equation}\label{eq:main2}
                    X = \mathrm{argmin}_{W\in\mathbb R^2} O(W), \mbox{ where } O(W)= \sum_{j=1}^3 |||W-Z_j||^2-d_j^2|
 \end{equation}

By Proposition 1 \cite{Kwon}, there exists at least one solution $X$.  
If we notate $|X|$ as the number of possible locations set $X$, then $|X|\ge 1$.

Let $S_1,S_2,$ and $S_3$ be the measurement circles, and further define
$$
S=S_1\cup S_2 \cup S_3,\;\; S_{123} = S_{12}\cup S_{23}\cup S_{31}, \;\; S_{23}= S_2\cap S_3, \; S_{31} = S_3\cap S_1,\; S_{12}=S_1\cap S_2. 
$$
By definition, $S_{ij}=S_{ji}$. Assuming $|S_{12}|=2$, let $S_{12+}$ and $S_{12-}$ be the points having shortest and longest distances from $Z_3$ in $S_{12}$, respectively.  
If $|S_{12}|=1$, define $S_{12+}=S_{12-}=S_{12}$. We can also define $S_{23+}, S_{23-}, S_{31+},$ and $S_{31-}$ in the same way.  Denote $S_{123+}=\{S_{12+}, S_{23+}, S_{31+}\}$ and $S_{123-}=\{S_{12-}, S_{23-}, S_{31-}\}$.
 
Let $R_{i_1,i_2,i_3}$ be the intersection of closed disks with $i_l=1$ and the outside of open disks with $i_l=0$ for $l=1,2,$ and $3$. That is to say, $R_{i_1, i_2, i_3}$ is a closed set. 
Let 
$$ R_l = \cup_{i_1+i_2+i_3=l} R_{i_1, i_2,i_3}, $$ then we have
$$ \mathbb R^2 = R_0 \cup R_1 \cup R_2 \cup R_3. $$

The number of possible source locations to \eqref{eq:main2} is already explained in \cite{Kwon}. But Theorem 2(a),  2(b), and 6(c) in \cite{Kwon} are misleading. 
Therefore, we correct the theorems in Theorem \ref{th:main1} and \ref{th:main2}, where red phrases are the corrected ones to the thorems of \cite{Kwon}. 
Thus, Table 1 in \cite{Kwon} should be changed into Table \ref{tab:con}  in the present paper.
\begin{table}
\begin{centering}
\begin{tabular}{|c||c|c|}
\hline
                           &\cite{Kwon3}, the present work                                                      &\cite{Kwon}\\
\hline \hline
The objective function
                          &$O(W)=\sum_{j=1}^3 | d_j^2 - ||W-Z_j||^2|$                                     &$O(W)=\sum_{j=1}^3 |d_j -||W-Z_j|||$\\
\hline
Existence           &Holds                                                                                               &Holds\\
\hline
$|X|$                   &$1,2,3{\color{red},4,\mbox{ and }5}$                                              &$1,2,\cdots, \infty$ \\   
\hline
$|X|=1$ for nonempty $K$     
                          &Holds                                                                                                &Holds when $K$ is connected\\
\hline
Uniqueness when $K$ is empty
                         &Holds when $R_3=\phi$ or one of                                                    &Does not hold\\
                         &$R_{100},R_{010},$ and $R_{001}$ is empty or not connected.     &\\
\hline
Nonuniqueness examples
                        &Shown for all cases with $|X|=2,3{\color{red},4,\mbox{ and }5}$      &Shown for $|X|=2,3$\\   
                        &                                                                                                          &and not shown for $|X|=4,5,\cdots$\\                                     
\hline
Detailed position 
                        &Given {\color{red} only when $K$ is nonempty}                                &Given only when $K$ is nonempty\\
                        &{\color{red} or when  $|Z_3 Z_1|=|Z_3 Z_2|$ and $d_1=d_2$}         &\\ 
\hline                                             
\end{tabular}
\caption{Differences among the results in \cite{Kwon},\cite{Kwon3}, and the present work. $K=R_0 \cap \triangle Z_1 Z_2 Z_3$. }
\label{tab:con}
\end{centering}
\end{table}
Theorem 3 shows that the number of solutions can be $|X|=2,4,$ and $5$, where only the cases with $|X|=1,2,$ and $3$ are shown in \cite{Kwon}.

\begin{theorem}\label{th:main1}
The condition $|X|=2,3,{\color{red}4,\mbox{ and }5}$ holds if and only if $R_3\neq\phi$ and $R_{100},R_{010},$ and $R_{001}$ are connected and nonempty, respectively.
If one of these conditions holds, we have  $X=argmin_{A\in {\color{red} S_{123}^\pm}} \norm{A-Y_0}$.
\end{theorem}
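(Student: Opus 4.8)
The plan is to exploit the fact that the objective is piecewise quadratic with a sign pattern governed exactly by the cells $R_{i_1i_2i_3}$. Writing $g_j(W)=\norm{W-Z_j}^2-d_j^2$, we have $O=\sum_j|g_j|$, and the sign of each $g_j$ is constant on each cell, so there $O=\sum_j s_j g_j$ with $s_j=\pm1$, a quadratic with Hessian $2\bigl(\sum_j s_j\bigr)I$. Hence $O$ is strictly convex on $R_0$ and $R_1$ (where $\sum_j s_j\in\{3,1\}$) and concave on $R_2$ and $R_3$ (where $\sum_j s_j\in\{-1,-3\}$). The first consequence is that no global minimizer can lie in the interior of $R_2\cup R_3$, since a concave function attains its minimum on the boundary of each such cell. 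On $R_0$ the unique critical point is the centroid $Y_0:=\tfrac13(Z_1+Z_2+Z_3)$, and on each $R_1$-cell it is a reflected vertex $Z_k+Z_l-Z_m$; I would show, using that $K=\emptyset$ (which I establish below), that none of these critical points lies in the corresponding open cell, so no interior minimizer occurs there either. Thus every global minimizer lies on $S=S_1\cup S_2\cup S_3$.

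Next I would localize the minimizer to the intersection points. On a single circle $S_j$ the normal direction is a valley of $|g_j|$, while the tangential restriction of the two smooth terms is sinusoidal, so the minima of $O|_{S_j}$ occur either at the kinks — exactly the intersection points $S_{jk},S_{jl}$ — or at isolated smooth stationary points; the technical heart is to rule out the latter as \emph{global} minima, so that $X\subseteq S_{123}^\pm:=S_{123+}\cup S_{123-}$. Once restricted to an intersection point $A\in S_{ij}$, where two terms vanish, the key algebraic identity is
\[ O(A)=\lvert g_m(A)\rvert=\Bigl\lvert\,3\,\norm{A-Y_0}^2+C'\Bigr\rvert,\qquad C':=\sum_{j=1}^3\norm{Z_j-Y_0}^2-\sum_{j=1}^3 d_j^2, \]
which follows from $\sum_j\norm{A-Z_j}^2=3\norm{A-Y_0}^2+\sum_j\norm{Z_j-Y_0}^2$ together with $\norm{A-Z_i}=d_i$ and $\norm{A-Z_j}=d_j$. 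The crucial point is that the constant $C'$ is the \emph{same} for all three pairs, so comparing $O$ among the extreme points reduces to comparing $\norm{A-Y_0}$, which gives $X=\mathrm{argmin}_{A\in S_{123}^\pm}\norm{A-Y_0}$ once the sign of $3\norm{A-Y_0}^2+C'$ is controlled: it is nonpositive at the inner vertices $S_{123+}$ (which lie in $R_3$) and nonnegative at the outer vertices $S_{123-}$ (the corners of $R_{100},R_{010},R_{001}$).

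For the equivalence, the forward direction is by contraposition using the companion uniqueness facts. If $K\neq\emptyset$, the $R_0$-paraboloid centered at $Y_0$ meets $\triangle Z_1Z_2Z_3$ and its vertex-type minimizer is unique, so $|X|\ge2$ forces $K=\emptyset$; and when $K=\emptyset$, the cases $R_3=\emptyset$ or some $R_1$-cell empty or disconnected each make the closest extreme point to $Y_0$ unique, so non-uniqueness excludes them as well, yielding all the stated conditions. For the converse I would show that $R_3\neq\emptyset$ together with $R_{100},R_{010},R_{001}$ nonempty and connected forces $K=\emptyset$ (the three disks then cover $\triangle$) and guarantees that all six extreme points exist and are positioned so that the minimal distance to $Y_0$ is attained by at least two of them, giving $|X|\ge2$. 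Finally, since only six candidate points are available and a simultaneous equality of all six distances to $Y_0$ is over-determined, at most five can coincide, which together with explicit configurations yields $|X|\in\{2,3,4,5\}$ and realizes each value.

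The \emph{main obstacle} I anticipate is the arc step: rigorously excluding the smooth (non-corner) stationary points of $O|_{S_j}$ as global minima, so that $X$ really collapses onto $S_{123}^\pm$ rather than onto an interior arc point. The secondary difficulties are proving that the topological hypotheses ($R_3\neq\emptyset$ and connectedness/nonemptiness of the three $R_1$-cells) are precisely what force a tie in $\norm{A-Y_0}$ among the extreme points — hence non-uniqueness — and the counting argument that bounds the number of simultaneous ties by five rather than six.
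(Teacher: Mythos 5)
Your plan follows the same route as the paper's proof: the sign--cell decomposition into the $R_{i_1i_2i_3}$, strict convexity on $R_0$ and the single--disk cells (with critical points $Y_0$ and $Y_j=3Y_0-2Z_j$), concavity on the double--disk cells and on $R_3$, exclusion of interior minimizers, and a final comparison at the six circle intersections; the paper does exactly this, delegating the exclusion steps and the whole converse to results quoted from \cite{Kwon} and \cite{Kwon3}. Your identity $O(A)=\left|3\norm{A-Y_0}^2+C'\right|$, with $C'$ independent of which pair of circles $A$ lies on, is correct and is the cleanest way to organize that comparison. The fatal step is the next one: with the signs you yourself establish ($3\norm{A-Y_0}^2+C'\le 0$ at $S_{123+}$, $\ge 0$ at $S_{123-}$), one has $O(A)=-3\norm{A-Y_0}^2-C'$ at the inner points and $O(A)=3\norm{A-Y_0}^2+C'$ at the outer ones, so $O$ is \emph{decreasing} in $\norm{A-Y_0}$ on the $+$ branch and increasing on the $-$ branch. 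Minimizing $O$ therefore selects the point whose squared distance to $Y_0$ is closest to $-C'/3$; in particular, among the inner corners it prefers the one \emph{farthest} from $Y_0$. Hence ``comparing $O$ reduces to comparing $\norm{A-Y_0}$'' is a non sequitur, and the displayed formula $X=\mathrm{argmin}_{A\in S_{123}^\pm}\norm{A-Y_0}$ cannot be derived from your identity because it is not literally true: the paper itself invokes the opposite rule (``minima of $R_3$ are located on the farthest points from $Y_0$'', proof of Theorem \ref{le:equid12}), and its own data refute the distance formula --- in Fig.\ \ref{fig:lem11}(a)/Table \ref{tab:lem11} the hypotheses hold, $X=\{S_{23+},S_{31+}\}$, yet $S_{12+}$ is strictly closer to $Y_0$ (it has the larger objective among inner points, hence the smaller distance); and in Theorem \ref{th:main3}/Table \ref{tab:2D5} the solution set $\{S_{23\pm},S_{31\pm}\}$ contains points at different distances from $Y_0$. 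The conclusion your argument can and should reach is $X=\mathrm{argmin}_{A\in S_{123}^\pm}O(A)$ with $O(A)=\left|3\norm{A-Y_0}^2+C'\right|$, which is all that the rest of the paper actually uses.

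There are two further gaps. First, your converse plan --- that the hypotheses ($R_3\neq\emptyset$ and the three $R_1$--cells nonempty and connected) force a tie among the six values, hence $|X|\ge 2$ --- attempts to prove something false: the hypotheses are stable under small perturbations of $(d_1,d_2,d_3)$, whereas ties are degenerate, and the paper's own examples under these hypotheses have $|X|=1$ (e.g.\ Fig.\ \ref{fig:lem15sl1}(a) with Table \ref{tab:th15}, where $O(S_{12+})$ is the strict minimum although all cells are nonempty and connected). The equivalence must be read, and is proved in the paper, as the pair of implications: hypotheses $\Rightarrow$ $X$ equals the argmin over the six intersection points (possibly a singleton); and failure of any hypothesis $\Rightarrow$ $|X|=1$, quoted from Theorems 5--6, Lemmas 12--13 and Corollary 14 of \cite{Kwon}. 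Together these give ``$|X|\in\{2,3,4,5\}$ only under the hypotheses'', and the values $2,\dots,5$ are then realized by the explicit configurations of Theorem \ref{th:main3}. Second, the step you flag as the main obstacle --- excluding non-corner minima on the arcs --- is indeed left open in your plan, and it needs more than concavity: the double--disk cells such as $R_{110}$ are \emph{not convex} (they involve the complement of a disk), so ``a concave function is minimized at an extreme point'' does not apply verbatim; one must argue, as the cited lemmas of \cite{Kwon3} do, that under the stated hypotheses the farthest point of each such cell from the relevant $Y_j$ (and of $R_3$ from $Y_0$) is one of the corners $S_{123}^\pm$.
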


From Theorem \ref{th:main1},  we have $|X|\le 6$. Thus, the following theorem holds if we prove $|X|\neq 6$:
\begin{theorem}\label{th:main2}
There are at most {\color{red} five} solutions.%
\end{theorem}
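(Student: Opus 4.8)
The plan is to establish the bound by ruling out the single remaining possibility $|X|=6$, exactly as the sentence preceding Theorem~\ref{th:main2} indicates. By Theorem~\ref{th:main1}, as soon as $|X|\ge 2$ the solution set satisfies $X=\mathrm{argmin}_{A\in S_{123}^\pm}\norm{A-Y_0}$, and since $S_{123}^\pm=S_{123+}\cup S_{123-}$ collects the (at most two) points of each pairwise intersection $S_{12},S_{23},S_{31}$, we have $|S_{123}^\pm|\le 6$ and hence $|X|\le 6$. So I would suppose toward a contradiction that $|X|=6$. Then every point of $S_{123}^\pm$ must be a minimizer, which forces the six points to be pairwise distinct and to attain one common value $r=\norm{A-Y_0}$.

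Next I would translate the statement "all six points lie at distance $r$ from $Y_0$" into a concyclicity statement. Because the six points are distinct we cannot have $r=0$, so $r>0$ and all six points lie on a single genuine circle $C$ of radius $r$ centered at $Y_0$. This is the geometric heart of the argument.

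The final step exploits how the six points are distributed among the three measurement circles. By definition $S_{12}\subseteq S_1\cap S_2$, $S_{23}\subseteq S_2\cap S_3$, and $S_{31}\subseteq S_3\cap S_1$, so each measurement circle carries four of the six points: $S_1\supseteq\{S_{12+},S_{12-},S_{31+},S_{31-}\}$, $S_2\supseteq\{S_{12+},S_{12-},S_{23+},S_{23-}\}$, and $S_3\supseteq\{S_{23+},S_{23-},S_{31+},S_{31-}\}$. Thus $C$ and $S_1$ share four distinct points; but two distinct circles meet in at most two points, so $C=S_1$, and likewise $C=S_2=S_3$. This yields $S_1=S_2=S_3$, which is impossible, since a circle determines its center uniquely whereas $Z_1,Z_2,Z_3$ are three distinct sensor locations. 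Hence $|X|=6$ cannot occur and $|X|\le 5$.

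I expect the main obstacle to be the bookkeeping of degenerate configurations rather than the core circle argument. I must make the reduction to "six distinct concyclic points" airtight: whenever some pairwise intersection is tangent ($|S_{ij}|=1$), or two intersection points happen to coincide, the cardinality $|S_{123}^\pm|$ already drops below six and $|X|=6$ is excluded outright, so those cases need only a short remark. The substantive case is that of six genuinely distinct points, where the step "four shared points force $C=S_i$" closes the argument cleanly and gives the claimed bound of five.
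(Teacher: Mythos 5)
Your proof is correct, but it takes a genuinely different route from the paper's. Both arguments reduce, via Theorem \ref{th:main1}, to excluding $|X|=6$; from there the paper works with the equality of the six objective values and invokes Lemmas \ref{le:2D-} and \ref{le:2D+} to obtain angle relations at $Y_0$, concludes that the chords $\overline{S_{ij+}S_{ij-}}$ are orthogonal to the sides and that $Y_0$ is equidistant from the three sides of $\triangle Z_1 Z_2 Z_3$ (so centroid $=$ incenter), hence that the triangle is equilateral with $d_1=d_2=d_3$, and finally applies Lemma \ref{le:equid123} to see that such a configuration yields $|X|=1$ or $|X|=3$, a contradiction. You instead exploit the argmin characterization of Theorem \ref{th:main1} directly: six minimizers must be six distinct points equidistant from $Y_0$, i.e., concyclic on a circle $C$ centered at $Y_0$; since each measurement circle $S_i$ carries four of these points and two distinct circles meet in at most two points, $S_1=S_2=S_3=C$, forcing $Z_1=Z_2=Z_3$, which contradicts the sensors forming a triangle. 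Your argument is shorter and more elementary: it needs none of the paper's lemma machinery (in particular no containments $S_{ij-}\in R_0$, $S_{ij+}\in R_3$, which the paper's lemmas presuppose), and it disposes of degenerate or tangential configurations automatically, since any coincidence among the six points already rules out $|X|=6$. What the paper's longer route buys is structural information---it identifies exactly which configuration (equilateral, $d_1=d_2=d_3$) is the borderline case---and this feeds into its subsequent equilateral and isosceles analysis, whereas your proof yields the bound $|X|\le 5$ with no such byproduct. Note that both proofs lean equally on the paper's intended reading of Theorem \ref{th:main1}, namely that non-uniqueness implies $X=\mathrm{argmin}_{A\in S_{123}^{\pm}}\norm{A-Y_0}$, so your reliance on that characterization is no additional burden.
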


\begin{theorem}\label{th:main3}
 Suppose that 
 $$\norm{Z_1 Z_3} = \norm{Z_2 Z_3}, \; d_1 = d_2>\norm{Z_1 Z_3}, \; d_3^2 = d_1^2 - \norm{Z_1 Z_3}^2.$$
 Let us denote 
$$P =\sqrt{\frac{\norm{Z_1 Z_2}^2}{4} + \left( \norm{Z_1 Z_3}^2 - \frac{\norm{Z_1 Z_2}^2}4\right)  \left( \frac { \norm{Z_1 Z_3}^2 + \norm{Z_1 Z_2}^2} { \norm{Z_1 Z_3}^2 - \norm{Z_1 Z_2}^2}  \right)^2},$$
only when $\norm{Z_1 Z_3}>\norm{Z_1 Z_2}$.  Depending on $d_1$ and $\frac{\norm{Z_1 Z_3}}{\norm{Z_1 Z_2}}$, we have the minimum $X$ as follows:
$$
 \begin{array}{cc}
\mbox{If }\norm{Z_1 Z_3}\le\norm{Z_1 Z_2} \mbox{ or } d_1 < P,  &\mbox{then }  X=\{ S_{12+}\} \\
\mbox{If } \norm{Z_1 Z_3}>\norm{Z_1 Z_2} \mbox{ and } d_1 = P, &\mbox{then } X=\{ S_{12+}, S_{23\pm},  S_{31\pm} \}\\
\mbox{If } \norm{Z_1 Z_3}>\norm{Z_1 Z_2} \mbox{ and } d_1>P,  &\mbox{then }  X=\{ S_{23\pm}, S_{31\pm} \}
  \end{array}
 $$
\end{theorem}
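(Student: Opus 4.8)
The plan is to exploit the reflection symmetry of the configuration and to reduce the global minimization to a comparison of $O$ on the finitely many candidate points supplied by Theorem \ref{th:main1}. I would place coordinates so that the axis of symmetry is the $y$-axis: write $Z_1=(-a,0)$, $Z_2=(a,0)$, $Z_3=(0,h)$ with $a,h>0$, so that $\norm{Z_1 Z_2}=2a$ and $\norm{Z_1 Z_3}=\norm{Z_2 Z_3}=b$ with $b^2=a^2+h^2$. Since $d_1=d_2$, reflection across the $y$-axis is a symmetry of the entire problem; hence the minimizer set is invariant under it, the two points of $S_{12}$ lie on the axis (the radical axis of $S_1$ and $S_2$), and $S_{23\pm}$ are interchanged with $S_{31\pm}$. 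This already forces the solutions to occur in the symmetric configurations recorded in the three cases, and it reduces the bookkeeping from six candidates to essentially two representative values.

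The essential use of the hypothesis $d_3^2=d_1^2-\norm{Z_1 Z_3}^2$ is geometric. If $Q\in S_{31}=S_3\cap S_1$ then $\norm{Q-Z_1}^2=d_1^2=d_3^2+b^2=\norm{Q-Z_3}^2+\norm{Z_1 Z_3}^2$, so by the converse of the Pythagorean theorem the segment $Q Z_3$ is perpendicular to $Z_1 Z_3$. Thus $S_{31}$ consists of the two points at distance $d_3$ from $Z_3$ on the line through $Z_3$ orthogonal to $Z_1 Z_3$, and likewise for $S_{23}$; this locates all four points $S_{23\pm},S_{31\pm}$ in closed form. Meanwhile $S_{12\pm}=(0,\pm\sqrt{d_1^2-a^2})$ lie on the axis, and since $d_1>b$ one has $\sqrt{d_1^2-a^2}>h$, so $S_{12+}=(0,\sqrt{d_1^2-a^2})$ is indeed the point of $S_{12}$ nearest $Z_3$.

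Next I would evaluate $O$ at the six candidates. Each $S_{ij}$ lies on $S_i$ and $S_j$, so two of the three terms of $O$ vanish there and $O(S_{ij})=\bigl|\,\norm{S_{ij}-Z_k}^2-d_k^2\bigr|$ for the remaining index $k$. Direct substitution gives $O(S_{12+})=2h\bigl(\sqrt{d_1^2-a^2}-h\bigr)$ and $O(S_{12-})=2h\bigl(\sqrt{d_1^2-a^2}+h\bigr)$, so $S_{12-}$ is never optimal; and, using $d_3^2+b^2=d_1^2$, the four remaining points collapse to the single common value $O(S_{23\pm})=O(S_{31\pm})=4ah\sqrt{d_1^2-b^2}/b$, which is exactly why they enter or leave the solution set together. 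By Theorem \ref{th:main1} the global minimizers lie among $S_{123}^\pm$, so it suffices to compare $O(S_{12+})$ with this common value. Writing $p=\sqrt{d_1^2-a^2}$ and factoring out $\sqrt{p-h}>0$, the sign of $O(S_{12+})-O(S_{23\pm})$ equals that of $\sqrt{p-h}-\tfrac{2a}{b}\sqrt{p+h}$. When $\norm{Z_1 Z_3}\le\norm{Z_1 Z_2}$, i.e. $b\le 2a$, the coefficient $2a/b\ge 1$ forces this quantity to be strictly negative for every admissible $d_1$, so $X=\{S_{12+}\}$; when $b>2a$ the coefficient is $<1$, the expression changes sign once, and solving $\sqrt{p-h}=\tfrac{2a}{b}\sqrt{p+h}$ yields precisely $p=h(5a^2+h^2)/(h^2-3a^2)$, which is equivalent to $d_1=P$. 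This produces the trichotomy $d_1<P$, $d_1=P$, $d_1>P$ giving $\{S_{12+}\}$, the five-point set, and the four-point set, respectively.

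I expect the reduction step, rather than the algebra, to be the main obstacle: one must ensure the global minimum over $\mathbb R^2$ is attained at one of the six points $S_{123}^\pm$ in every regime. In the two non-unique cases this is Theorem \ref{th:main1}, provided one checks its hypotheses (that $R_3\neq\phi$ and that $R_{100},R_{010},R_{001}$ are connected and nonempty). The first case, however, yields a unique solution, where those hypotheses fail; there I would instead invoke the uniqueness results summarized in Table \ref{tab:con}, combined with the observation that the centroid $\tfrac13(Z_1+Z_2+Z_3)=(0,h/3)$ lies strictly inside $S_1$ (since $\sqrt{a^2+h^2/9}<b<d_1$) and hence is not an interior $R_0$-minimizer, to conclude that the unique minimizer is again the axis point $S_{12+}$. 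Together with the dominance of $S_{12-}$ and the equal values at $S_{23\pm},S_{31\pm}$, this closes all three cases.
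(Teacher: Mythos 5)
Your computational core is correct and, in substance, identical to the paper's own proof: your Pythagorean observation that $Q Z_3 \perp Z_1 Z_3$ for $Q\in S_{31}$ is exactly the fact the paper uses in the form $\angle S_{31-}Z_3 Z_2 = \frac{\pi}{2}+\angle Z_1 Z_3 Z_2$; your values $O(S_{12+})=2h(p-h)$ and $O(S_{12-})=2h(p+h)$ with $p=\sqrt{d_1^2-a^2}$, and the common value $4ahd_3/b$ at $S_{23\pm},S_{31\pm}$, agree with the paper's $2s\bigl(\sqrt{d_3^2+s^2}-s\bigr)$, $2s^2+2s\sqrt{d_3^2+s^2}$, and $2rsd_3/\sqrt{s^2+r^2/4}$ under the dictionary $r=2a$, $s=h$, $b^2=s^2+r^2/4$; and your crossing condition $p=h(5a^2+h^2)/(h^2-3a^2)$ is precisely $d_1=P$. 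So the comparison of the six candidate values, and the resulting trichotomy, is the same argument as the paper's.

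The genuine gap is in the reduction step, which you yourself flag as the main obstacle. First, you never verify the hypotheses of Theorem \ref{th:main1} (that $R_3\neq\phi$ and that $R_{100},R_{010},R_{001}$ are nonempty and connected) under the assumptions of the theorem; the paper devotes Lemma \ref{le:2D22} to exactly this verification (that lemma also yields the equality of the four values at $S_{23\pm},S_{31\pm}$, which you instead obtain by direct computation --- that part is fine). Second, and more seriously, your claim that those hypotheses \emph{fail} in the unique case is false: they are conditions on the disk configuration alone, the theorem's assumptions are the same in all three regimes, and by Lemma \ref{le:2D22} the conditions hold throughout; what changes between regimes is only how many of the six candidate values are minimal. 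Consequently your substitute argument for the first case collapses: since $d_1>\norm{Z_1 Z_3}$, every point of $\triangle Z_1Z_2Z_3$ lies strictly inside $D_1$ or $D_2$ (its distance to the nearer of $Z_1,Z_2$ is at most $\norm{Z_1 Z_3}<d_1$), so $K=R_0\cap\triangle Z_1Z_2Z_3=\phi$, and the ``uniqueness when $K$ is empty'' entry of Table \ref{tab:con} requires $R_3=\phi$ or one of $R_{100},R_{010},R_{001}$ to be empty or disconnected --- none of which occurs here; the centroid remark supplies no reduction either. The repair is to apply the reduction uniformly: once the region conditions are checked (Lemma \ref{le:2D22}), the minimizer set is the argmin of $O$ over the six points $S_{123}^\pm$ in \emph{every} regime, whether that argmin contains one, four, or five points. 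That is what the paper does; the literal ``if and only if $|X|=2,3,4,5$'' wording of Theorem \ref{th:main1} is loose (taken literally it would contradict the first case of the present theorem), but its usable content --- that under the region conditions the minimizers lie among $S_{123}^\pm$ --- does not depend on the cardinality of the minimizer set, and your proof should not route around it with different tools in different regimes.
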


The number of solutions when the measurement triangle and three measurements satisfy the assumption of Theorem \ref{th:main3}, is displayed in Fig. \ref{fig:2D5} depending on 
$\frac {\norm{Z_1 Z_2}}{\norm{Z_1 Z_3}}$  and $d_1$ (with respect to $P$) and the object values at the intersections of measurement circles are displayed in Table \ref{tab:2D5}.
 \begin{figure}
\begin{minipage}[t]{8cm}
\centerline{\epsfig{file=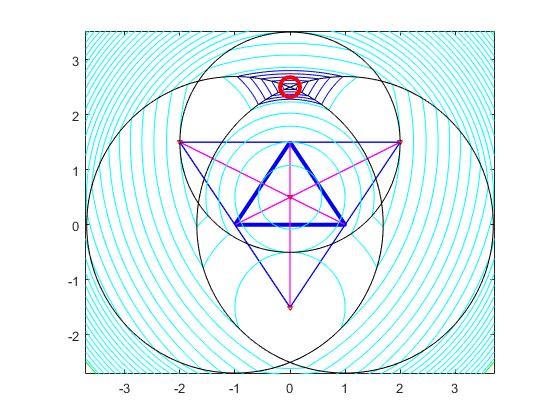, height=6cm,width=8cm,clip=1cm}}
\end{minipage}
\begin{minipage}[t]{8cm}
\centerline{\epsfig{file=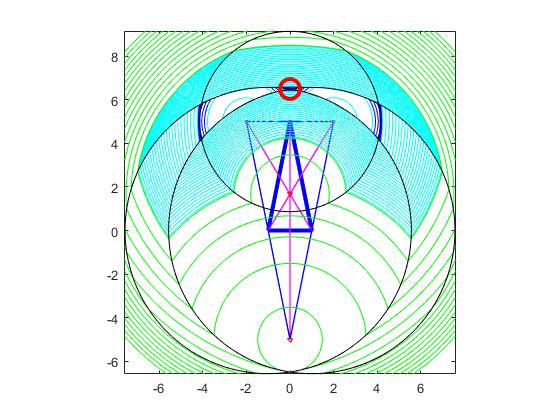, height=6cm,width=8cm,clip=1cm}}
\end{minipage}
\begin{center}
(a)\qquad\qquad\qquad\qquad\qquad\qquad\qquad\qquad\qquad\qquad(b)
\end{center}
\begin{minipage}[t]{8cm}
\centerline{\epsfig{file=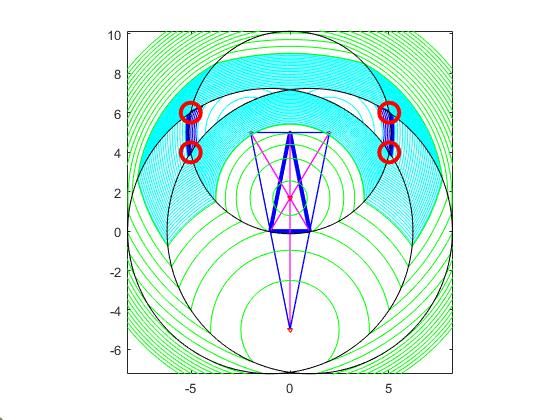, height=6cm,width=8cm,clip=1cm}}
\end{minipage}
\begin{minipage}[t]{8cm}
\centerline{\epsfig{file=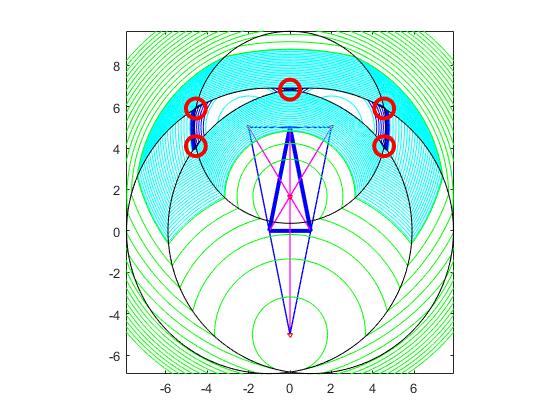, height=6cm,width=8cm,clip=1cm}}
\end{minipage}
\begin{center}
(c)\qquad\qquad\qquad\qquad\qquad\qquad\qquad\qquad\qquad\qquad(d)
\end{center}
\caption{The figure for Theorem \ref{th:main3} when $\norm{Z_1 Z_2}=2$. The solutions are (a) $X=\{S_{12+}\}$ when $\norm{Z_1 Z_3}<\norm{Z_1 Z_2}$,  (b) $X=\{S_{12+}\}$ when
$\norm{Z_1 Z_3}<\norm{Z_1 Z_2}$ and $d_1<P$,  (c) $X=\{ S_{23\pm}, S_{31\pm}\}$ when $\norm{Z_1 Z_3}<\norm{Z_1 Z_2}$ and $d_1>P$, and  (d) $X=\{ S_{12+}, S_{23\pm}, S_{31\pm}\}$ when $\norm{Z_1 Z_3}<\norm{Z_1 Z_2}$ and $d_1=P$ .} 
\label{fig:2D5}
\end{figure}
\begin{table}
\begin{centering}
\begin{tabular}{|c||c|c|c|c|c|c|}
\hline
                                  &$O(S_{12+})$           &$O(S_{23+})$            &$O(S_{31+})$           &$O(S_{12-})$   &$O(S_{23-})$             &$O(S_{31-})$\\
\hline \hline
Fig. \ref{fig:2D5} (a)   &{\color{red}3.0000}   &6.6564                         &6.6564                        &12.0000             &6.6564                        &6.6564 \\
\hline
Fig. \ref{fig:2D5} (b)   &{\color{red}14.8862} &16.2207                       &16.2207                      &114.8862           &16.2207                      &16.2207 \\
\hline
Fig. \ref{fig:2D5} (c)  &21.6750                      &{\color{red}20.1430}   &{\color{red}20.1430}  &121.6750           &{\color{red}20.1430}   &{\color{red}20.1430}\\
\hline
Fig. \ref{fig:2D5} (d)  &{\color{red}18.1818}  &{\color{red}18.1818}   &{\color{red}18.1818}   &118.1818          &{\color{red}18.1818}   &{\color{red}18.1818}\\
\hline
\end{tabular}
\caption{The comparison of objective values at $S_{12\pm}, S_{23\pm}, $ and $S_{31\pm}$ for Fig. \ref{fig:2D5}. }
\label{tab:2D5}
\end{centering}
\end{table}

From the Theorem \ref{th:main3}, we can derive the equivalent condition for $|X|=5$ as follows:
\begin{corollary}\label{co:2D5}
$ |X|=5$ if and only if 
$$\norm{Z_1 Z_3} = \norm{Z_2 Z_3},\;  \norm{Z_1 Z_3}>\norm{Z_1 Z_2}, \; d_1^2 = \frac{\norm{Z_1 Z_2}^2}{4} + \left( \norm{Z_1 Z_3}^2 - \frac{\norm{Z_1 Z_2}^2}4\right)  \left( \frac { \norm{Z_1 Z_3}^2 + \norm{Z_1 Z_2}^2} { \norm{Z_1 Z_3}^2 - \norm{Z_1 Z_2}^2}  \right)^2,$$
$$ d_3 =\sqrt{ d_1^2 - \norm{Z_1 Z_3}^2}, \; d_2 = d_1.$$
by reordering if necessary.
\end{corollary}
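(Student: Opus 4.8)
For the final statement (Corollary~\ref{co:2D5}) I would prove the two implications separately, treating the ``if'' direction as an immediate specialization of Theorem~\ref{th:main3} and reserving the genuine work for the ``only if'' direction.

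For the backward direction, the plan is to observe that the displayed hypotheses are exactly the hypotheses of Theorem~\ref{th:main3} restricted to its middle case. Indeed, $d_1^2=P^2$ is equivalent to $d_1=P$, the requirement $d_3=\sqrt{d_1^2-\norm{Z_1 Z_3}^2}$ is $d_3^2=d_1^2-\norm{Z_1 Z_3}^2$, and $d_2=d_1$ together with $\norm{Z_1 Z_3}=\norm{Z_2 Z_3}$ and $\norm{Z_1 Z_3}>\norm{Z_1 Z_2}$ complete the list. The only hypothesis of Theorem~\ref{th:main3} not stated verbatim is $d_1>\norm{Z_1 Z_3}$, which I would verify by a one-line computation: writing $b=\norm{Z_1 Z_3}$ and $c=\norm{Z_1 Z_2}$,
\[
P^2-b^2=\left(b^2-\tfrac{c^2}{4}\right)\left[\left(\tfrac{b^2+c^2}{b^2-c^2}\right)^2-1\right]>0
\]
whenever $b>c$, so $d_1=P>b$. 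Theorem~\ref{th:main3} then returns $X=\{S_{12+},S_{23\pm},S_{31\pm}\}$, a set of five distinct points, giving $|X|=5$.

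For the forward direction I would start from the structural results. Since $|X|=5\ge 2$, Theorem~\ref{th:main1} gives $X\subseteq S_{123}^{\pm}=\{S_{12\pm},S_{23\pm},S_{31\pm}\}$, and Theorem~\ref{th:main2} bounds this by five minimizers; hence all three circle pairs meet in two points, and exactly five of these six intersection points attain the common minimal value $m=\min O$ while the sixth has objective strictly above $m$. By relabeling the sensors I would place the omitted point on the pair $S_{12}$, so that $O(S_{23+})=O(S_{23-})=O(S_{31+})=O(S_{31-})=m$ and exactly one of $S_{12\pm}$ is excluded. The key simplification is that each intersection point lies on two of the three measurement circles, so its objective collapses to the single surviving term, e.g. $O(S_{23\pm})=|d_1^2-\norm{S_{23\pm}-Z_1}^2|$ and $O(S_{31\pm})=|d_2^2-\norm{S_{31\pm}-Z_2}^2|$. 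The heart of the argument is the within-pair equalities: because $S_{23+}$ and $S_{23-}$ are mirror images across the line $Z_2 Z_3$, the equation $O(S_{23+})=O(S_{23-})$ forces either $Z_1$ onto line $Z_2 Z_3$ (a degenerate triangle, to be excluded) or the opposite-sign relation $\norm{S_{23+}-Z_1}^2+\norm{S_{23-}-Z_1}^2=2d_1^2$, with the analogous statement for $S_{31\pm}$ and $Z_2$. I would then combine these two relations with the cross-pair equalities $O(S_{12+})=O(S_{23\pm})=O(S_{31\pm})$ and the strict inequality at the omitted point, and solve the resulting system in coordinates $Z_1=(0,0),\ Z_2=(c,0),\ Z_3=(p,q)$, expecting the isosceles relation $p=c/2$ (that is, $\norm{Z_1 Z_3}=\norm{Z_2 Z_3}$) and $d_1=d_2$ to emerge as consequences rather than assumptions, with the surviving equalities then collapsing to $d_3^2=d_1^2-\norm{Z_1 Z_3}^2$ and $d_1=P$.

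The main obstacle I anticipate is exactly this algebraic reduction. The sign choices inside the absolute values split the problem into several cases, and one must show that every non-degenerate case is either inconsistent, yields fewer than five minimizers, or reproduces the isosceles family with the closed-form value $P$; in particular one must check that the omitted point is necessarily $S_{12-}$ and that it is strictly suboptimal, so that $|X|$ is exactly $5$ rather than $6$ (which is consistent with Theorem~\ref{th:main2}). Verifying that $P$ as written is the unique admissible positive root of the reduced system is where the computation is heaviest; I would organize it by first exploiting the $Z_1\leftrightarrow Z_2$ reflection symmetry to halve the number of independent equations before substituting the explicit intersection coordinates.
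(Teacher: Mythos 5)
Your ``if'' direction coincides with the paper's: the displayed conditions are exactly the hypotheses of Theorem \ref{th:main3} in its middle case, and your one-line verification that $P>\norm{Z_1 Z_3}$ (so that $d_1=d_2>\norm{Z_1 Z_3}$ holds and $d_3=\sqrt{d_1^2-\norm{Z_1 Z_3}^2}$ is real) is a worthwhile detail the paper leaves implicit. The ``only if'' direction is where you genuinely diverge. The paper's intended derivation is short and synthetic: by Theorem \ref{th:main1}, $|X|=5$ forces $X\subset S_{123}^{\pm}$, so after reordering both points of the pairs $S_{23\pm}$ and $S_{31\pm}$ are minimizers, i.e. $O(S_{23+})=O(S_{23-})=O(S_{31+})=O(S_{31-})$; Lemma \ref{le:2D22} states precisely that this four-point equality (together with the region conditions supplied by Theorem \ref{th:main1}) is equivalent to $\norm{Z_1 Z_3}=\norm{Z_2 Z_3}$, $d_1=d_2$, $d_1^2=d_3^2+\norm{Z_1 Z_3}^2$ --- exactly the hypotheses of Theorem \ref{th:main3} --- and the trichotomy of Theorem \ref{th:main3} then leaves $\norm{Z_1 Z_3}>\norm{Z_1 Z_2}$ and $d_1=P$ as the only case with five minimizers. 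You instead propose to solve the system of absolute-value equalities directly in coordinates. Your skeleton is sound: the reduction to $S_{123}^{\pm}$ is the same, and your ``opposite-sign relation'' $\norm{S_{23+}-Z_1}^2+\norm{S_{23-}-Z_1}^2=2d_1^2$ is exactly the $d_1=d_1^0$ mechanism of Lemmas \ref{le:2DR4} and \ref{le:d10d30}; so in effect you are re-proving Lemma \ref{le:2D22} by brute force. What the paper's route buys is that the heavy case analysis you yourself identify as the main obstacle (sign branches inside the absolute values, showing every non-degenerate branch collapses to the isosceles family with $d_1=P$) never has to be performed: the synthetic proof of Lemma \ref{le:2D22} (angle relations at $Y_0$, the angle bisector theorem, congruent triangles) combined with Theorem \ref{th:main3} already contains it. What your route would buy, if completed, is independence from Lemma \ref{le:2D22}; but as written the decisive computation is only sketched, not executed, so the forward direction is a plan rather than a proof. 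The efficient repair is to keep your reduction to the four-point equality and then cite Lemma \ref{le:2D22} followed by Theorem \ref{th:main3}, which closes the argument in a few lines and also settles your remaining worry (that the omitted point is $S_{12-}$ and strictly suboptimal), since the proof of Theorem \ref{th:main3} shows $O(S_{12+})<O(S_{12-})$ under its hypotheses.
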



In Section 3, we will prove Theorem 1, 2, and 3. The lemmas for the proof of the theorems are stated and proved in Section 2. 
The number and exact locations of the solutions are presented in Section 4 and 5, when the measurment triangle is equilateral and isosceles, respectively. 
Numerical results with Matlab `contour' function are displayed with our suggested solutions. 
Tables \ref{tab:2D5}, \ref{tab:lem11}, and \ref{tab:th15} for the objective values of $S_{12\pm}, S_{23\pm},$ and $S_{31\pm}$,
show that the suggested number of solutions are correct.
 
\section{Lemmas}
Denote $S_{ij0} := \frac{S_{ij-} + S_{ij+}}2$ if $S_i$ and $S_j$ meet for $i,j=1,2,3, i\neq j$. We will notate `iff'  as an abbreviation of `if and only if'.

\begin{lem}\label{le:2DR4}
Assume that $S_1 \cap S_2 \neq \phi$. If 
\begin{equation*}
d_3^0 := \sqrt{ \frac 1 2 \norm{Z_3-S_{12+}}^2 + \frac 1 2 \norm{Z_3-S_{12-}}^2},   
\end{equation*}
then we have
\begin{equation}\label{eq:2DR4}
\left\{
\begin{array}{ccc}
O(S_{12+})<O(S_{12-})   &\mbox{ iff }& d_3  < d_{3,0} ,\\
O(S_{12+})=O(S_{12-})   &\mbox{ iff }& d_3  = d_{3,0} ,\\
O(S_{12+})>O(S_{12-})   &\mbox{ iff }& d_3  > d_{3,0} .
\end{array}\right.
\end{equation}
\end{lem}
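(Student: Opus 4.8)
The plan is to reduce the comparison of objective values at the two intersection points to an elementary comparison of distances to two numbers on the real line. First I would use that $S_{12+}$ and $S_{12-}$ both lie on $S_1\cap S_2$, so that $\norm{S_{12\pm}-Z_1}=d_1$ and $\norm{S_{12\pm}-Z_2}=d_2$; hence the first two summands of $O$ vanish at each of these points, leaving
\[
O(S_{12\pm}) = \left| d_3^2 - \norm{S_{12\pm}-Z_3}^2 \right|.
\]
This is the one structural observation needed, and it collapses the whole objective to a single absolute-value term depending only on $Z_3$ and $d_3$.

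Next I would set $a := \norm{Z_3-S_{12+}}^2$ and $b := \norm{Z_3-S_{12-}}^2$. By the definition of $S_{12+}$ and $S_{12-}$ as the nearest and farthest points of $S_{12}$ to $Z_3$, we have $a\le b$; moreover the two intersection points are mirror images across the line $Z_1Z_2$, so $a=b$ would force $Z_3$ onto that line, which is ruled out by the nondegeneracy of the measurement triangle. Thus $a<b$, so the trichotomy is genuine, and with this notation $(d_3^0)^2=\tfrac12(a+b)$ is exactly the midpoint of $a$ and $b$.

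It then remains to compare $|d_3^2-a|$ with $|d_3^2-b|$. Since both are nonnegative, I would square and subtract to get
\[
(d_3^2-a)^2-(d_3^2-b)^2 = (b-a)\bigl(2d_3^2-(a+b)\bigr),
\]
whose sign, because $b-a>0$, is that of $2d_3^2-(a+b)=2\bigl(d_3^2-(d_3^0)^2\bigr)$. Reading off the three cases then gives $O(S_{12+})<O(S_{12-})$, $O(S_{12+})=O(S_{12-})$, or $O(S_{12+})>O(S_{12-})$ precisely when $d_3^2<(d_3^0)^2$, $d_3^2=(d_3^0)^2$, or $d_3^2>(d_3^0)^2$; and since $d_3,d_3^0\ge 0$ these are equivalent to $d_3<d_3^0$, $d_3=d_3^0$, and $d_3>d_3^0$, which is exactly \eqref{eq:2DR4}.

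I do not expect any genuine obstacle, since the argument is elementary once the two vanishing terms are spotted. The only point needing care is the degenerate configuration: I must guarantee $a<b$, equivalently $S_{12+}\neq S_{12-}$ (equivalently $|S_{12}|=2$ with $Z_3$ off the line $Z_1Z_2$), so that the strict inequalities in the trichotomy are meaningful. I would dispose of this at the outset by invoking the standing assumption that $Z_1,Z_2,Z_3$ are not collinear.
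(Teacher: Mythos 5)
Your proof is correct and follows essentially the same route as the paper's: the paper's one-line proof rests on exactly the two facts you isolate --- that the first two summands of $O$ vanish at $S_{12\pm}$, so $O(S_{12\pm})=\left|d_3^2-\norm{Z_3-S_{12\pm}}^2\right|$, and that $\norm{Z_3-S_{12+}}\le\norm{Z_3-S_{12-}}$ --- leaving the absolute-value comparison implicit, which you simply carry out explicitly. Your attention to the degenerate case $S_{12+}=S_{12-}$ is a detail the paper glosses over entirely (though note your non-collinearity argument still does not cover tangent circles, where $|S_{12}|=1$ and the stated trichotomy degenerates).
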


\begin{proof}
Since $O(S_{12+}) =\left| d_3^2 - \norm{Z_3-S_{12+}}^2 \right|, \; O(S_{12-}) = \left| d_3^2 - \norm{Z_3-S_{12-}}^2 \right|, $  and 
$\norm{Z_3-S_{12+}}\le \norm{Z_3-S_{12-}}$, we have (\ref{eq:2DR4}).
\end{proof}

Let us define $d_{1,0}$ and $d_{2,0}$ likewise as in Lemma \ref{le:2DR4}.  
\begin{lem}\label{le:d10d30}
Suppose that $\norm{Z_1 Z_3}=\norm{Z_2 Z_3}$, $d_1 = d_2$, and $R_{i_1,i_2,i_3}$ is nonempty and connected for all $i_1,i_2, i _3 \in \{0,1\}$.
Then, $O(S_{23+})=O(S_{31+}), \; O(S_{23+})=O(S_{23-})$, and we have  
$$ d_3^0 = \sqrt{d_1^2 + \norm{Z_1 Z_3}^2 - \frac {\norm{Z_1 Z_2}^2}2 } $$
and 
$$ d_1^0 = d_2^0 = \sqrt{ d_1^2 + (\norm{Z_1 Z_3}^2 + d_3^2 - d_1^2) \frac{\norm{Z_1 Z_2}^2}{2\norm{Z_1 Z_3}^2} }. $$
Further, we have 
$$\begin{array}{ccc}
O(S_{23+}) < O(S_{23-})  &{\mbox iff} &  \norm{Z_1 Z_3}^2 + d_3^2 > d_1^2 ,\\
O(S_{23+})=O(S_{23-})  &{\mbox iff} &  \norm{Z_1 Z_3}^2 + d_3^2 = d_1^2 ,\\
O(S_{23+}) >O(S_{23-})  &{\mbox iff} &  \norm{Z_1 Z_3}^2 + d_3^2 < d_1^2 .
\end{array}
$$ 
\end{lem}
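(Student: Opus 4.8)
The plan is to exploit the reflection symmetry of the isosceles configuration and then reduce each quantitative claim to the one-variable comparison already established in Lemma~\ref{le:2DR4}. First I would fix coordinates adapted to the symmetry: put $Z_1=(-a,0)$, $Z_2=(a,0)$, $Z_3=(0,h)$ with $a=\norm{Z_1 Z_2}/2>0$ and $h=\sqrt{\norm{Z_1 Z_3}^2-a^2}>0$, so that $a^2+h^2=\norm{Z_1 Z_3}^2=\norm{Z_2 Z_3}^2$. The hypotheses that every $R_{i_1,i_2,i_3}$ is nonempty and connected force the three sensors into general position, so the triangle is nondegenerate and each pair of circles meets in two distinct points; hence all of $S_{12\pm},S_{23\pm},S_{31\pm}$ are well defined and distinct. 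The reflection $\sigma(x,y)=(-x,y)$ is an isometry fixing $Z_3$ and interchanging $Z_1,Z_2$; since $d_1=d_2$ it maps $S_1\leftrightarrow S_2$ and fixes $S_3$, hence $\sigma(S_{23})=S_{31}$, and because $\sigma$ merely swaps the roles of $Z_1,Z_2$ (with $d_1=d_2$) while fixing $Z_3$, it leaves $O$ invariant. As $\sigma$ turns ``closest to $Z_1$'' into ``closest to $Z_2$'', it sends $S_{23\pm}$ to $S_{31\pm}$, yielding $O(S_{23+})=O(S_{31+})$ and $O(S_{23-})=O(S_{31-})$ at once.

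Next I would compute the two averaged radii. For $d_3^0$, note $S_{12\pm}$ lie on the perpendicular bisector of $Z_1 Z_2$, which passes through $Z_3$; writing them as $N\pm s\,\hat b$ with $N$ the midpoint of $Z_1 Z_2$, $s=\sqrt{d_1^2-a^2}$ and $\hat b$ the bisector direction, and $Z_3=N+h\hat b$, the cross terms cancel in the average, so $(d_3^0)^2=h^2+s^2=\norm{Z_1 Z_3}^2+d_1^2-\tfrac12\norm{Z_1 Z_2}^2$, as claimed. For $d_1^0$ I would use the common-chord geometry of $S_2\cap S_3$: its midpoint is $M=Z_2+\mu\,(Z_3-Z_2)/L$ with $L=\norm{Z_2 Z_3}$ and $\mu=(L^2+d_1^2-d_3^2)/(2L)$, and its half-length $t$ satisfies $t^2=d_1^2-\mu^2$. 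Since $S_{23\pm}=M\pm t\,\hat n$ with $\hat n\perp(Z_3-Z_2)$, the midpoint identity gives $\tfrac12\norm{Z_1-S_{23+}}^2+\tfrac12\norm{Z_1-S_{23-}}^2=\norm{Z_1-M}^2+t^2$. Expanding $\norm{Z_1-M}^2$ and using the law of cosines at $Z_2$ together with $\norm{Z_2 Z_3}=\norm{Z_1 Z_3}$, namely $\langle Z_1-Z_2,\,Z_3-Z_2\rangle=\tfrac12\norm{Z_1 Z_2}^2$, the $\mu^2$ terms cancel and I obtain $(d_1^0)^2=d_1^2+(\norm{Z_1 Z_3}^2+d_3^2-d_1^2)\,\norm{Z_1 Z_2}^2/(2\norm{Z_1 Z_3}^2)$; the equality $d_2^0=d_1^0$ follows from the same $\sigma$-symmetry.

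Finally, for the trichotomy I would observe that $S_{23\pm}\in S_2\cap S_3$ makes the $j=2,3$ summands of $O$ vanish, so $O(S_{23\pm})=|d_1^2-\norm{Z_1-S_{23\pm}}^2|$, and $S_{23+}$ being the point of $S_2\cap S_3$ nearer $Z_1$ gives $\norm{Z_1-S_{23+}}\le\norm{Z_1-S_{23-}}$. This is exactly the situation of Lemma~\ref{le:2DR4} with the third sensor $Z_3$ replaced by $Z_1$ and the circle pair $S_1,S_2$ replaced by $S_2,S_3$, so the identical factorization argument shows $O(S_{23+})\lessgtr O(S_{23-})$ iff $d_1\lessgtr d_1^0$. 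Substituting the formula for $(d_1^0)^2$ and using that the coefficient $\norm{Z_1 Z_2}^2/(2\norm{Z_1 Z_3}^2)$ is strictly positive, $d_1^2\lessgtr(d_1^0)^2$ is equivalent to $\norm{Z_1 Z_3}^2+d_3^2\gtrless d_1^2$, which is the stated trichotomy. I expect the $d_1^0$ computation to be the main obstacle: one must correctly locate the common chord of $S_2\cap S_3$ and simplify $\norm{Z_1-M}^2+t^2$, where the clean cancellation hinges on the isosceles inner-product identity, while the nondegeneracy guaranteed by the region hypotheses is what makes $\norm{Z_1-S_{23+}}<\norm{Z_1-S_{23-}}$ strict so that the three cases are genuine equivalences.
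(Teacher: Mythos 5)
Your proposal is correct, and its overall skeleton matches the paper's: establish the two averaged radii $d_3^0$ and $d_1^0$, then reduce the trichotomy to comparing $d_1$ with $d_1^0$ via the closer-of-two-values argument of Lemma~\ref{le:2DR4}. The difference lies in how the key formula for $(d_1^0)^2$ is obtained. The paper works trigonometrically at the vertex $Z_3$: it writes $\angle Z_1 Z_3 S_{23\pm}=\angle Z_1Z_3Z_2 \mp \angle Z_2Z_3S_{23+}$, applies the law of cosines to each of $\norm{Z_1-S_{23\pm}}$, uses the identity $\cos(X+Y)+\cos(X-Y)=2\cos X\cos Y$, and then substitutes explicit cosine formulas for the two angles. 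You instead parametrize the common chord of $S_2\cap S_3$ by its midpoint $M$ and half-length $t$, use the parallelogram-type identity $\tfrac12\norm{Z_1-S_{23+}}^2+\tfrac12\norm{Z_1-S_{23-}}^2=\norm{Z_1-M}^2+t^2$, and let the isosceles inner-product identity $\langle Z_1-Z_2,\,Z_3-Z_2\rangle=\tfrac12\norm{Z_1Z_2}^2$ do the cancellation. The two computations are of comparable length, but yours avoids angle bookkeeping entirely and makes the cancellation of the $\mu^2$ terms transparent; it also sidesteps a typo-prone step in the paper (whose displayed computation even miswrites $S_{31+}$ for $S_{23+}$). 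A further small gain: you prove the symmetry equalities $O(S_{23\pm})=O(S_{31\pm})$ explicitly via the reflection $\sigma$ interchanging $Z_1$ and $Z_2$, and you flag why the nondegeneracy hypotheses make $\norm{Z_1-S_{23+}}<\norm{Z_1-S_{23-}}$ strict so that the three cases are genuine equivalences — both points the paper leaves implicit.
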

\begin{proof}
By the assumption, we have $S_{120} = \frac {Z_1 + Z_2}2$ and $\norm{Z_3 - S_{12\pm}} = \norm{Z_3 - S_{120}} \mp \norm{S_{120}-S_{12+}}$.
Thus, we have 
$$ d_3^0 = \sqrt{\norm{Z_3 - S_{120}}^2 + \norm{S_{12+} - S_{120}}^2}  = \sqrt{d_1^2 + \norm{Z_1 Z_3}^2 - \frac {\norm{Z_1 Z_2}^2}2 }. $$

Let us compute $d_1^0$. Since the line $\overleftrightarrow{Z_2 Z_3}$ orthogonally bisects $\overline{S_{12+} S_{12-}}$ at $S_{230}$, we have $\angle S_{23+} Z_3 Z_2 =  \angle S_{23-} Z_3 Z_2$.
\begin{eqnarray*}
  (d_1^0)^2 &=& \frac 1 2 \norm{ Z_1 -S_{31+}}^2 + \frac 1 2 \norm{ Z_1 - S_{23-}}^2 \\
           &=&\frac 1 2 d_3^2 + \frac 1 2 \norm{Z_1 Z_3}^2 - d_3 \norm{Z_1 Z_3} \cos(\angle Z_1 Z_3 Z_2  + \angle Z_2 Z_3 S_{23-}) \\ 
             &\;\;&     \frac 1 2 d_3^2 + \frac 1 2 \norm{Z_1 Z_3}^2 - d_3 \norm{Z_1 Z_3} \cos(\angle Z_1 Z_3 Z_2  - \angle Z_2 Z_3 S_{23+}) \\
           &=& d_3^2  + \norm{Z_3 Z_2}^2 - 2 d_3\norm{Z_1 Z_3} \cos(\angle Z_1 Z_3 Z_2) \cos(\angle  Z_2 Z_3 S_{23+}) \\
           &=& d_3^2  + \norm{Z_3 Z_2}^2  - 2 d_3\norm{Z_1 Z_3}\left( 1-  \frac { \norm{Z_1 Z_2}^2} {2 \norm{Z_3 Z_1}^2} \right) \frac {d_3^2 + \norm{Z_3 Z_1}^2 - d_1^2}{2d_3 \norm{Z_3 Z_1}} \\
           &=& d_1^2 +  \frac { \norm{Z_1 Z_2}^2} {2 \norm{Z_3 Z_1}^2} (d_3^2 + \norm{Z_3 Z_1}^2 - d_1^2).
 \end{eqnarray*}
Hence, $d_1<d_1^0$ iff $d_1^2 <d_3^2 + \norm{Z_3 Z_1}^2$ and we proved the lemma.
\end{proof}

Further, let us define $Y_0 := \frac 1 3 \sum_i Z_i$ and  $Y_j = 3Y_0 - 2Z_j$ for $j=1,2,3$.

\begin{lem}\label{le:2D-}
Suppose that $S_{23-}, S_{31-} \in R_0$. Then, $O(S_{23-})=O(S_{31-})$ iff  
$$ \norm{S_{23-} - Y_0} = \norm{S_{31-} - Y_0}, \mbox{ and } \angle Z_3 Y_0 S_{23-}  = \angle Z_3 Y_0 S_{31-}. $$ 
Suppose that $S_{12-}, S_{23-}, S_{31-}  \in R_0$. Then, $O(S_{23-})=O(S_{31-})=O(S_{12-})$ iff  
$$ \norm{S_{23-} - Y_0} = \norm{S_{31-} - Y_0} = \norm{S_{12-} - Y_0}, $$
and
\begin{subeqnarray}
\slabel{eq:-Angle1} \angle Z_3 Y_0 S_{23-}  &=& \angle Z_3 Y_0 S_{31-} = \pi - \angle Z_1 Y_0 Z_2, \\
\slabel{eq:-Angle2} \angle Z_1 Y_0 S_{31-}  &=& \angle Z_1 Y_0 S_{12-} = \pi - \angle Z_2 Y_0 Z_3, \\
\slabel{eq:-Angle3} \angle Z_2 Y_0 S_{12-}  &=& \angle Z_2 Y_0 S_{23-} = \pi - \angle Z_3 Y_0 Z_1. 
\end{subeqnarray}
\end{lem}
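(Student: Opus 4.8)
The plan is to collapse the objective on $R_0$ to a single squared distance to the centroid $Y_0$ and then read off the angles from elementary congruences. First I would observe that each of $S_{23-}, S_{31-}, S_{12-}$ lies on two measurement circles and, by hypothesis, in $R_0$, so $\norm{W - Z_j} \ge d_j$ for every $j$ at such a point $W$. Then every summand has $d_j^2 - \norm{W - Z_j}^2 \le 0$, whence
$$O(W) = \sum_{j=1}^3 \left( \norm{W - Z_j}^2 - d_j^2 \right) = 3\norm{W - Y_0}^2 + C, \qquad C := \sum_{j=1}^3 \norm{Z_j - Y_0}^2 - \sum_{j=1}^3 d_j^2,$$
where the second equality is the parallel-axis identity $\sum_j \norm{W - Z_j}^2 = 3\norm{W - Y_0}^2 + \sum_j \norm{Z_j - Y_0}^2$ (valid since $\sum_j (Z_j - Y_0) = 0$) and $C$ is independent of $W$. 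Hence for any two of these points the objective values agree iff their distances to $Y_0$ agree, which already yields the displayed distance equalities in both parts of the lemma.

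Next I would upgrade the distance equality to the angle equality in the two-point case. The key is that $S_{23-}$ and $S_{31-}$ both lie on $S_3$, so $\norm{Z_3 - S_{23-}} = d_3 = \norm{Z_3 - S_{31-}}$. Thus the triangles $Y_0 Z_3 S_{23-}$ and $Y_0 Z_3 S_{31-}$ share the edge $Y_0 Z_3$, have equal edges $d_3$ at $Z_3$, and, once the distances to $Y_0$ are equal, equal edges at $Y_0$; by SSS they are congruent, so the angles at $Y_0$ coincide, giving $\angle Z_3 Y_0 S_{23-} = \angle Z_3 Y_0 S_{31-}$. Conversely the combined condition trivially contains the distance equality, so the objective formula returns $O(S_{23-}) = O(S_{31-})$. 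This settles the two-point iff.

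For the three-point case all three points are equidistant from $Y_0$, hence lie on one circle about $Y_0$. Each pair shares one measurement circle, namely $\{S_{23-}, S_{31-}\}$ on $S_3$, $\{S_{31-}, S_{12-}\}$ on $S_1$, and $\{S_{12-}, S_{23-}\}$ on $S_2$, so three applications of the SSS argument give
$$\alpha := \angle Z_3 Y_0 S_{23-} = \angle Z_3 Y_0 S_{31-}, \quad \beta := \angle Z_1 Y_0 S_{31-} = \angle Z_1 Y_0 S_{12-}, \quad \gamma := \angle Z_2 Y_0 S_{12-} = \angle Z_2 Y_0 S_{23-},$$
i.e. each pair is symmetric about the corresponding ray $Y_0 Z_k$. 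I would then establish the cyclic order $Z_1, S_{12-}, Z_2, S_{23-}, Z_3, S_{31-}$ of the six rays from $Y_0$; granting it, the six consecutive gaps are $\beta, \gamma, \gamma, \alpha, \alpha, \beta$, summing to $2\pi$, so $\alpha + \beta + \gamma = \pi$, while $\angle Z_1 Y_0 Z_2 = \beta + \gamma$, $\angle Z_2 Y_0 Z_3 = \gamma + \alpha$, and $\angle Z_3 Y_0 Z_1 = \alpha + \beta$. Substituting $\alpha + \beta + \gamma = \pi$ turns these into $\alpha = \pi - \angle Z_1 Y_0 Z_2$, $\beta = \pi - \angle Z_2 Y_0 Z_3$, $\gamma = \pi - \angle Z_3 Y_0 Z_1$, which are exactly the three displayed angle relations; the converse direction again uses only the distance part of the combined condition together with the objective formula.

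The hard part will be the cyclic-order claim: showing that the far intersection points $S_{12-}, S_{23-}, S_{31-}$ interleave with $Z_1, Z_2, Z_3$ around $Y_0$ in the stated order, so that each $\angle Z_i Y_0 Z_j$ splits as the sum of the two adjacent sub-angles and not its reflex alternative. I would base this on two facts: the centroid $Y_0$ lies strictly inside the non-degenerate triangle $Z_1 Z_2 Z_3$, and each $S_{ij-}$ lies on the far side of the line $Z_i Z_j$ from $Z_k$ (being the intersection of $S_i, S_j$ farthest from $Z_k$) and, by $S_{ij-} \in R_0$, in the outer region. Together these place $S_{ij-}$ in the angular sector opposite $Z_k$ as seen from $Y_0$, which forces the interleaving and rules out the degenerate orderings that would otherwise invalidate the angle-count.
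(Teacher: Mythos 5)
Your route is the paper's own, step for step: on $R_0$ the objective collapses via the parallel-axis identity to $3\norm{W-Y_0}^2+C$, equal objectives become equal distances to $Y_0$, the pairwise angle equalities at $Y_0$ come from an SSS congruence using the shared measurement circle, and the $\pi-\angle$ relations come from counting angles around $Y_0$. Your treatment of the two-point case and of both converses is complete and correct, and you are more explicit than the paper about the sign of each summand on $R_0$.

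The gap is in your last paragraph, i.e.\ exactly the step you flag as ``the hard part.'' The implication you propose --- that $Y_0$ interior to the triangle, $S_{ij-}$ on the far side of the \emph{line} $Z_iZ_j$ from $Z_k$, and $S_{ij-}\in R_0$ together place $S_{ij-}$ in the sector at $Y_0$ bounded by the rays $Y_0Z_i$ and $Y_0Z_j$ --- is false. The ray $Y_0S_{12-}$ lies in that sector iff the segment $Y_0S_{12-}$ crosses the \emph{segment} $Z_1Z_2$, not merely the line; but the common chord of $S_1$ and $S_2$ meets the line $Z_1Z_2$ at distance $\bigl(\norm{Z_1Z_2}^2+d_1^2-d_2^2\bigr)/\bigl(2\norm{Z_1Z_2}\bigr)$ from $Z_1$, which falls outside the segment whenever $\left|d_1^2-d_2^2\right|>\norm{Z_1Z_2}^2$. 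Concretely, take $Z_1=(0,0)$, $Z_2=(1,0)$, $Z_3=(0.5,3)$, $d_1=10$, $d_2=9.2$, $d_3=7$: all three circles intersect pairwise, $S_{12-}\approx(8.18,-5.75)$ lies in $R_0$ (its distance to $Z_3$ is about $11.6>d_3$) and on the far side of the line $Z_1Z_2$, yet the ray from $Y_0=(0.5,1)$ to $S_{12-}$ crosses the line $Z_1Z_2$ at $x\approx 1.64$, beyond $Z_2$; so $\angle Z_1Y_0S_{12-}=\angle Z_1Y_0Z_2+\angle Z_2Y_0S_{12-}$ and your gap-count $\beta,\gamma,\gamma,\alpha,\alpha,\beta$ around $Y_0$ breaks. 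So the interleaving cannot follow from those two facts alone; any proof of it must actually use the equidistance of the three points from $Y_0$ (or some other consequence of the full hypothesis). To be fair, the paper is no more rigorous here: it asserts $2(\angle Z_3Y_0S_{23-}+\angle Z_1Y_0S_{31-}+\angle Z_2Y_0S_{12-})=2\pi$ and $\angle Z_1Y_0S_{12-}+\angle Z_2Y_0S_{12-}=\angle Z_1Y_0Z_2$ by appeal to its figure, which is precisely the claim you isolated --- but as written, your proposed justification would not close it.
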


\begin{proof}
Suppose that $S_{23-}, S_{31-} \in R_0$. From the condition that $O(S_{23-})=O(S_{31-})$, we have $\norm{Z_3 - S_{31-}} = \norm{Z_3 - S_{23-}}$.
If $W\in R_0$, we have 
\begin{subeqnarray*}
O(W) &=& \norm{W-Z_1}^2 - d_1^2  +  \norm{W-Z_2}^2 - d_2^2  +  \norm{W-Z_3}^2 - d_3^2 \\
         &=& 3\norm{W-Y_0}^2  - 3\norm{Y_0}^2  -d_1^2 - d_2^2 - d_3^2 + \norm{Z_1}^2 + \norm{Z_2}^2 + \norm{Z_3}^2.
\end{subeqnarray*}         
Since $S_{23-}, S_{31-} \in R_0$,  we have $\norm{S_{23-} - Y_0} = \norm{S_{31-} - Y_0}$. 
From $\norm{Z_3 - S_{31-}} = \norm{Z_3 - S_{23-}}$ and $\norm{S_{23-} - Y_0} = \norm{S_{31-} - Y_0}$, we have  
$$  \angle Z_3 Y_0 S_{23-}  = \angle Z_3 Y_0 S_{31-} ,$$
as in Fig. \ref{fig:2D+-} (a). The converse can be proved easily.

Suppose that $S_{12-}, S_{23-}, S_{31-}  \in R_0$.  
From the condition that $O(S_{23-})=O(S_{31-})=O(S_{12-})$, we have $\norm{Z_3 - S_{23-}} = \norm{Z_3 - S_{31-}},  \norm{Z_1 - S_{31-}} = \norm{Z_1 - S_{12-}},$ and $\norm{Z_2 - S_{12-}} = \norm{Z_2 - S_{23-}}.$
Since $S_{12-}, S_{23-}, S_{31-}  \in R_0$, with the similar argument as above, we have
$$ \norm{S_{23-} - Y_0} = \norm{S_{31-} - Y_0} = \norm{S_{12-} - Y_0}. $$
And further with similar argument, we have   
$$  \angle Z_3 Y_0 S_{23-}  = \angle Z_3 Y_0 S_{31-} , \quad \angle Z_1 Y_0 S_{31-}  = \angle Z_1 Y_0 S_{12-} , \quad  \angle Z_2 Y_0 S_{12-}  = \angle Z_2 Y_0 S_{23-} . $$
Since $ 2( \angle Z_3 Y_0 S_{23-} + \angle Z_1 Y_0 S_{31-} + \angle Z_2 Y_0 S_{12-}) = 2\pi$, we have 
$$ \angle Z_3 Y_0 S_{23-} = \pi - (\angle Z_1 Y_0 S_{31-} + \angle Z_2 Y_0 S_{12-}) = \pi - (\angle Z_1 Y_0 S_{12-} + \angle Z_2 Y_0 S_{12-}) = \pi - \angle Z_1 Y_0 Z_2.$$
The two equations \eqref{eq:-Angle2} and \eqref{eq:-Angle3} can be derived similarly. The converse can be proved without difficulty.
\end{proof}

 \begin{figure}
\begin{minipage}[t]{8cm}
\centerline{\epsfig{file=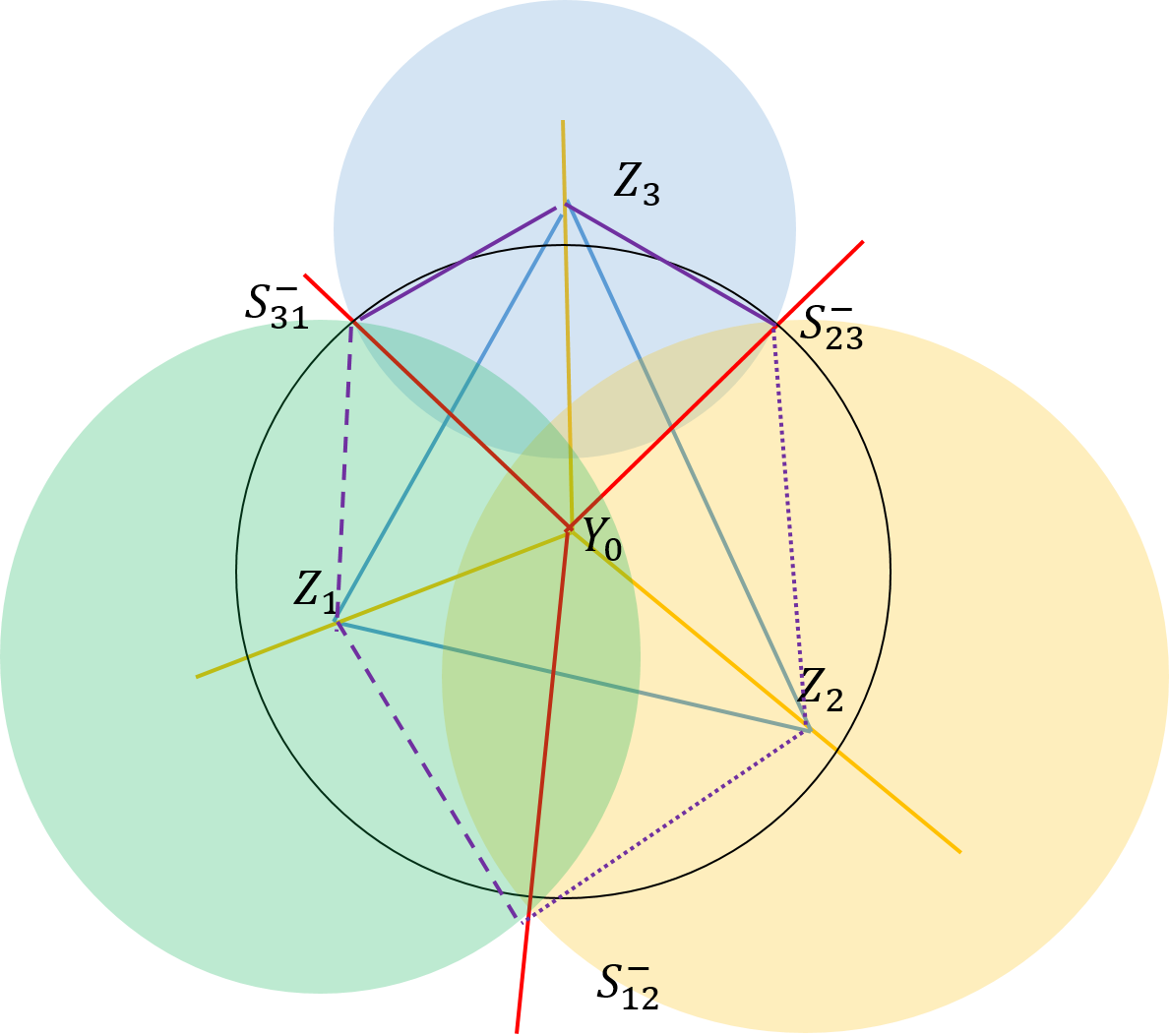, height=6cm,width=7cm,clip=1cm}}
\end{minipage}
\begin{minipage}[t]{8cm}
\centerline{\epsfig{file=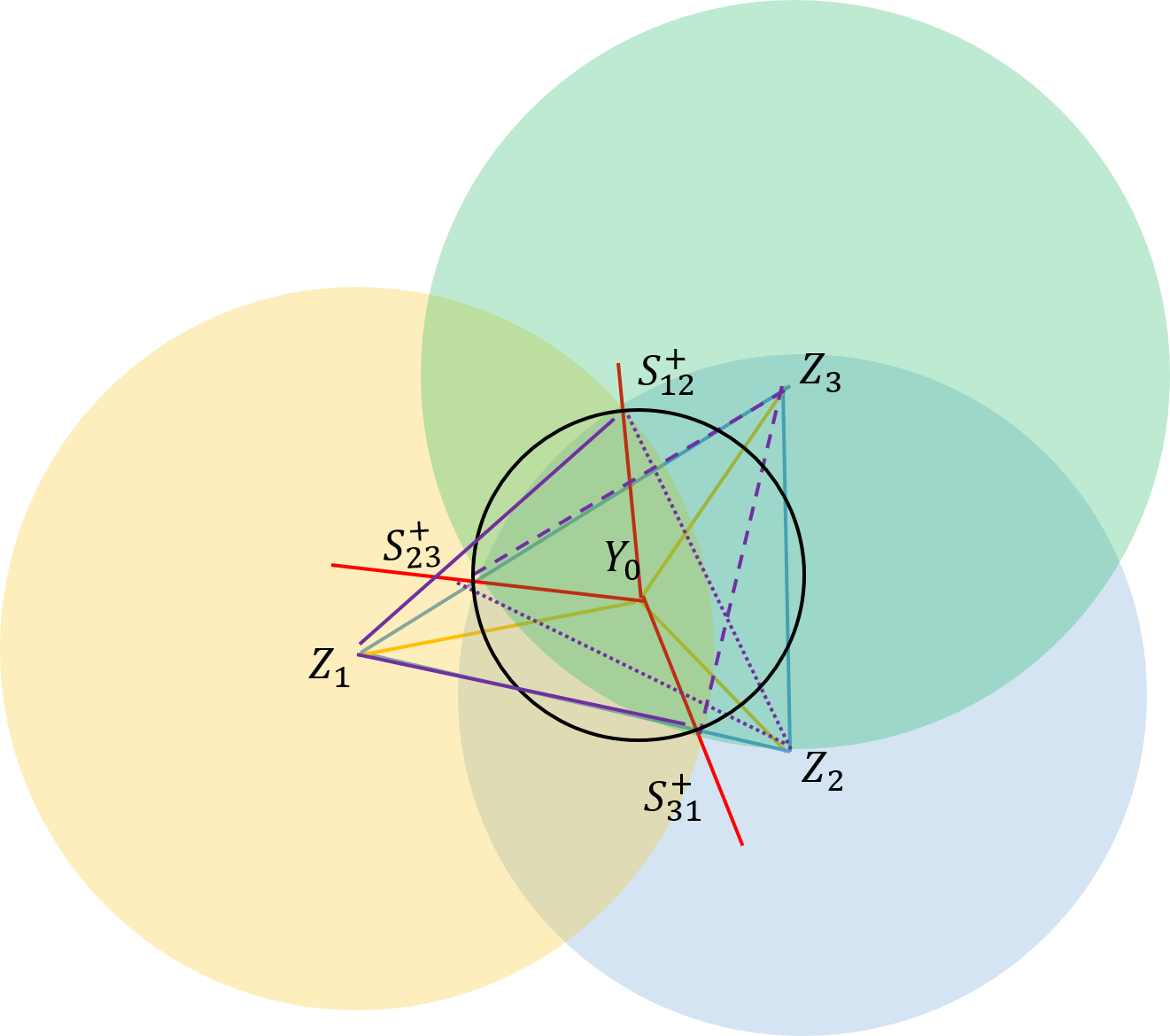, height=6cm,width=7cm,clip=1cm}}
\end{minipage}
\begin{center}
(a)\qquad\qquad\qquad\qquad\qquad\qquad\qquad\qquad\qquad\qquad(b)
\end{center}
\caption{The figure for (a) Lemma \ref{le:2D-} and (b) Lemma \ref{le:2D+}.}
\label{fig:2D+-}
\end{figure}

\begin{lem}\label{le:2D+}
If $S_{23+}, S_{31+} \in R_3$ and  $O(S_{23+})=O(S_{31+})$, then 
$$ \norm{S_{23+} - Y_0} = \norm{S_{31+} - Y_0}, $$
and
$$  \angle Z_3 Y_0 S_{23+}  = \angle Z_3 Y_0 S_{31+}. $$ 
If $S_{12+}, S_{23+}, S_{31+} \in R_3$ and $O(S_{23+})=O(S_{31+})=O(S_{12+})$, then 
$$ \norm{S_{23+} - Y_0} = \norm{S_{31+} - Y_0} = \norm{S_{12+} - Y_0},$$
and
\begin{subeqnarray}\label{eq:2D+}
\slabel{eq:2D+1} \angle Z_3 Y_0 S_{23+}  &=& \angle Z_3 Y_0 S_{31+} =  \angle Z_1 Y_0 Z_2, \\
\slabel{eq:2D+2}  \angle Z_1 Y_0 S_{31+}  &=& \angle Z_1 Y_0 S_{12+} =  \angle Z_2 Y_0 Z_3, \\
\slabel{eq:2D+3}  \angle Z_2 Y_0 S_{12+}  &=& \angle Z_2 Y_0 S_{23+} =  \angle Z_2 Y_0 Z_3. 
\end{subeqnarray}
\end{lem}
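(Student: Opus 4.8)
The plan is to run the same argument as in Lemma~\ref{le:2D-}, with the region $R_0$ replaced by $R_3$; the only genuine change is a single sign, which at the very end turns the supplement $\pi-\angle Z_1 Y_0 Z_2$ of the ``$-$'' case into $\angle Z_1 Y_0 Z_2$. First I would record the shape of the objective on $R_3$. Since every $W\in R_3$ lies in all three closed disks, $\norm{W-Z_j}\le d_j$, so $\left|d_j^2-\norm{W-Z_j}^2\right|=d_j^2-\norm{W-Z_j}^2$ and therefore $O(W)=\sum_j d_j^2-\sum_j\norm{W-Z_j}^2=-3\norm{W-Y_0}^2+c$, with $c=\sum_j d_j^2+3\norm{Y_0}^2-\sum_j\norm{Z_j}^2$; this is exactly the sign-flip of the identity used for $R_0$ in Lemma~\ref{le:2D-}. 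Hence $O$ restricted to $R_3$ is a strictly decreasing function of $\norm{W-Y_0}$, so equality of objective values among points of $R_3$ is equivalent to equality of their distances to $Y_0$. This immediately gives $\norm{S_{23+}-Y_0}=\norm{S_{31+}-Y_0}$ in the two-point case, and $\norm{S_{23+}-Y_0}=\norm{S_{31+}-Y_0}=\norm{S_{12+}-Y_0}$ in the three-point case.

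Next I would extract the angle equalities from congruent triangles. The points $S_{23+}$ and $S_{31+}$ both lie on the circle $S_3$, so $\norm{S_{23+}-Z_3}=\norm{S_{31+}-Z_3}=d_3$; together with the common distance to $Y_0$ just obtained and the shared side $\overline{Y_0 Z_3}$, the triangles $Y_0 Z_3 S_{23+}$ and $Y_0 Z_3 S_{31+}$ are congruent by SSS, so $\angle Z_3 Y_0 S_{23+}=\angle Z_3 Y_0 S_{31+}$. This proves the first assertion. For the second assertion the same SSS argument, applied to the pair on $S_1$ and to the pair on $S_2$, yields the three pairwise identities $\angle Z_3 Y_0 S_{23+}=\angle Z_3 Y_0 S_{31+}$, $\angle Z_1 Y_0 S_{31+}=\angle Z_1 Y_0 S_{12+}$, and $\angle Z_2 Y_0 S_{12+}=\angle Z_2 Y_0 S_{23+}$.

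It remains to identify these common values with the angles $\angle Z_i Y_0 Z_j$, and this is where the proof departs from Lemma~\ref{le:2D-}. As there, the three rays $Y_0 Z_k$ and the three rays to $S_{12+},S_{23+},S_{31+}$ wind once around the centroid $Y_0$ (which is interior to $\triangle Z_1 Z_2 Z_3$); reading off the configuration gives the winding identity $2\left(\angle Z_3 Y_0 S_{23+}+\angle Z_1 Y_0 S_{31+}+\angle Z_2 Y_0 S_{12+}\right)=4\pi$, i.e.\ the sum equals $2\pi$, which is twice the value $\pi$ of the ``$-$'' case because the ``$+$'' intersection points sit on the near side, toward the opposite vertices, so each anchor direction $Y_0 Z_k$ bisects the \emph{major} arc between its two solution rays rather than the minor one. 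Feeding in the pairwise identity $\angle Z_1 Y_0 S_{31+}=\angle Z_1 Y_0 S_{12+}$, together with the fact that $S_{12+}$ (being the point of $S_{12}$ nearest $Z_3$) falls \emph{outside} the angle $\angle Z_1 Y_0 Z_2$, so that $\angle Z_1 Y_0 S_{12+}+\angle Z_2 Y_0 S_{12+}=2\pi-\angle Z_1 Y_0 Z_2$, I obtain $\angle Z_3 Y_0 S_{23+}=2\pi-(2\pi-\angle Z_1 Y_0 Z_2)=\angle Z_1 Y_0 Z_2$; the cyclic relabelling then gives \eqref{eq:2D+2} and \eqref{eq:2D+3}.

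The main obstacle is precisely the justification of this ordering. The reflection relations alone (each solution pair being a mirror image across the corresponding line $Y_0 Z_k$) pin down $\angle Z_3 Y_0 S_{23+}$ only modulo $\pi$, so they do not by themselves distinguish $\angle Z_1 Y_0 Z_2$ from its supplement $\pi-\angle Z_1 Y_0 Z_2$. What resolves the branch is the cyclic order of the six rays around $Y_0$, equivalently the winding sum being $2\pi$ rather than $\pi$ and the far-side additivity $\angle Z_1 Y_0 S_{12+}+\angle Z_2 Y_0 S_{12+}=2\pi-\angle Z_1 Y_0 Z_2$. Thus the crux is to prove, under the standing hypothesis $O(S_{12+})=O(S_{23+})=O(S_{31+})$ (equivalently, the three ``$+$'' points concyclic about $Y_0$), that each $S_{jk+}$ genuinely lies on the far side of the relevant angle; once the ordering is established the rest is bookkeeping identical in form to Lemma~\ref{le:2D-}.
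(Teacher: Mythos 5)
Your proposal is correct and takes essentially the same route as the paper's proof: the sign-flipped quadratic identity $O(W)=-3\norm{W-Y_0}^2+c$ on $R_3$, the SSS-type congruence giving the pairwise angle equalities, and the far-side additivity relations (the paper's $\angle Z_1 Y_0 S_{12+}+\angle Z_2 Y_0 S_{12+}=\angle Z_1 Y_0 Z_3+\angle Z_2 Y_0 Z_3$, cited to Fig.~\ref{fig:2D+-}(b)), which summed with the pairwise equalities yield the $2\pi$ winding identity and then $\angle Z_3 Y_0 S_{23+}=\angle Z_1 Y_0 Z_2$. The ordering fact you single out as the unresolved crux (that each $S_{jk+}$ lies on the far side of the corresponding angle at $Y_0$) is precisely the step the paper also handles only by appeal to the figure, so your argument matches the paper's in both structure and level of rigor.
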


\begin{proof}
We can prove the Lemma \ref{le:2D+} with the same arguement as in Lemma \ref{le:2D-}.
The only difference is the last equality in (\ref{eq:2D+}).  We have the following equations when $O(S_{23+})=O(S_{31+})=O(S_{12+})$:
$$
 \angle Z_3 Y_0 S_{23+}  = \angle Z_3 Y_0 S_{31+} , \; \angle Z_1 Y_0 S_{31+}  = \angle Z_1 Y_0 S_{12+} ,  \;  \angle Z_2 Y_0 S_{12+}  = \angle Z_2 Y_0 S_{23+} .
$$
And we can derive the followings (see Fig. \ref{fig:2D+-} (b)) : 
\begin{eqnarray*}
 \angle Z_1 Y_0 S_{12+}  + \angle Z_2 Y_0 S_{12+}  &=& \angle Z_1 Y_0 Z_3 + \angle Z_2 Y_0 Z_3, \\
 \angle Z_2 Y_0 S_{23+} +  \angle Z_3 Y_0 S_{23+} &=& \angle Z_3 Y_0 Z_1  + \angle Z_2 Y_0 Z_1, \\
  \angle Z_3 Y_0 S_{31+}  +  \angle Z_1 Y_0 S_{31+} &=& \angle Z_3 Y_0 Z_2 + \angle Z_1 Y_0 Z_2. 
\end{eqnarray*}
Hence we have $ \angle Z_1 Y_0 S_{12+} +  \angle Z_2 Y_0 S_{23+} +  \angle Z_3 Y_0 S_{31+} =  \angle Z_1 Y_0 Z_2 + \angle Z_2 Y_0 Z_3 + \angle Z_3 Y_0 Z_1  = 2\pi$.   
From these equalities, we can derive the following:
$$ \angle Z_3 Y_0 S_{23+} = 2\pi - \left( \angle Z_1 Y_0 S_{31+}  + \angle Z_2 Y_0 S_{12+} \right) = \angle Z_1 Y_0 S_{12+} + \angle Z_2 Y_0 S_{31+} =  \angle Z_1 Y_0 S_{31+} + \angle Z_2 Y_0 S_{31+} = \angle Z_1 Y_0 Z_2. $$
The other two equalites \eqref{eq:2D+1} and \eqref{eq:2D+1} can be derived similarly. 
\end{proof}

Let us notate $L=\frac{Z_1+Z_2}2, M=\frac{Z_2 +Z_3}2, $ and $N=\frac{Z_3+Z_1}2$.
\begin{lem}\label{le:2D22}
The condition that $R_{i1,i2,i3}$ is nonempty and connected for all $i_1,i_2 \in \{0,1\}$ and  
$$ O(S_{23+})=O(S_{23-})=O(S_{31+})=(S_{31-})$$ 
is equivalent to the following condition
$$\norm{Z_1 Z_3} = \norm{Z_2 Z_3}, d_1 = d_2, \mbox{ and } d_1^2 = d_3^2 + \norm{Z_1 Z_3}^2.$$
\end{lem}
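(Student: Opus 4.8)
The plan is to prove the stated equivalence in both directions, treating the nonemptiness and connectedness of the regions $R_{i_1,i_2,i_3}$ as a standing hypothesis (exactly as in Lemma \ref{le:d10d30}) so that the four marked intersection points are genuine and the sign of each objective term is fixed: the ``$+$'' points $S_{23+},S_{31+}$ lie inside the remaining disk and the ``$-$'' points $S_{23-},S_{31-}$ lie outside it. The single most useful observation is that all four points $S_{23\pm},S_{31\pm}$ lie on the circle $S_3$, so that $O(S_{23\pm})=\left|d_1^2-\norm{S_{23\pm}-Z_1}^2\right|$ and $O(S_{31\pm})=\left|d_2^2-\norm{S_{31\pm}-Z_2}^2\right|$; writing $V$ for the (conjectured) common objective value, the sign information then gives $\norm{S_{23+}-Z_1}^2=d_1^2-V$, $\norm{S_{23-}-Z_1}^2=d_1^2+V$, $\norm{S_{31+}-Z_2}^2=d_2^2-V$, and $\norm{S_{31-}-Z_2}^2=d_2^2+V$.

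For the ``if'' direction I would invoke Lemma \ref{le:d10d30} directly: under $\norm{Z_1 Z_3}=\norm{Z_2 Z_3}$ and $d_1=d_2$ that lemma already yields $O(S_{23+})=O(S_{31+})$, while its final equivalence states that $O(S_{23+})=O(S_{23-})$ holds precisely when $\norm{Z_1 Z_3}^2+d_3^2=d_1^2$, which is the assumed Pythagorean relation. The reflection across the perpendicular bisector of $\overline{Z_1 Z_2}$ — which passes through $Z_3$ by the isosceles hypothesis, fixes $S_3$, and swaps $S_1\leftrightarrow S_2$ and $Z_1\leftrightarrow Z_2$ — interchanges $S_{23\pm}\leftrightarrow S_{31\pm}$, and hence also gives $O(S_{23-})=O(S_{31-})$. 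Chaining these equalities produces $O(S_{23+})=O(S_{23-})=O(S_{31+})=O(S_{31-})$.

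For the ``only if'' direction, which is the substance, I would first apply Lemmas \ref{le:2D+} and \ref{le:2D-}: from $O(S_{23+})=O(S_{31+})$ with both points in $R_3$ one obtains $\norm{S_{23+}-Y_0}=\norm{S_{31+}-Y_0}$ and $\angle Z_3 Y_0 S_{23+}=\angle Z_3 Y_0 S_{31+}$, and symmetrically for the ``$-$'' points from $O(S_{23-})=O(S_{31-})$. Since $S_{23+}$ and $S_{31+}$ are distinct and lie on opposite sides of the line $\ell$ through $Z_3$ and $Y_0$, these equalities force them to be mirror images across $\ell$, so the reflection $\sigma$ in $\ell$ satisfies $\sigma(S_{23\pm})=S_{31\pm}$. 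I would then set $\ell$ to be the $y$-axis with $Z_3=(0,z)$, midpoint $L=\tfrac12(Z_1+Z_2)=(0,m)$, $Z_1=(p,q)$, $Z_2=(-p,2m-q)$, so that the isosceles condition $\norm{Z_1 Z_3}=\norm{Z_2 Z_3}$ reads $q=m$, and translate the membership of $S_{31\pm}$ in $S_1$ and $S_3$ together with the reflected relations $\norm{S_{31\pm}-\sigma(Z_2)}=d_2$ and $\norm{S_{31\pm}-\sigma(Z_1)}^2=d_1^2\mp V$ into coordinates. Subtracting the appropriate pairs gives $x_{\pm}=\mp V/(4p)$ together with
\[ 4(m-q)(y_{\pm}-m)=d_1^2-d_2^2 . \]
If $m\neq q$ these force $y_+=y_-$, making $S_{31+}$ and $S_{31-}$ symmetric across $\ell$; but they are the two intersections of $S_1$ and $S_3$, hence symmetric across the line of centers $\overline{Z_1 Z_3}$, which would then have to coincide with $\ell$ and force $p=0$, a contradiction. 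Thus $m=q$, i.e.\ $\norm{Z_1 Z_3}=\norm{Z_2 Z_3}$, and the displayed identity immediately gives $d_1=d_2$. With the isosceles property and $d_1=d_2$ established I would finally feed the hypothesis $O(S_{23+})=O(S_{23-})$ into Lemma \ref{le:d10d30} to conclude $\norm{Z_1 Z_3}^2+d_3^2=d_1^2$.

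The step I expect to be the main obstacle is not the algebra but the qualitative placement underlying it: that the ``$+$'' points really sit inside the third disk and the ``$-$'' points outside (so the signs of the objective terms, and therefore the formulas for $V$, are correct), and that $S_{23+}$ and $S_{31+}$ lie on opposite sides of $\ell$ (so that the angle-and-distance equalities give a genuine reflection rather than a coincidence). These are precisely the facts guaranteed by the connectedness and nonemptiness of the $R_{i_1,i_2,i_3}$, on which I would lean here; conversely, for the ``if'' direction one must check that the symmetric Pythagorean configuration indeed renders every $R_{i_1,i_2,i_3}$ nonempty and connected, which I would verify directly from the explicit circle positions. Finally, the degenerate case $V=0$ (three concurrent circles, where $x_\pm=0$ and the reflection argument collapses) should be noted and excluded or disposed of separately.
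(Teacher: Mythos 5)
Your proposal is correct in substance and follows the paper's overall skeleton --- both directions rest on Lemmas \ref{le:2D-}, \ref{le:2D+} and Lemma \ref{le:d10d30} --- but the decisive step of the ``only if'' direction is executed by a genuinely different route. The paper stays synthetic: from the angle equalities it deduces $\angle Z_1 Z_3 Y_0 = \angle Z_2 Z_3 Y_0$, concludes $\norm{Z_1 Z_3}=\norm{Z_2 Z_3}$ by the angle bisector theorem (the bisector from $Z_3$ is simultaneously the median through $L$), obtains $d_1=d_2$ from the congruence of $\triangle Z_1 S_{31-} Z_3$ and $\triangle Z_2 S_{23-} Z_3$, and only then invokes Lemma \ref{le:d10d30} for the Pythagorean relation, exactly as you do at the end. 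You instead promote the distance-and-angle equalities of Lemmas \ref{le:2D-}/\ref{le:2D+} to an explicit reflection $\sigma$ across the line $Z_3 Y_0$ and extract both $\norm{Z_1 Z_3}=\norm{Z_2 Z_3}$ and $d_1=d_2$ from a single coordinate computation; I checked your two subtractions, and $x_\pm=\mp V/(4p)$ and $4(m-q)(y_\pm-m)=d_1^2-d_2^2$ are right, as is the contradiction when $m\neq q$ (then $y_+=y_-$, so the line of centers $\overline{Z_1 Z_3}$, being the perpendicular bisector of the chord $\overline{S_{31+}S_{31-}}$, coincides with $\ell$ and forces collinear sensors). This buys a uniform treatment --- one linear identity replaces two separate synthetic arguments --- at the cost of having to police the degenerate cases explicitly. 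Two items you leave as promissory notes must actually be carried out: (i) in the ``if'' direction the paper does verify that the distance conditions imply every $R_{i_1,i_2,i_3}$ is nonempty and connected, via $d_1>\norm{Z_1Z_2}/2$ and the elementary inequality $a-b<\sqrt{a^2-b^2}<a+b$, so you should reproduce that short computation rather than defer it; (ii) your flagged case $V=0$ (equivalently $S_{31+}=S_{31-}$ or $S_{23+}=S_{31+}$) is genuinely needed for the reflection argument, and it can be closed by observing that $V=0$ together with the four assumed equalities puts two common points (or one common tangency point) on all three circles, which forces $Z_1,Z_2,Z_3$ collinear and contradicts a nondegenerate measurement triangle. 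With those two gaps filled your proof is complete; note that the placement facts you worried about ($S_{23+},S_{31+}\in R_3$ and $S_{23-},S_{31-}\in R_0$) are used by the paper in exactly the same unproved, figure-level way, so your argument is not weaker than the original on that point.
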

\begin{proof}
Assume that $R_{i1,i2,i3}$ is nonempty and connected for all $i_1,i_2 \in \{0,1\}$ and 
$$ O(S_{23+})=O(S_{23-})=O(S_{31+})=(S_{31-}).$$
Possible locations of $S_{31+}$ and $S_{23+}$ are in Figure \ref{fig:2D22} (a) and (b).
By using Lemma \ref{le:2D-}  and  Lemma \ref{le:2D+},  we have 
$$\angle Y_0 Z_3 S_{31+} = \angle Y_0 Z_3 S_{23+}, \mbox{ and } \angle S_{31-} Z_3 Y_0 = \angle S_{23-} Z_3 Y_0. $$
From these facts,  we have 
\begin{equation} \label{eq:twosides}
 \angle Z_1 Z_3 Y_0 = \angle Z_2 Z_3 Y_0.
 \end{equation}
 
From (\ref{eq:twosides}) and $\norm{Z_1 L}=\norm{Z_2 L}$, we have 
$\norm{Z_1 Z_3} = \norm{Z_2 Z_3}$ by the angle bisector theorem.
Since 
$$\angle Z_1 S_{31-} Z_3 = \frac 1 2  \angle Z_1 Z_3 Y_0  =   \frac 1 2 \angle Z_2 Z_3 Y_0 = \angle Z_2 S_{31-} Z_3,$$
$\triangle Z_1 S_{31-} Z_3$ and $\triangle Z_2 S_{23-} Z_3$ are congruent and we can derive that $d_1=d_2$.
By Lemma \ref{le:d10d30}, we have
$d_1^2 = d_3^2 + \norm{Z_3 Z_1}^2$.

From the condition $\norm{Z_1 Z_3} = \norm{Z_2 Z_3}, d_1 = d_2, \mbox{ and } d_1^2 = d_3^2 + \norm{Z_1 Z_3}^2$, we can prove that
$$ d_1 > \frac{\norm{Z_1 Z_2}}2 \mbox{ and } \sqrt{d_1^2 -\frac{\norm{Z_1 Z_2}^2}4}-\sqrt{\norm{Z_1 Z_3}^2  -\frac{\norm{Z_1 Z_2}^2}4} < d_3 < \sqrt{d_1^2 +\frac{\norm{Z_1 Z_2}^2}4}-\sqrt{\norm{Z_1 Z_3}^2  -\frac{\norm{Z_1 Z_2}^2}4}, $$
which proves that $R_{i1,i2,i3}$ is nonempty and connected for all $i_1,i_2 \in \{0,1\}$. We used the fact that if $a>b>0$ then $a-b < \sqrt{a^2-b^2} < a+b$. 
We can prove $ O(S_{23+})=O(S_{23-})=O(S_{31+})=(S_{31-})$ by Lemma \ref{le:d10d30}. 
\end{proof}

 \begin{figure}
\begin{minipage}[t]{8cm}
\centerline{\epsfig{file=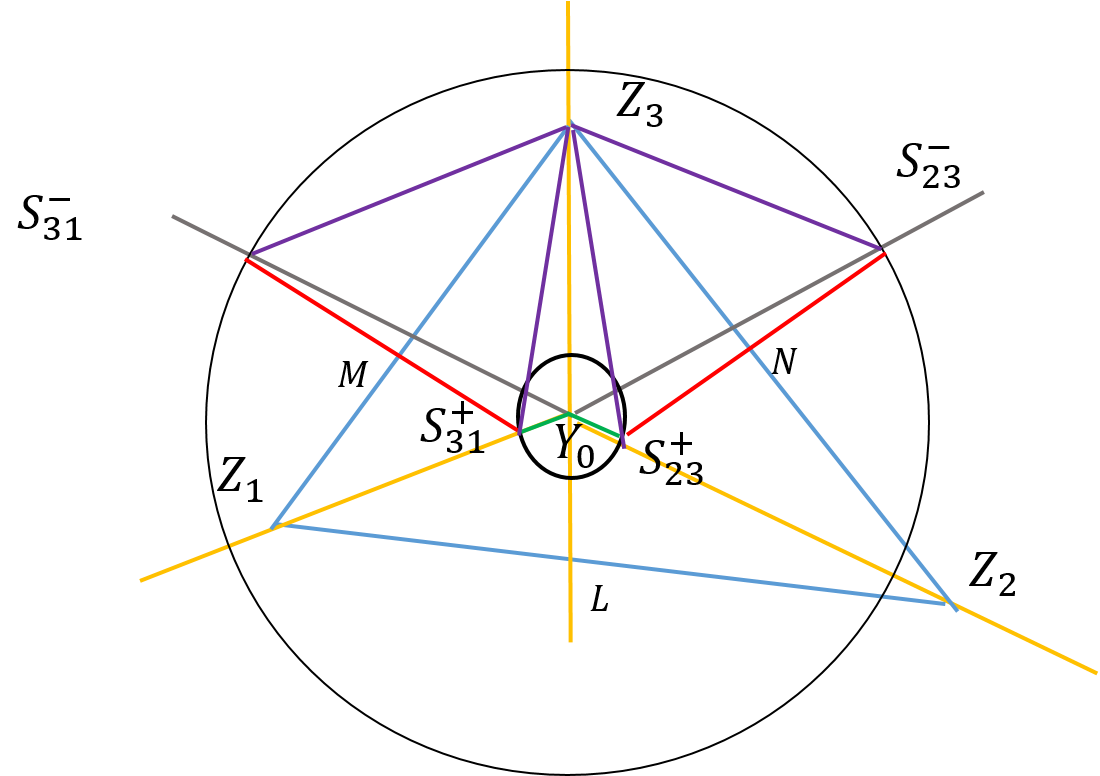, height=6cm,width=7cm,clip=1cm}}
\end{minipage}
\begin{minipage}[t]{8cm}
\centerline{\epsfig{file=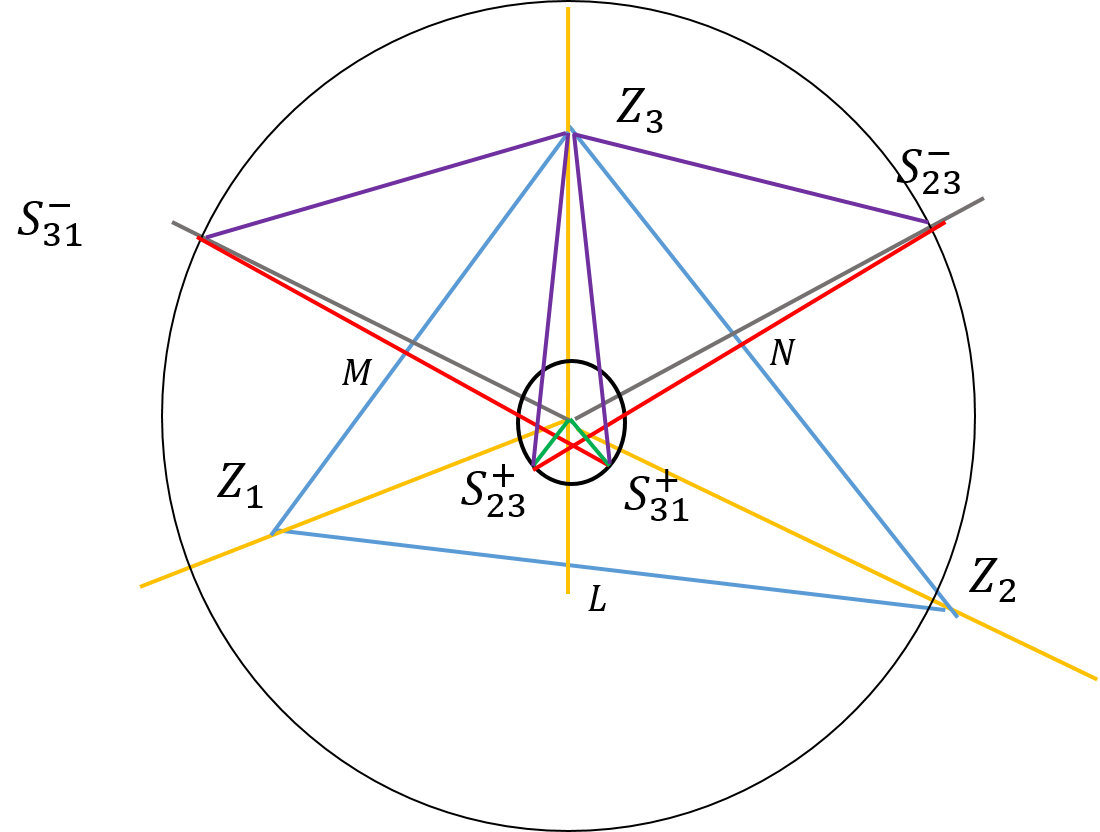, height=6cm,width=7cm,clip=1cm}}
\end{minipage}
\begin{center}
(a)\qquad\qquad\qquad\qquad\qquad\qquad\qquad\qquad\qquad\qquad(b)
\end{center}
\caption{The figure for Lemma \ref{le:2D22}.} 
\label{fig:2D22}
\end{figure}

Let us notate $r:= \norm{Z_1 Z_2}$ and $s:=\norm{ Z_3 - L}$. Then, in an equilateral triangle, $s=\frac{\sqrt 3}2 r$.
\begin{lem}\label{le:equid123}
Suppose that the measurement triangle $\triangle Z_1 Z_2 Z_3$ is equilateral and $d_1 = d_2 = d_3$. Then, the sources $X$ are defined as follows:
$$
X = \left\{
\begin{array}{ccc}
Y_0  &   \mbox{ if } & d_1 \le \frac r {\sqrt 3},               \\
\{ S_{12+}, S_{23+}, S_{31+} \} &  \mbox{ if } & d_1 > \frac r{\sqrt 3}.
\end{array}
\right.
$$
\end{lem}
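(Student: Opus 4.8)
Place the centroid at $Y_0=\frac13\sum_i Z_i$ and recall that in an equilateral triangle every vertex sits at the circumradius distance $\norm{Y_0-Z_i}=\frac{r}{\sqrt3}$ from $Y_0$. Since $d_1=d_2=d_3$, the three measurement disks are congruent and the whole configuration is invariant under the order-$3$ rotation about $Y_0$ (in fact under the full dihedral symmetry of the triangle). The threshold $d_1=\frac{r}{\sqrt3}$ is exactly the value at which $Y_0$ passes from the exterior of all three disks to their common interior, so I would split the argument there: the case $d_1\le\frac{r}{\sqrt3}$ (so $Y_0\in R_0$) and the case $d_1>\frac{r}{\sqrt3}$ (so $Y_0$ is interior to every disk and $Y_0\in R_3$).

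For the first case I would bypass the region machinery and argue by a single global estimate. Using $\sum_j\norm{W-Z_j}^2=3\norm{W-Y_0}^2+\sum_j\norm{Y_0-Z_j}^2$ together with $\sum_j\norm{Y_0-Z_j}^2=r^2$, one obtains
$$\sum_{j=1}^3\bigl(d_j^2-\norm{W-Z_j}^2\bigr)=3d_1^2-3\norm{W-Y_0}^2-r^2.$$
The triangle inequality then gives
$$O(W)=\sum_{j=1}^3\bigl|d_j^2-\norm{W-Z_j}^2\bigr|\ \ge\ \Bigl|3d_1^2-3\norm{W-Y_0}^2-r^2\Bigr|.$$
When $d_1\le\frac{r}{\sqrt3}$ the quantity inside the absolute value is $\le 3d_1^2-r^2\le 0$ for every $W$, so the lower bound equals $r^2-3d_1^2+3\norm{W-Y_0}^2=O(Y_0)+3\norm{W-Y_0}^2$. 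Hence $O(W)\ge O(Y_0)+3\norm{W-Y_0}^2$ with equality only at $W=Y_0$, which proves $X=\{Y_0\}$ and, as a bonus, that the minimizer is unique. This step is short and self-contained.

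For the second case I would invoke Theorem \ref{th:main1}. Since $d_1>\frac{r}{\sqrt3}>\frac r2$, the circles meet pairwise, so the points $S_{ij\pm}$ are defined, and $Y_0$ lies in the open interior of each disk, so $R_3\neq\phi$. After checking that $R_{100},R_{010},R_{001}$ are nonempty and connected, Theorem \ref{th:main1} reduces the problem to $X=\mathrm{argmin}_{A\in S_{123}^\pm}\norm{A-Y_0}$, i.e. to selecting the intersection points closest to the centroid. The symmetry now does the work: the rotation about $Y_0$ cyclically permutes $S_{12+},S_{23+},S_{31+}$, so these three are equidistant from $Y_0$, and likewise the three ``$-$'' points. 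Writing $S_{12\pm}=L\pm h\,\hat n$ with $h=\sqrt{d_1^2-r^2/4}>0$ and $\hat n$ directed along the median through $Z_3$, and using that $Y_0$ lies at distance $\frac{r}{2\sqrt3}$ from $L$ on the $Z_3$-side, I get $\norm{S_{12+}-Y_0}=\bigl|h-\tfrac{r}{2\sqrt3}\bigr|<h+\tfrac{r}{2\sqrt3}=\norm{S_{12-}-Y_0}$. Thus the three ``$+$'' points are the unique minimizers of the distance to $Y_0$, giving $X=\{S_{12+},S_{23+},S_{31+}\}$.

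The main obstacle is the hypothesis check in the second case, specifically the connectedness of $R_{100}$ (nonemptiness and the symmetric reduction to a single region are easy). I expect it to follow from the explicit geometry: on $S_1$ the complement of the other two open disks is a single arc around the point of $S_1$ farthest from $Z_2,Z_3$, because the ``bottom'' point of $S_1$ is already inside disks $2$ and $3$ precisely when $d_1>\frac{r}{\sqrt3}$ (a one-line distance computation), so $R_{100}$ is the single cap of the disk $D_1$ cut off by the arcs of $S_2$ and $S_3$. I would also record the boundary consistency at $d_1=\frac{r}{\sqrt3}$: there $\norm{S_{12+}-Y_0}=\bigl|h-\tfrac{r}{2\sqrt3}\bigr|=0$, so $S_{12+}=S_{23+}=S_{31+}=Y_0$ and the two cases agree.
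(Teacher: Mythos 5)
Your proof is correct, and it reaches the same two-case split as the paper ($Y_0\in R_0$ versus $Y_0\in R_3$, with threshold $d_1=\frac r{\sqrt 3}$), but the arguments inside each case are genuinely different. For $d_1\le \frac r{\sqrt 3}$ the paper simply asserts that $Y_0\in R_0$ forces $X=Y_0$ (leaning on the prior uniqueness results of \cite{Kwon3}), whereas your identity $\sum_j\norm{W-Z_j}^2=3\norm{W-Y_0}^2+r^2$ plus the triangle inequality gives the self-contained quantitative bound $O(W)\ge O(Y_0)+3\norm{W-Y_0}^2$, which also exhibits uniqueness of the minimizer directly; this is more elementary and arguably cleaner than the paper. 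For $d_1>\frac r{\sqrt 3}$ the paper compares objective values: by symmetry $O(S_{12+})=O(S_{23+})=O(S_{31+})$ and $O(S_{12-})=O(S_{23-})=O(S_{31-})$, and since $d_3=d_1<\sqrt{d_1^2+\frac{r^2}2}=d_3^0$, Lemma \ref{le:2DR4} yields $O(S_{12+})<O(S_{12-})$; you instead invoke the characterization $X=\mathrm{argmin}_{A\in S_{123}^\pm}\norm{A-Y_0}$ of Theorem \ref{th:main1} and compare distances, $\left|h-\tfrac r{2\sqrt 3}\right|<h+\tfrac r{2\sqrt 3}$ with $h=\sqrt{d_1^2-\frac{r^2}4}$. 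Both routes rest on the same reduction of the minimizer set to the six intersection points, which requires $R_3\neq\phi$ and the regions $R_{100},R_{010},R_{001}$ to be nonempty and connected: the paper leaves this entirely implicit, while you flag it and sketch the verification (your observation that the point of $S_1$ in the direction of the centroid lies in $D_2^{\mathrm{o}}\cap D_3^{\mathrm{o}}$ exactly when $d_1>\frac r{\sqrt 3}$ is correct, since that distance squared is $d_1^2+r^2-\sqrt 3\, r d_1$). That connectedness argument remains a sketch rather than a full proof, but since the paper's own proof omits the step altogether, your treatment is if anything the more careful one; your boundary check that $S_{12+}=S_{23+}=S_{31+}=Y_0$ at $d_1=\frac r{\sqrt 3}$ is also a nice consistency point the paper does not record.
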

\begin{proof}
If $d_1 \le  \frac r{\sqrt 3}$, then $Y_0 \in R_0$ and  $X= Y_0$.

If $d_1 > \frac r{\sqrt 3}$, then $Y_0 \in R_3$ and  by symmetry, we have $O(S_{31+}) = O(S_{23+}) = O(S_{12+})$ and $O(S_{31-}) = O(S_{12-}) = O(S_{23-})$. 
Since the measurement triangle is equilateral, we have $d_1^0 = d_2^0 = d_3^0 = \sqrt{d_1^2 + \frac {r^2} 2} > d_1=d_2=d_3$. Hence, 
$$ O(S_{31+}) = O(S_{23+}) = O(S_{12+}) < O(S_{31-}) = O(S_{12-}) = O(S_{23-}),$$
by Lemma \ref{le:2DR4}, and we have $X = \{S_{12+}, S_{23+}, S_{31+} \}$.
\end{proof}

\section{The proofs of theorems}
\subsection{Theorem \ref{th:main1}}
To prove the theorem, it is required to correct the proof of Theorem 6(c) when $R_{100},R_{010},$ and $R_{001}$ are connected and nonempty. 
Let us denote that $X_{i1, i2, i3}$ and $X_j$ are the minima in the domain $R_{i1,i2,i3}, \; i1,i2,i3 \in \{0,1\}$ and $R_j, \; j=0,1,2,3$, respectively.

Assume that $R_{100},R_{010},$ and $R_{001}$ are all nonempty and connected. Then, we have $Y_0\notin R_0, Y_1\notin R_{100}, Y_2\notin R_{010},$ and $Y_3\notin R_{001}$.
Since $X_{001}$ is the closest point to $Y_1$, $X_{001}\subset (S_2\cup S_3)\cap R_{100}\subset R_2$. In a similar manner, we have $X_{010}, X_{001}\subset R_2$. Hence, we have $X=X_2$ {\color{red}$\cup X_3$}.
Since $X_{011}$ is the farthest point in $R_{011}$ from $Y_1$, $X_{101}$ is the farthest point in $R_{101}$ from $Y_2$, $X_{110}$ is the farthest point in $R_{110}$ from $Y_3$, and $X_{111}$ is the farthest point in $R_{111}$ from $Y_0$,
we can derive that $X = argmin_{A\in S_{123+} {\color{red} \cup S_{123-}}} \norm{A-Y_0}$. 

The converse can be proved using Theorem 5, Theorem 6 (execpet the case $R_{100},R_{010},$ and $R_{001}$ are all connected and nonempty), Lemma 12, Lemma 13, and Corollary 14  in \cite{Kwon}.

\subsection{Theorem \ref{th:main2}}
It is enough to show that $|X|\neq 6$ by Theorem \ref{th:main1}.
The only case that $|X|=6$ is that 
$$  O(S_{12+})=O(S_{12-})=O(S_{23+})=O(S_{23-})=O(S_{31+})=O(S_{31-}).$$
Then, by Lemma \ref{le:2D-} and \ref{le:2D+}, we have, 
$$ \norm{S_{23-} - Y_0} = \norm{S_{31-} - Y_0} = \norm{S_{12-} - Y_0},\quad \norm{S_{23+} - Y_0} = \norm{S_{31+} - Y_0} = \norm{S_{12+} - Y_0}.$$
and 
\begin{eqnarray*}
 \angle Z_1 Y_0 S_{12-}  + \angle Z_1 Y_0 S_{12+}  = \pi , &\;& \angle Z_1 Y_0 S_{31-}  + \angle Z_1 Y_0 S_{31+}  = \pi , \\
\angle Z_2 Y_0 S_{23-}  + \angle Z_2 Y_0 S_{23+}  = \pi , &\;& \angle Z_2 Y_0 S_{12-}  + \angle Z_2 Y_0 S_{12+}  = \pi , \\
\angle Z_3 Y_0 S_{31-}  + \angle Z_3 Y_0 S_{31+}  = \pi , &\;& \angle Z_3 Y_0 S_{23-}  + \angle Z_3 Y_0 S_{23+}  = \pi ,
\end{eqnarray*}
which implies $ \overline{S_{ij+} S_{ij-}}$ and $\overline{ Z_i Z_j}$ are othogonal each other for all $(i,j)=(1,2),(2,3),$ and $(3,1)$.
Hence, the distance from $Y_0$ to each side of the triangle $\triangle Z_1 Z_2 Z_3$ is the same and  we have
\begin{equation}\label{eq:angle}
 \angle Y_0 Z_1 Z_2 = \angle Y_0 Z_1 Z_3. 
 \end{equation}
Let $M= \frac 1 2 ( Z_2 + Z_3)$. Then, the line joining $Z_1$ and $Y_0$ passes $M$. And the from (\ref{eq:angle}), we have
$$ \norm{Z_2 - Z_1} : \norm{Z_1 - Z_3} = \norm{Z_2 - M} : \norm{Z_3 - M} = 1: 1. $$

Hence, the triange $\triangle Z_1 Z_2 Z_3$ should be an equilateral triangle and $d_1=d_2=d_3$.
Then by Lemma \ref{le:equid123}, we have $X=S_{123+}$ iff $d_1 > \frac {|Z_1-Z_2|}{\sqrt 3}$ and $X=Y_0$, otherwise.
In any of the two cases, this contradicts that $|X|=6$. Hence, we proved the theorem.

\subsection{Theorem \ref{th:main3}}
By Lemma \ref{le:2D22}, the assumption implies that $R_{i1,i2,i3}$ is nonempty and connected for all $i_1, i_2 \in \{0,1\}$ and 
$$O(S_{23+})=O(S_{23-})=O(S_{31+})=O(S_{31-}).$$
Let us compute one of them: 
\begin{eqnarray*}
 O(S_{31-}) &=& \norm{S_{31-}  Z_2}^2 - d_2^2 = d_3^2 + \norm{Z_3 Z_2}^2 - 2 d_3 \norm{Z_3 Z_2} \cos\angle S_{31-} Z_3 Z_2 - d_2^2\\
                    &=&   d_3^2 + \norm{Z_3 Z_2}^2 + 2 d_3 \norm{Z_3 Z_2} \sin\angle Z_1  Z_3 Z_2 - d_2^2\\
                    &=& (d_3^2 + \norm{Z_3 Z_2}^2 - d_2^2)  + 2 d_3 \norm{Z_3 Z_2} 2\sin \frac{\angle Z_1  Z_3 Z_2}2 \cos \frac{\angle Z_1  Z_3 Z_2} 2\\
                    &=&  \frac{2r sd_3}{\sqrt{s^2  +\frac{r^2}4}  }
\end{eqnarray*}
where $\angle S_{31-} Z_3 Z_2 = \frac \pi 2 + \angle Z_1  Z_3 Z_2 $ is used. Whereas,
\begin{eqnarray*}
O(S_{12-}) &=& \norm{S_{12-}  Z_3}^2 - d_3^2 = \left( \sqrt{d_1^2 - \frac{r^2}4} + s\right)^2 - d_3^2 \\
                   &=& d_1^2 + s^2 - \frac{r^2}4 - d_3^2 + 2 s\sqrt{d_1^2 - \frac{r^2}4}\\
                   &=& 2s^2  + 2 s\sqrt{d_3^2 + s^2}.
\end{eqnarray*}  

Likewise, we obtain                 
$$ O(S_{12+}) = d_3^2 - \norm{S_{12+}  Z_3}^2 = -2s^2  + 2 s\sqrt{d_3^2 + s^2}. $$

Since $O(S_{12+})< O(S_{12-})$, it is enough to compare $O(S_{12+})$ and $O(S_{31-})$ to prove the lemma.

First, suppose that $\norm{Z_1 Z_2} \ge \norm{Z_1 Z_3}$. Then, 
$$ O(S_{12+}) = 2s \left(\sqrt{d_3^2 + s^2} - s\right) <2sd_3\le O(S_{31-}).$$

Next, suppose that $\norm{Z_1 Z_2} < \norm{Z_1 Z_3}$. If we find the relation between $d_3$ and $s$ satisfying $O(S_{12+}) = O(S_{31-})$, we have
$$ d_3 = \frac{2rs\sqrt{s^2 + \frac{r^2}4}}{s^2-\frac{3r^2}4}. $$
Using the relation $d_1^2 = d_3^2 + \frac{r^2} 4 + s^2$,  the above equation is equivalent to 
$$ d _1 = \sqrt{\frac{r^2}{4} +   s^2 \left( \frac { s^2 + \frac 5 4 r^2} { s^2 - \frac 3 4 r^2}  \right)^2}. $$
Since we have defined the righthandside as $P$, if we follow the same computation carefully replacing $=$ with $>$ and $<$, we can prove the theorem.

\section{When the measurement triangle is equilateral}
\begin{theorem}\label{le:equid12}
Suppose that the measurement triangle $\triangle Z_1 Z_2 Z_3$ is equilateral and $d_1=d_2$. 
Then, the source $X$ is determined as follows: 

\begin{tabular}{|c|c||c|}
\hline
$ d_1=d_2$                                          &$d_3$                                                                                                                                                         &$X$ \\
\hline \hline
$\left(0,\frac r{\sqrt 3}\right]$              &$\left(0, \frac r{\sqrt 3}\right]$                                                                                                                     &$Y_0$\\
\hline
$\left(0,r\right]$                                    &$\left[ \sqrt 3 r, \infty \right)$                                                                                                                       &$Y_3$\\
\hline
$\left(0, \frac r 2\right]$                       &$\left[\frac r{\sqrt 3},\sqrt 3 r \right]$                                                                                                           &$N_3$\\
\hline
$\left(\frac r 2, \frac r{\sqrt 3}\right]$  &$\left[\frac r{\sqrt 3}, \frac{\sqrt 3}2 r- \sqrt{ d_1^2 - \frac{r^2}4 }\right]$                                                   &$N_3$\\
\hline
$\left( \frac r 2, r\right]$                       &$\left[ \frac{\sqrt 3}2 r + \sqrt{ d_1^2 - \frac{r^2}4},   \sqrt 3 r \right]$                                                        &$N_3$\\    
\hline
$\left(\frac r 2, \frac r{\sqrt 3}\right)$  &$\left(\frac{\sqrt 3}2 r - \sqrt{ d_1^2 - \frac{r^2}4}, \frac{\sqrt 3}2 r + \sqrt{ d_1^2 - \frac{r^2}4 }\right)$   &$\{S_{23+}, S_{31+}\}$  \\
\hline
$\left[\frac r{\sqrt 3},\infty\right)$        &$\left(0, d_1\right)$                                                                                                                                       &$S_{12+}$\\
\hline    
                                                            &$d_1$                                                                                                                                                            &$\{ S_{12+}, S_{23+}, S_{31+} \}$\\                                                
\hline
$\left[\frac r{\sqrt 3}, r\right)$              &$\left( d_1, \frac{\sqrt 3}2 r+ \sqrt{ d_1^2 - \frac{r^2}4} \right)$                                                                  &$\{S_{23+}, S_{31+}\}$\\
\hline                    
$\left[ r, \infty\right)$                            &$\left( d_1,  \sqrt{d_1^2 + 2r^2}   \right)$                                                                                                     &$\{ S_{23+}, S_{31+} \}$\\
\hline
                                                            &$ \sqrt{d_1^2 + 2r^2}$                                                                                                                                  &$\{ S_{23+}, S_{31+}, S_{12-} \}$\\
\hline 
                                                            &$\left(  \sqrt{d_1^2 + 2r^2},  \infty\right) $                                                                                                     &$S_{12-}$\\                                  
\hline                                                                                                                                                                                                                                                                                                                                
\end{tabular}
\end{theorem}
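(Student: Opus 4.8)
The plan is to reduce the whole table to a cell-by-cell analysis of the quadratic pieces of $O$, organized around the reflection symmetry of the configuration. I would place $L=\frac{Z_1+Z_2}2$ at the origin with the axis of symmetry vertical, so that $Z_1=(-\frac r2,0)$, $Z_2=(\frac r2,0)$, and $Z_3=(0,s)$ with $s=\frac{\sqrt3}2 r$; because $d_1=d_2$ the objective obeys $O(x,y)=O(-x,y)$. On each cell $R_{i_1i_2i_3}$ the function $O$ is a single quadratic $\sum_{j}\sigma_j(\norm{W-Z_j}^2-d_j^2)$ with $\sigma_j=+1$ outside disk $j$ and $\sigma_j=-1$ inside, so its Hessian equals $2(\sigma_1+\sigma_2+\sigma_3)I$. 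Hence $R_0$ is strictly convex with interior minimizer $Y_0$, each single-inside cell such as $R_{001}$ is strictly convex with interior critical point $Z_1+Z_2-Z_3=Y_3$, while the two-inside cells and $R_3$ are concave and attain their minima on the boundary, i.e. at the intersection points $S_{ij\pm}$. The candidate minimizers are therefore the on-axis points $Y_0$, $Y_3$, $S_{12+}$, $S_{12-}$, the on-axis point $N_3$ of $S_3$ nearest $Y_3$ (the constrained minimizer of the convex cell $R_{001}$ when $Y_3$ lies outside disk $3$), and the reflected pair $\{S_{23+},S_{31+}\}$ (or its $-$ analogue). By symmetry $O(S_{23\pm})=O(S_{31\pm})$, so every off-axis minimizer appears as such a pair and every isolated minimizer lies on the axis.

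Next I would translate the breakpoints of the table into geometric events. One checks that $r/\sqrt3=\norm{Y_0-Z_j}$, so $Y_0\in R_0$ exactly when $d_1=d_2\le r/\sqrt3$ and $d_3\le r/\sqrt3$; that $\sqrt3\,r=\norm{Y_3-Z_3}$ governs whether $Y_3$ lies inside disk $3$; that $r=\norm{Y_3-Z_1}=\norm{Y_3-Z_2}$ governs whether $Y_3$ lies outside disks $1$ and $2$ (so that $Y_3\in R_{001}$); that $r/2=\norm{Z_1-L}$ decides whether $S_1$ and $S_2$ meet; that $d_3=s\pm\sqrt{d_1^2-\frac{r^2}4}$ are the distances $\norm{Z_3-S_{12\mp}}$, at which the three circles become concurrent at $S_{12\pm}$; and that $d_3=\sqrt{d_1^2+2r^2}$ is the value where $O(S_{12-})=O(S_{23+})=O(S_{31+})$. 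With these in hand, each single-solution row follows from the uniqueness results of \cite{Kwon} invoked in Theorem \ref{th:main1}: the $Y_0$ row is the case $Y_0\in R_0$ with $K$ nonempty; the $Y_3$ rows are the case $Y_3\in R_{001}$, where $Y_3$ is the interior minimizer of a convex cell; the $N_3$ rows are the subcase $d_3<\sqrt3\,r$ (so $Y_3$ is outside disk $3$) with $d_1$ small enough that the constrained minimizer $N_3$ of $R_{001}$ lies outside disks $1,2$; and the $S_{12+}$ and $S_{12-}$ rows are the regimes where disk $3$ is too small, respectively too large, for the multi-solution hypotheses to hold, leaving a single on-axis intersection minimizer ordered by Lemma \ref{le:2DR4}.

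For the multi-point rows I would invoke Theorem \ref{th:main1}: when $R_3\neq\phi$ and $R_{100},R_{010},R_{001}$ are connected and nonempty, $X=\mathrm{argmin}_{A\in S_{123}^\pm}\norm{A-Y_0}$. The reflection symmetry forces $\norm{S_{23+}-Y_0}=\norm{S_{31+}-Y_0}$ and likewise for the $-$ points, so the minimizing set is automatically a reflected pair, which accounts for the $\{S_{23+},S_{31+}\}$ rows. Their boundaries come from comparing objective values as functions of $d_3$ with $d_1$ fixed: the pair beats $N_3$ precisely in the window $s-\sqrt{d_1^2-\frac{r^2}4}<d_3<s+\sqrt{d_1^2-\frac{r^2}4}$ (which is nonempty only for $d_1>r/2$, matching the $d_1$-ranges of those rows); it beats $S_{12+}$ precisely for $d_3>d_1$, with equality at the fully symmetric value $d_3=d_1$ supplied by Lemma \ref{le:equid123}; and it beats $S_{12-}$ precisely for $d_3<\sqrt{d_1^2+2r^2}$. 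At each endpoint the relevant objective values tie, producing the three-point sets $\{S_{12+},S_{23+},S_{31+}\}$ at $d_3=d_1$ and $\{S_{23+},S_{31+},S_{12-}\}$ at $d_3=\sqrt{d_1^2+2r^2}$; Lemmas \ref{le:2D-} and \ref{le:2D+} certify that these ties are genuine by matching both the $Y_0$-distances and the associated angles at $Y_0$.

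The main obstacle is the global bookkeeping rather than any one case: for every cell I must verify simultaneously that the proposed local minimizer (i) lies in the closed cell, (ii) is the constrained minimum there, and (iii) has objective value below every other cell's minimum, and then show that these facts switch exactly at the stated thresholds. The two nonobvious breakpoints $d_3=s\pm\sqrt{d_1^2-\frac{r^2}4}$ and $d_3=\sqrt{d_1^2+2r^2}$ are the delicate part, since there the global minimizer jumps between an on-axis point and an off-axis symmetric pair; pinning them down requires writing $O(S_{12\pm})$, $O(S_{23+})=O(S_{31+})$, and $O(N_3)$ explicitly in $d_3$ (with $s=\frac{\sqrt3}2 r$) and locating their crossings, exactly as in the computation already carried out for Theorem \ref{th:main3}. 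The symmetry is what makes this tractable: it collapses the a priori two-parameter comparison into the one-parameter families that give the clean thresholds.
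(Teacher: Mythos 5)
Your overall architecture matches the paper's: identify the per-cell minimizers ($Y_0$, $Y_3$, $N_3$, and the circle intersections), invoke Theorem \ref{th:main1} to restrict to $S_{123}^{\pm}$ when all $R_{i_1i_2i_3}$ are nonempty and connected, exploit the reflection symmetry to pair $S_{23\pm}$ with $S_{31\pm}$, and locate the thresholds by comparing objective values as functions of $d_3$. Your Hessian observation (convex cells for $R_0$ and the one-inside cells, concave for the rest) is a clean self-contained substitute for the citations to \cite{Kwon}, \cite{Kwon3} that the paper uses for the $Y_0$, $Y_3$, $N_3$ rows, though note that concavity alone puts the minimum on the boundary, not at the corner points; forcing it to the intersection points is exactly what Theorem \ref{th:main1} supplies, so you still need it.

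However, there is a genuine gap in your list of comparisons: you never compare $S_{12+}$ against the \emph{minus} pair $\{S_{23-},S_{31-}\}$, and this is where the paper's hardest work lies. By Lemma \ref{le:d10d30} (specialized to the equilateral case), $O(S_{23-})<O(S_{23+})$ whenever $d_1>r$ and $d_3<\sqrt{d_1^2-r^2}$, so in the regime $d_1>r$, $\sqrt{d_1^2-\tfrac{r^2}{4}}-\tfrac{\sqrt3}{2}r<d_3<\sqrt{d_1^2-r^2}$ the surviving candidates are $S_{12+}$ and $\{S_{23-},S_{31-}\}$, not the plus pair. Your stated threshold rule (``the pair beats $S_{12+}$ precisely for $d_3>d_1$'', from $O(S_{12+})<O(S_{23+})$ iff $d_3<d_1$) says nothing about this competition, so the table row $d_1\in[r,\infty)$, $d_3\in(0,d_1)$, $X=S_{12+}$ is left unproven on that subinterval. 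The paper settles it with its longest computation: writing $u=d_1^2-d_3^2$ and showing that
$$ f(u) = O(S_{31-})-O(S_{12+}) = \frac{\sqrt3}{2}\sqrt{4r^2d_1^2-(u+r^2)^2} + \frac{u}{2}+r^2-\sqrt3\,r\sqrt{d_1^2-\tfrac{r^2}{4}} $$
is decreasing on the relevant interval and vanishes exactly at its right endpoint (the concurrency configuration), hence is strictly positive in the interior. Without this comparison (or an equivalent argument), your proof cannot exclude that the minus pair undercuts $S_{12+}$ somewhere in that regime, and the single-solution row would collapse. A parallel but easier omission: your list ``$O(S_{12\pm})$, $O(S_{23+})=O(S_{31+})$, $O(N_3)$'' of functions to cross-compare should also include $O(S_{23-})=O(S_{31-})$ for the same reason.
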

\begin{proof}
If $d_1 = d_2 \le \frac r{\sqrt 3} $ and $d_3 \le \frac r{\sqrt 3}$, then $Y_0\in R_0$ and $X=Y_0$ by Corollary 14(a) in \cite{Kwon}.

If $ d_1  \le  r$ and $d_3 \ge \sqrt 3 r$, then $Y_3\in R_{001}$ and $X=Y_3$ by Corollary 14(b) in \cite{Kwon}.

If $d_1< \frac r 2$, then two disks $D_1$ and $D_2$ do not meet and $R_3=\phi$. Then, $X=X_{001}$ by Theorem 6(b) in \cite{Kwon}. Since $\norm{Z_3 - Y_0} = \frac r{\sqrt 3}$ and $\norm{Z_3 - Y_3}=\sqrt 3 r$, 
if $d_3  \in [\frac r{\sqrt 3}, \sqrt 3 r]$, then $Y_0$ is contained in $R_{001}$, $Y_3$ is not contained inside of $R_{001}$, and $N_3$ is the closest point on $R_{001}$ from $Y_3$. Hence, $X=N_3$. We can think about $d_1=\frac r 2$ as the limiting case.

If two disks $D_1$ and $D_2$ meet and does not meet $D_3$ having $Y_0$ inside, then $X=N_3$. The condition of $d_1$ and $d_3$ for this case is that $d_1 \in \left[\frac r 2, \frac r{\sqrt 3}\right)$ and
$d_3 \in \left(\frac r{\sqrt 3}, \frac{\sqrt 3}2 r- \sqrt{ d_1^2 - \frac{r^2}4 }\right)$. We could insert the end points of the interval without difficult consideration.

If $d_1\in \left[ \frac r 2, r \right]$ and $d_3\in \left[  \frac{\sqrt 3}2 r+ \sqrt{ d_1^2 - \frac{r^2}4}, \sqrt 3 r\right]$, then $D_3$ contains $D_1\cup D_2$ or $R_{001}$ is nonempty and disconnected. 
Then, by Theorems 6(a) and (c) in \cite{Kwon3}, we have $X=X_{001}$. Since $Y_3$ is not contained in any of the three disks, $X_{001}$ should be the nearest point of $R_{001}$ from $Y_3$. Hence, we have $X=X_{001}=N_3$.

Consider the case $d_1\in \left( \frac r 2, \frac r {\sqrt 3} \right)$ and $d_3 \in \left( \frac{\sqrt 3}2 r-\sqrt{ d_1^2 - \frac{r^2}4}, \frac{\sqrt 3}2 r+ \sqrt{ d_1^2 - \frac{r^2}4}\right)$. 
All three disks meet and $R_{i_1 i_2 i_3}$ has one connected component for every $i_1,i_2,i_3 \in \{0,1\}$. Thus, we have $X\ge S_{123}^\pm$ by Theorem \ref{th:main1}.
Since 
$$
d_3 > \frac{\sqrt 3}2 r-\sqrt{ d_1^2 - \frac{r^2}4} >  \frac{\sqrt 3}2 r  -\sqrt{ \left(\frac r {\sqrt 3}\right)^2 - \frac{r^2}4}  =  \frac r {\sqrt 3} > d_1, 
$$
and using Lemma \ref{le:d10d30}, we have $X\ge \{S_{12\pm}, S_{23+}, S_{31+}\}$.  
Since $Y_0 \in D_3\setminus R_3$, $\norm{Z_3 - S_{12+}} < \norm{Z_3 - S_{23+}}$.
By the fact that minima of $R_3$ are located on the farthest points from $Y_0$, we have $R_3\ge \{ S_{23+}, S_{31+} \}$ and $X\ge \{ S_{23+}, S_{31+}, S_{12-} \}$.   Consider $R_{110}$. The minima of $R_{110}$ are located at the farthest points from $Y_3$ and $Y_3 \notin R_{110}$, then we have $S_{12-} \ge R_{110} \ge \{S_{23+}, S_{31+}\}$ and thus finally we have $X=\{S_{23+}, S_{31+}\}$. 

If $\frac r {\sqrt 3}< d_1 < r $ and $d_3 < \frac{\sqrt 3}2 r- \sqrt{ d_1^2 - \frac{r^2}4}$, then $R_0\cap \triangle Z_1 Z_2 Z_3 \neq \phi$ and $X=S_{12+}$ by Theorem 5(b) and Lemma 12(c) in \cite{Kwon3}. 
If $ d_1 >   r $ and $d_3 <  \sqrt{ d_1^2 - \frac{r^2}4}-\frac{\sqrt 3}2 r$,  then $D_3 \subset D_1\cap D_2$ and $X=S_{12+}$, by Theorem 6(a) and Lemma 13(c) in \cite{Kwon3}. Thus, we have proved that
$X=S_{12+}$ if  $d_1>\frac r {\sqrt 3}$ and $d_3 < \frac{\sqrt 3}2 r- \sqrt{ d_1^2 - \frac{r^2}4}$.

If $ \frac r {\sqrt 3}<d_1<r$ and $\left|\frac{\sqrt 3}2 r- \sqrt{ d_1^2 - \frac{r^2}4}\right| < d_3 < \frac{\sqrt 3}2 r+ \sqrt{ d_1^2 - \frac{r^2}4}$, $R_{i_1 i_2 i_3}$ is nonempty and connected for all $i_1,i_2,i_3\in \{0,1\}$. Thus, by Theorem \ref{th:main1}, we have $X\subset S_{123+}$.

By Lemma \ref{le:d10d30} and considering the fact that  $X_3$ is the farthest point from $Y_0$, we have
\begin{subeqnarray}\label{eq:equid10d30}
\slabel{eq:equid10d30_2}
O(S_{23+}) < O(S_{23-})                                            &\mbox{ iff }& d_1\le r \mbox{ or } \left( d_1>r \mbox{ and } d_3>\sqrt{d_1^2-r^2}\right) ,\\
\slabel{eq:equid10d30_1}
O(S_{12+}) < O(S_{12-})                                             &\mbox{ iff }& d_3<\sqrt{d_1^2 + \frac{r^2}2} ,\\
\slabel{eq:equid10d30_3}
O(S_{12+}) < O(S_{23+})                                           &\mbox{ iff }& d_3<d_1,
\end{subeqnarray}
under the condition $d_1 > \frac r {\sqrt 3}$ and $\left|\frac{\sqrt 3}2 r- \sqrt{ d_1^2 - \frac{r^2}4}\right| < d_3 < \frac{\sqrt 3}2 r+ \sqrt{ d_1^2 - \frac{r^2}4}$.

Thus, if $\frac r {\sqrt 3}<d_1\le r$ and $\left|\frac{\sqrt 3}2 r- \sqrt{ d_1^2 - \frac{r^2}4}\right| < d_3 \le d_1$, then $X=S_{12+}$  and if $\frac r {\sqrt 3}<d_1\le r$ and $d_3=d_1$, then $X=S_{123+}$. 
If $d_1>r$ and $\sqrt{d_1^2 - r^2} < d_3 < d_1$, then $X=S_{12+}$ and if $d_3=d_1>r$, then $X=S_{123+}$.
 
If $d_1 > r$ and $d_3 \in \left( \sqrt{d_1^2 - \frac{r^2}4} - \frac{\sqrt 3 r}2, \sqrt{d_1^2 - r^2} \right)$,  then $X\subset \{S_{12+}, S_{23-},S_{31-}\}$.
First, we have
$$ O(S_{12+}) = d_3^2 - \left( \sqrt{d_1^2 - \frac {r^2}4}   - \frac {\sqrt 3 r}2 \right)^2 = d_3^2 - d_1^2 - \frac {r^2}2 + \sqrt 3 r \sqrt{d_1^2 - \frac {r^2}4}. $$

Second, we can compute
\begin{eqnarray*}
O(S_{31-}) &=& \norm{S_{31-} - Z_2}^2 - d_1^2 = d_3^2 + r^2 - 2r d_3 \cos{\angle S_{31-} Z_3 Z_2}-d_1^2\\
                    &=& \norm{S_{31-} - Z_2}^2 - d_1^2 = d_3^2 + r^2 - 2r d_3 \cos{\angle S_{31-} Z_3 Z_1+ \angle Z_1 Z_3 Z_2}-d_1^2\\
                   &=&  d_3^2 -d_1^2+ r^2 - 2r d_3 \left(  \frac 1  2 \cos \angle S_{31-} Z_3 Z_1 - \frac {\sqrt 3} 2 \sin \angle S_{31-} Z_3 Z_1 \right)\\
                   &=&  d_3^2 -d_1^2 + r^2 - 2r d_3 \left(  \frac 1  2 \frac{d_3^2 + r^2 - d_1^2}{2r d_3} - \frac {\sqrt 3} 2 \frac{ \sqrt{4r^2 d_3^2 - ( d_3^2 + r^2 - d_1^2)^2  }}{2rd_3}  \right)\\
                   &=& \frac{d_3^2}2 - \frac{d_1^2}2 + \frac{r^2}2 + \frac{\sqrt 3}2 \sqrt{4r^2 d_3^2 - ( d_3^2 + r^2 - d_1^2)^2  }
\end{eqnarray*}

Hence, subtracting the above two terms:
\begin{equation*}
O(S_{31-})-O(S_{12+}) = -\frac{d_3^2}2 + \frac{\sqrt 3}2 \sqrt{4r^2 d_3^2 - ( d_3^2 + r^2 - d_1^2)^2  } + \frac{d_1^2}2 + r^2   -\sqrt 3 r \sqrt{d_1^2 - \frac {r^2}4}.
\end{equation*}
If we change the variable as $u=d_1^2 - d_3^2$, then we have $r^2<u<\sqrt 3 r \sqrt{d_1^2-\frac{r^2}4} - \frac {r^2}2$. Let us think  $O(S_{31-})-O(S_{12+})$ as the function $f$ depending on $u$ for given $r$ and $d_1$ as follows: 
$$ f(u) = \frac {\sqrt 3} 2 \sqrt{ 4r^2 d_1^2 - (u+r^2)^2} + \left[ \frac u  2  + r^2 - \sqrt 3 r \sqrt{d_1^2 - \frac {r^2} 4}\right]. $$ 
Note that the first part of the righthand side is the part of a circle centered at $(-r^2,0)$ with radius $2rd_1$, which is decreasing on the interval $u\in \left[r^2, \sqrt 3 r \sqrt{d_1^2-\frac{r^2}4} - \frac {r^2}2\right]$ and the last part of the right hand side in the bracket is a increasing straight line wirh respect to $u$. Thus, $f$ is a decreasing function on the interval.

Then, it is enough to show that $f\left(\sqrt 3 r \sqrt{d_1^2-\frac{r^2}4} - \frac {r^2}2\right)\ge 0$ to prove $X=S_{12+}$.  Let us show that $f(\sqrt 3 d_3 r + r^2)>0$:
\begin{eqnarray*}
f\left(\sqrt 3 r \sqrt{d_1^2-\frac{r^2}4} - \frac {r^2}2\right)
&=& \frac{\sqrt 3}2 \sqrt{ 4r^2 d_1^2 - \left(\sqrt 3 r \sqrt{d_1^2 -\frac{r^2}4} + \frac{r^2}2\right)^2 } + \frac{\sqrt 3}2 r \sqrt{d_1^2-\frac{r^2}4} + \frac{3r^2}4 - \sqrt 3 r\sqrt{d_1^2 - \frac{r^2} 4} \\
&=& \frac{\sqrt 3 r}2 \sqrt{ \left(d_1^2 -  \frac{r^2}4\right) - \sqrt 3 r \sqrt{d_1^2 -\frac{r^2}4} + \frac{3r^2}4} - \frac{\sqrt 3}2 r \sqrt{d_1^2-\frac{r^2}4} + \frac{3r^2}4 \\
&=&  0 
\end{eqnarray*} 
Note that $u=\sqrt 3 r \sqrt{d_1^2-\frac{r^2}4} - \frac {r^2}2$ implies that the three circles meet at one point, which could be another proof of $f\left(\sqrt 3 r \sqrt{d_1^2-\frac{r^2}4} - \frac {r^2}2\right)=0$. 
Hence, by summing up the above results we have proved that if $d_1\ge \frac r{\sqrt 3}$ and $d_3<d_1$, then $X=S_{12+}$ and if $d_3=d_1\ge \frac r{\sqrt 3}$, then $X=S_{123+}$. 

If $\frac r {\sqrt 3}< d_1 \le r$ and $d_1<d_3 < \sqrt{d_1^2+ \frac{r^2}2}$, then, $X= \{S_{23+},S_{31+}\}$ by \eqref{eq:equid10d30}.

If $\frac r {\sqrt 3}< d_1 \le r$ and $\sqrt{d_1^2+\frac{r^2}2}<d_3<\frac {\sqrt 3 r}2 + \sqrt{d_1^2-\frac{r^2}4}$, then $X\subset \{S_{23+},S_{31+},S_{12-}\}$ by \eqref{eq:equid10d30}. Since $\{S_{12-}, S_{23+},S_{31+}\}\subset R_{110}, \; Y_3 \notin R_{110}$ and $\norm{Y_3-S_{12-}}< \norm{Y_3-S_{23+}}$, from the fact that the minimum points on $R_{110}$ are the farthest points from $Y_3$, we have $X=\{S_{23+},S_{31+}\}$.

Consider the case $d_1\ge  r$ and $d_1 < d_3 \le \sqrt{d_1^2 + \frac{r^2}2}$. Then, we have $X=\{ S_{23+}, S_{31+}\}$ by \eqref{eq:equid10d30} also.  

If  $d_1>  r$ and $\sqrt{d_1^2 + \frac{r^2}2} < d_3 < \frac {\sqrt 3}2   r + \sqrt{d_1^2 - \frac {r^2}4}$; then, $X\subset \{S_{23+},S_{31+},S_{12-}\}$ by \eqref{eq:equid10d30}.  Note that $\{S_{12-}, S_{23+},S_{31+}\}\subset R_{110}, Y_3 \in R_{110},$ and the minimum points of $R_{110}$ are the farthest points from $Y_3$. In other words, $X$ is the points among $\{S_{12-}, S_{23+},S_{31+}\}$ which are farthest from $Y_3$. If we plot the circle centered at $Y_3$ with radius $\norm{Y_3-S_{12-}}$ (let us consider $d_1=r$ as the limiting case),  the circle meets $S_1$ and $S_2$ at one point except $S_{12-}$, respectively. 
Let this point be $M_1$ and $M_2$, respectively.  Define $M:=\norm{Z_3-M_1}=\norm{Z_3-M_2}$. Therefore, if $\sqrt{d_1^2 + \frac {r^2} 2} < M < \frac {\sqrt 3}2   r + \sqrt{d_1^2 - \frac {r^2}4}$ and $r<d_1<d_3<M$, then $X=\{S_{23+}, S_{31+}\}$; If $d_1>r, d_3=M$, then $X=\{S_{23+}, S_{31+}, S_{12-}\}$; If $d_1>r, M<d_3<    \frac {\sqrt 3}2   r + \sqrt{d_1^2 - \frac {r^2}4}$, then $X=S_{12-}$. Let us compute $M$ in detail by computing $\angle Z_3  Y_3 M_2$. Since $\norm{Z_2 - M_2} = \norm{Z_2 - S_{12-}}$ and $\norm{Y_3 - M_2}=\norm{Y_3 - S_{12-}}$, $\angle Z_2 Y_3 M_2= \angle Z_2 Y_3 S_{12-} = \frac {5\pi} 6$ and from this we have $\angle Z_3 Y_3 M_2 = \frac {2\pi} 3$. Since $\norm{M_2 Y_3} = \sqrt{d_1^2 - \frac{r^2}4} - \frac {\sqrt 3 r}2$, we have
\begin{eqnarray*}
 M^2 &=& 3r^2 + \norm{M_2 Y_3}^2 - 2\sqrt 3 r \norm{M_2 Y_3} \cos\angle Z_3 Y_3 M_2 \\
         &=&  3r^2 + \norm{M_2 Y_3}^2 + \sqrt 3 r \norm{M_2 Y_3} \\
         &=& d_1^2 + 2r^2
\end{eqnarray*}         

If  $d_1>  r$ and $d_3 > \frac {\sqrt 3}2   r + \sqrt{d_1^2 - \frac {r^2}4}$, then $X=X_{001}$ by Theorem 6(a) and (c) in \cite{Kwon}. Since $Y_3\notin X_{001}$ and $X_{001}$ are the closest points from $Y_3$, we have $X=S_{12-}$.

Hence, we proved all the cases of the lemma.
\end{proof}

 \begin{figure}
\begin{minipage}[t]{8cm}
\centerline{\epsfig{file=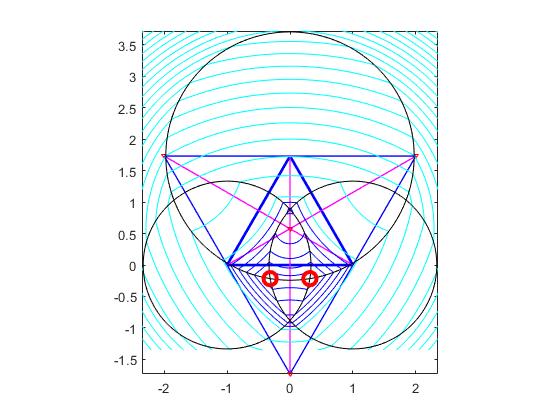, height=4cm,width=6cm,clip=1cm}}
\end{minipage}
\begin{minipage}[t]{8cm}
\centerline{\epsfig{file=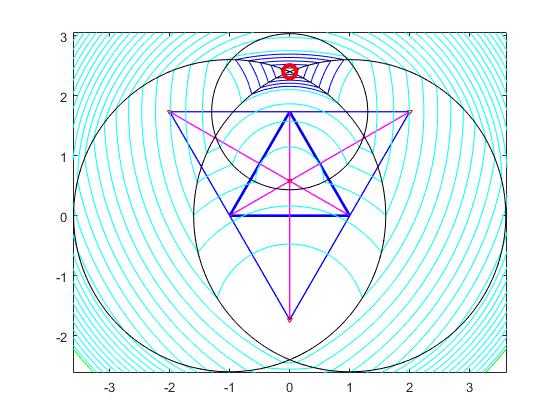, height=4cm,width=6cm,clip=1cm}}
\end{minipage}
\begin{center}
(a)\qquad\qquad\qquad\qquad\qquad\qquad\qquad\qquad\qquad\qquad(b)
\end{center}
\begin{minipage}[t]{8cm}
\centerline{\epsfig{file=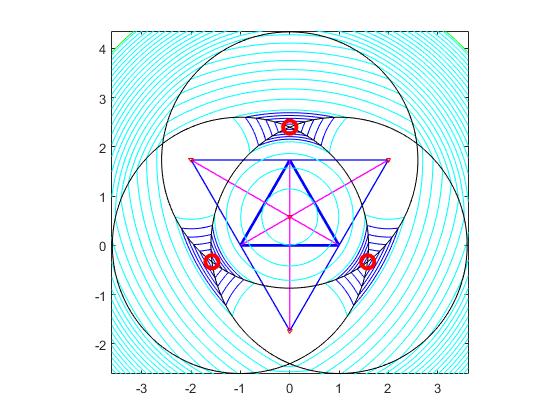, height=4cm,width=6cm,clip=1cm}}
\end{minipage}
\begin{minipage}[t]{8cm}
\centerline{\epsfig{file=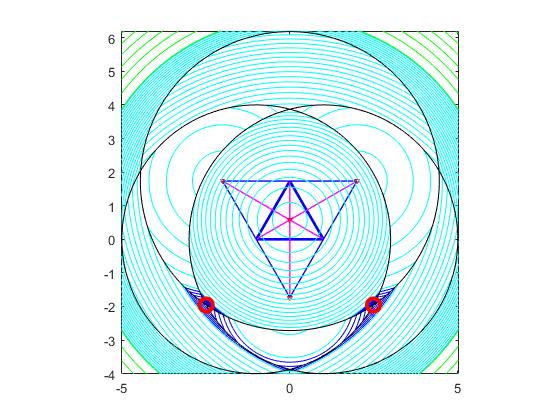, height=4cm,width=6cm,clip=1cm}}
\end{minipage}
\begin{center}
(c)\qquad\qquad\qquad\qquad\qquad\qquad\qquad\qquad\qquad\qquad(d)
\end{center}
\begin{minipage}[t]{8cm}
\centerline{\epsfig{file=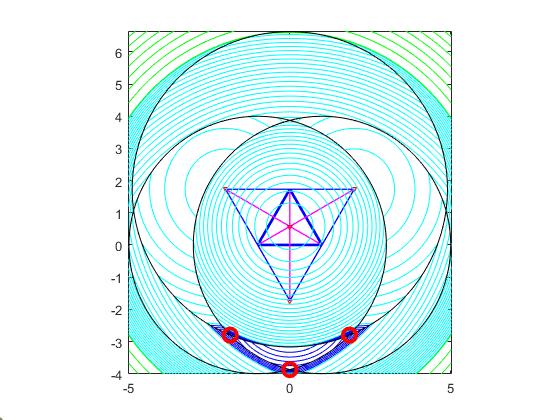, height=4cm,width=6cm,clip=1cm}}
\end{minipage}
\begin{minipage}[t]{8cm}
\centerline{\epsfig{file=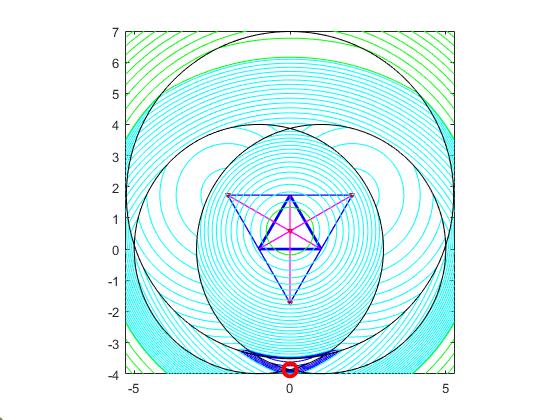, height=4cm,width=6cm,clip=1cm}}
\end{minipage}
\begin{center}
(e)\qquad\qquad\qquad\qquad\qquad\qquad\qquad\qquad\qquad\qquad(f)
\end{center}
\caption{Source locations when the measurement triangle is equilateral with $r=2$ (Theorem \ref{le:equid12})  and  when $(d_1, d_3) = $ (a) $(1.3333,1.9737)$ (b) $(2.6000,1.3000)$ (c) $(2.6000,2.6000)$ (d) $(4.0000,4.4495)$ (e)$(4.000,4.8990)$ (f)$(4.000,5.2520)$.}
\label{fig:lem11}
\end{figure}
\begin{table}
\begin{centering}
\begin{tabular}{|c||c|c|c|c|c|c|}
\hline
                                    &$O(S_{12+})$             &$O(S_{31+})$            &$O(S_{23+})$              &$O(S_{12-})$           &$O(S_{31-})$   &$O(S_{23-})$ \\
\hline \hline
Fig. \ref{fig:lem11} (a)   & 3.1726                          & {\color{red}1.2628}     &{\color{red} 1.2628}      & 2.9375                      & 7.3803              & 7.3803 \\
\hline
Fig. \ref{fig:lem11} (b)   & {\color{red}1.2438}     & 4.9420                        & 4.9420                          & 15.3838                    & 3.8720              & 3.8720\\
\hline
Fig. \ref{fig:lem11} (c)  &{\color{red} 6.3138}      &{\color{red} 6.3138}     & {\color{red} 6.3138}     & 10.3138                    & 10.3138            & 10.3138\\

\hline
Fig. \ref{fig:lem11} (d)  & 15.2144                       & {\color{red}9.9563}      &{\color{red}9.9563}       & 11.6184                    &17.7543             & 17.7543\\
\hline
Fig. \ref{fig:lem11} (e)  & 19.4164                       &{\color{red}7.4164}       &{\color{red}7.4164}       & {\color{red}7.4164}   & 19.4164            & 19.4164\\
\hline
Fig. \ref{fig:lem11} (f)  & 23.0000                       & 4.4093                          & 4.4093                         &{\color{red} 3.8328}    & 19.9929           &19.9929\\ 
\hline
\end{tabular}
\caption{The comparison of objective values at $S_{12\pm}, S_{31\pm},$ and $S_{23\pm}$ for the equilateral triangle with $r=2$. }
\label{tab:lem11}
\end{centering}
\end{table}

\section{When the measurement triangle is isosceles}   
 We will suppose that  the measurement triangle is an isosceles triangle with $\norm{Z_1 Z_3}=\norm{Z_2 Z_3}$ and $d_1=d_2$. Let $r:=\norm{Z_1 Z_2}$ and $s:=\norm{Z_3 - \frac{Z_1+Z_2}2}$.   
 Further, we will take into account the case that $d_1 \ge \sqrt{ \frac{r^2} 4 + \frac {s^2} 9}$ and $d_3 \in \left(\left|s-\sqrt{d_1^2 - \frac {r^2}4}\right|, s+\sqrt{d_1^2 - \frac {r^2}4}\right)$ for Lemma \ref{le:S12+S23+},\ref{le:S23+S12-},\ref{le:S12+S23-}, and \ref{le:isoscelesd12}. This implies that $Y_0\in D_1\cap D_2$ and 
all three disks meet, and $R_{i_1 i_2 i_3}$ has one connected component for every $i_1,i_2,i_3 \in \{0,1\}$ and thus, we have $X\subset S_{123}^\pm$ by Theorem \ref{th:main1}. 
\begin{lem}\label{le:S12+S23+}
Let $R$ be a  positive constant such that
$$ R^2:= d_1^2 -\frac{r^2}4 + s^2 \frac{4s^2 + r^2}{4s^2 + 9r^2}  - 2s \sqrt{d_1^2 - \frac{r^2}4} \frac{4s^2-3r^2}{4s^2+9r^2}, $$
then $R\in \left(\left|s-\sqrt{d_1^2 - \frac {r^2}4}\right|, s+\sqrt{d_1^2 - \frac {r^2}4}\right)$ and we have 
$$\begin{array}{ccc}
O(S_{31+})  > O(S_{12+}),   &{\mbox iff} & d_3<R,\\
O(S_{31+})  = O(S_{12+}),   &{\mbox iff} & d_3 = R,\\
O(S_{31+}) <  O(S_{12+}),   &{\mbox iff}&  d_3> R. 
\end{array}$$
Further, depending on $s/r$ and $d_1$, we have the following relations:
\begin{equation}\label{eq:R}
\begin{array}{ll}
{\mbox If}\; s= \frac{\sqrt 3}2r,&{\mbox then }\;  R=d_1.\\
&\\
{\mbox If}\;  s> \frac{\sqrt 3}2 r\;{\mbox and }&
\left\{\begin{array}{ccc}
d_1 > \sqrt{ \frac{r^2}4 + s^2 \left( \frac{4s^2 + 5r^2}{4s^2-3r^2}\right)^2},&{\mbox then }& \left|s-\sqrt{d_1^2 - \frac {r^2}4}\right|< R<\sqrt{d_1^2 - \frac{r^2} 4 - s^2},\\
d_1 = \sqrt{ \frac{r^2}4 + s^2 \left( \frac{4s^2 + 5r^2}{4s^2-3r^2}\right)^2}, &{\mbox then }& R = \sqrt{d_1^2 - \frac {r^2}4 - s^2},\\
d_1 < \sqrt{ \frac{r^2}4 + s^2 \left( \frac{4s^2 + 5r^2}{4s^2-3r^2}\right)^2}, &{\mbox then }& \sqrt{d_1^2 - \frac {r^2}4 - s^2} <R<\sqrt{d_1^2 - \frac {r^2} 4 + s^2}.
\end{array}\right.\\
&\\
{\mbox If}\;  s< \frac{\sqrt 3}2 r\;{\mbox and }&
\left\{\begin{array}{ccc}
d_1 > \sqrt{ \frac{r^2}4 + s^2 \left( \frac{4r^2}{3r^2-4s^2}\right)^2},&{\mbox then }& \sqrt{d_1^2 - \frac {r^2} 4 + s^2}<R<s+\sqrt{d_1^2 - \frac {r^2}4},\\
d_1 = \sqrt{ \frac{r^2}4 + s^2 \left( \frac{4r^2}{3r^2-4s^2}\right)^2},&{\mbox then }& R=\sqrt{d_1^2 - \frac {r^2} 4 + s^2},\\
d_1 < \sqrt{ \frac{r^2}4 + s^2 \left( \frac{4r^2}{3r^2-4s^2}\right)^2}, &{\mbox then }& \sqrt{d_1^2 - \frac {r^2}4 - s^2} <R<\sqrt{d_1^2 - \frac {r^2} 4 + s^2}.
\end{array}\right.
\end{array}
\end{equation}
\end{lem}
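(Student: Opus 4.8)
The plan is to work in coordinates adapted to the symmetry. I place $Z_1=(-r/2,0)$, $Z_2=(r/2,0)$, $Z_3=(0,s)$, so that $L=(0,0)$ and $Y_0=(0,s/3)$. Writing $h:=\sqrt{d_1^2-r^2/4}$, the two points of $S_1\cap S_2$ are $S_{12\pm}=(0,\pm h)$, and $S_{12+}=(0,h)$ is the one nearer $Z_3$. Under the standing hypotheses of this section the points $S_{12+}$ and $S_{31+}$ lie in $R_3$ (inside all three closed disks), so the first step is to exploit that on $R_3$ the objective is governed purely by the distance to the centroid: for $W\in R_3$ one has $O(W)=\sum_j d_j^2-\sum_j\norm{Z_j-Y_0}^2-3\norm{W-Y_0}^2$, a strictly decreasing function of $\norm{W-Y_0}$. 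Hence comparing $O(S_{31+})$ with $O(S_{12+})$ is the same as comparing $\norm{S_{31+}-Y_0}$ with $\norm{S_{12+}-Y_0}$ in reverse order, and the threshold value $R$ of $d_3$ is exactly the one for which $\norm{S_{31+}-Y_0}=\norm{S_{12+}-Y_0}$.

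To locate that threshold cleanly I argue geometrically. At $d_3=R$ the point $S_{31+}$ lies on the fixed circle $S_1$ and at distance $\rho:=\norm{S_{12+}-Y_0}=|h-s/3|$ from $Y_0$; the set of such points is the intersection of $S_1$ with the circle of radius $\rho$ about $Y_0$, namely the two points symmetric across the line $\overleftrightarrow{Z_1Y_0}$, one of which is $S_{12+}$ itself. Since $S_{31+}\neq S_{12+}$, the threshold point must be the reflection $P^\ast$ of $S_{12+}$ across $\overleftrightarrow{Z_1Y_0}$, which is independent of $d_3$; consequently $R=\norm{P^\ast-Z_3}$. Computing $P^\ast$ by the reflection formula across the direction $Y_0-Z_1=(r/2,s/3)$ and then evaluating $\norm{P^\ast-Z_3}^2$ is the main computational burden, and it collapses to
\[ R^2=h^2+s^2\,\frac{4s^2+r^2}{4s^2+9r^2}-2sh\,\frac{4s^2-3r^2}{4s^2+9r^2}, \]
which is the asserted constant; as a check, at $s=\tfrac{\sqrt3}2 r$ the last term vanishes and $R^2=h^2+r^2/4=d_1^2$, giving $R=d_1$. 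An equivalent route, matching the style of the proof of Theorem \ref{th:main3}, is to write $O(S_{12+})=d_3^2-(s-h)^2$ and, by the law of cosines in $\triangle S_{31+}Z_3Z_2$,
\[ O(S_{31+})=-\frac{r^2}{2\ell^2}\bigl(d_3^2+s^2-h^2\bigr)+\frac{rs}{\ell^2}\sqrt{4\ell^2 d_3^2-\bigl(d_3^2+s^2-h^2\bigr)^2},\qquad \ell^2:=s^2+\tfrac{r^2}4, \]
and then to solve $O(S_{31+})=O(S_{12+})$ for $d_3^2$; the isolate-and-square step reproduces the same $R^2$.

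For the three-way equivalence I establish that $O(S_{31+})$ is strictly decreasing in $d_3$ on the admissible interval while $O(S_{12+})=d_3^2-(s-h)^2$ is strictly increasing, so their difference is strictly monotone and vanishes only at $d_3=R$, yielding $O(S_{31+})\gtrless O(S_{12+})\iff d_3\lessgtr R$. Monotonicity of $O(S_{31+})$ follows either by differentiating the closed form above in $d_3^2$, or more conceptually from the fact that $\norm{S_{31+}-Y_0}$ increases as the circle $S_3$ expands. To see that $R$ lies in $\bigl(|s-h|,\,s+h\bigr)$ I compare $R^2$ with $(s\pm h)^2$: one finds $R^2-(s+h)^2<0$ unconditionally, and
\[ R^2-(s-h)^2=\frac{8sr^2(3h-s)}{4s^2+9r^2}, \]
which is nonnegative precisely because the standing hypothesis $d_1\ge\sqrt{r^2/4+s^2/9}$ is equivalent to $h\ge s/3$ — this is exactly why that hypothesis is imposed.

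Finally, the refined location of $R$ in \eqref{eq:R} reduces to the sign of $R^2-(h^2\pm s^2)$ with $h^2=d_1^2-r^2/4$. A short computation gives
\[ R^2-(h^2+s^2)=\frac{-2s\bigl[\,4sr^2+h(4s^2-3r^2)\,\bigr]}{4s^2+9r^2},\quad R^2-(h^2-s^2)=\frac{2s\bigl[\,s(4s^2+5r^2)-h(4s^2-3r^2)\,\bigr]}{4s^2+9r^2}. \]
When $s>\tfrac{\sqrt3}2 r$ we have $4s^2-3r^2>0$, so the first expression is always negative (giving $R<\sqrt{d_1^2-r^2/4+s^2}$) while the second changes sign exactly at $h(4s^2-3r^2)=s(4s^2+5r^2)$, i.e.\ at $d_1=\sqrt{r^2/4+s^2(\tfrac{4s^2+5r^2}{4s^2-3r^2})^2}$, which splits the three subcases against $\sqrt{d_1^2-r^2/4-s^2}$. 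When $s<\tfrac{\sqrt3}2 r$ we have $3r^2-4s^2>0$, so the second expression is always positive (giving $R>\sqrt{d_1^2-r^2/4-s^2}$) while the first changes sign at $4sr^2=h(3r^2-4s^2)$, i.e.\ at $d_1=\sqrt{r^2/4+s^2(\tfrac{4r^2}{3r^2-4s^2})^2}$, producing the three subcases against $\sqrt{d_1^2-r^2/4+s^2}$. I expect the \emph{main obstacle} to be the algebraic simplification to the stated closed form for $R^2$ — both the reflection computation and the isolate-and-square solution are lengthy — together with the care needed to discard the extraneous root introduced by squaring and to verify the monotonicity that upgrades the single equality $d_3=R$ to the full three-way iff.
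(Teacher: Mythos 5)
Your reduction of the comparison on $R_3$ to centroid distances, your reflection construction of the threshold point $P^\ast$, and your algebra for the interval containment and for \eqref{eq:R} are all correct, and they essentially parallel the paper's proof: the paper pins down the equality configuration via Lemma \ref{le:2D+} (the relation $\angle Z_3 Y_0 S_{23+}=\angle Z_1 Y_0 Z_2$) and the law of cosines at $Y_0$, which produces exactly your reflected point, and your explicit sign computations supply what the paper dismisses as ``trivial''. The genuine gap is the step you flag as upgrading the single equality $d_3=R$ to the three-way iff. The claim that $O(S_{31+})$ is strictly decreasing in $d_3$ on the admissible interval --- equivalently, that $\norm{S_{31+}-Y_0}$ increases as $S_3$ expands --- is false, and so is strict monotonicity of the difference. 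As $d_3$ increases, $S_{31+}$ sweeps the semicircle of $S_1$ on the $Z_2$-side of $\overleftrightarrow{Z_1Z_3}$; its distance to $Y_0$ first \emph{decreases} (while $S_{31+}$ approaches the ray from $Z_1$ through $Y_0$) and only then increases, and the turning point lies inside the admissible interval. Concretely, take $r=2$, $s=\sqrt3$ (equilateral), $d_1=d_2=2.6$, so $h:=\sqrt{d_1^2-r^2/4}=2.4$, the admissible interval is $(0.668,4.132)$, and $R=d_1=2.6$. Your own closed form gives $O(S_{31+})\approx 3.99$ at $d_3=1$ and $O(S_{31+})\approx 5.20$ at $d_3=1.4$, while $O(S_{12+})$ moves from $\approx0.55$ to $\approx1.51$: both $O(S_{31+})$ and the difference $O(S_{31+})-O(S_{12+})$ are \emph{increasing} there, and $\norm{S_{31+}-Y_0}$ decreases from $\approx1.475$ to $\approx1.447$ (the turning point is at $d_3\approx1.32$). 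So the monotonicity argument, in either its analytic or ``conceptual'' form, fails; the conclusion survives only because the difference, though non-monotone, happens to keep a constant sign on each side of $R$.

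The repair is already latent in your reflection argument, and it is in effect what the paper's proof relies on: characterize the equality configuration rather than invoke monotonicity. For $d_3$ in the open admissible interval, $S_{31+}$ ranges injectively over the open $Z_2$-side semicircle of $S_1$, and the only points of $S_1$ at distance $\norm{S_{12+}-Y_0}$ from $Y_0$ are $S_{12+}$ and $P^\ast$. The value $S_{31+}=S_{12+}$ occurs only at the endpoint $d_3=|s-h|$ when $d_1<\norm{Z_1Z_3}$, and never occurs when $d_1>\norm{Z_1Z_3}$ (then $S_{12+}$ lies on the wrong side of $\overleftrightarrow{Z_1Z_3}$ and is instead attained by $S_{31-}$); hence $d_3=R$ is the \emph{unique} zero of $O(S_{31+})-O(S_{12+})$ in the open interval. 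Continuity plus one sign evaluation on each side then yields the iff: for instance, as $d_3\uparrow s+h$ one has $S_{31+}\to S_{12-}$ and the difference tends to $-4sh<0$, and at the $d_3$ where $S_{31+}$ crosses the ray $Z_1Y_0$ one has $\norm{S_{31+}-Y_0}=d_1-\norm{Z_1-Y_0}<\norm{S_{12+}-Y_0}$, so the difference is positive there. With that replacement (and the routine check, which you also omit, that $S_{31+}\in D_2$ and $S_{12+}\in D_3$ so that the unsigned expressions for $O$ are the correct branches), your proof is complete and matches the paper's in substance.
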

\begin{proof}
Let us compute $d_3=R$ satisfying $\norm{Y_0 - S_{12+}}=\norm{Y_0-S_{23+}}=\norm{Y_0-S_{31+}}$, which implies $O(S_{12+})=O(S_{23+})=O(S_{31+})$. By using Lemma \ref{le:2D+}
\begin{eqnarray*} 
R^2 &=&  \left(\frac 2 3 s\right)^2 + \left( \sqrt{d_1^2 - \frac{r^2}4} -\frac s 3\right)^2 -2 \frac 2 3 s \left( \sqrt{d_1^2 - \frac{r^2}4} -\frac s 3\right) \cos \angle Z_3 Y_0 S_{23+} \\
          &=& d_1^2 + \frac 5 9 s^2 - \frac{r^2} 4 - \frac 2 3 s\sqrt{d_1^2 - \frac {r^2} 4} - \frac 4 3 s \left( \sqrt{d_1^2 - \frac{r^2}4} -\frac s 3\right)\left( 2 \left(\frac{s/3}{\sqrt{\frac{r^2}4 +\frac{s^2}9}}\right)^2 -1\right)\\
              &=& d_1^2 -\frac{r^2}4 + s^2 \frac{4s^2 + r^2}{4s^2 + 9r^2}  - 2s \sqrt{d_1^2 - \frac{r^2}4} \frac{4s^2-3r^2}{4s^2+9r^2},
\end{eqnarray*}
which proves the lemma. Here, we have used  $\angle Z_3 Y_0 S_{23+} = \angle Z_1 Y_0 Z_2 = 2 \angle Z_1 Y_0 L $, which is from (\ref{eq:2D+}).  
The computation of \eqref{eq:R} is trivial.
\end{proof}

\begin{lem}\label{le:S23+S12-}
Suppose that $d_1\ge \sqrt{\frac{r^2}4 + s^2}$. 
Let $M$ be a positive number 
such that
$$ M^2:=   d_1^2 - \frac{r^2} 4  + 2s \sqrt{d_1^2 - \frac{r^2} 4} \frac{4s^2-3r^2}{4s^2 + r^2} + s^2 \frac{4s^2 + 9r^2}{4s^2 + r^2}. $$
Then,  $M\in \left(\sqrt{d_1^2 - \frac {r^2}4}-s, \sqrt{d_1^2 - \frac {r^2}4}+s\right)$ and we have
$$
\begin{array}{ccc}
O(S_{31+})  < O(S_{12-}),   &{\mbox iff} & d_3<M,\\
O(S_{31+})  = O(S_{12-}),   &{\mbox iff} & d_3= M,\\
O(S_{31+}) >  O(S_{12-}),   &{\mbox iff}&  d_3> M 
\end{array}
$$
And further, we have the following inequalities about $M$:
$$\begin{array}{ll}
{\mbox If}\;  s= \frac{\sqrt 3}2 r, &{\mbox then }\;  M= \sqrt{d_1^2 + 2r^2}.\\
&\\
{\mbox If}\;  s> \frac{\sqrt 3}2 r, &{\mbox then }\;  M>\sqrt{d_1^2 - \frac{r^2} 4 + s^2}.\\
&\\
{\mbox If}\;  s< \frac{\sqrt 3}2 r \;{\mbox and }&
\left\{\begin{array}{ccc}
 d_1 > \sqrt{ \frac{r^2}4 + s^2 \left( \frac{4r^2}{3r^2-4s^2}\right)^2},&{\mbox then }& M<\sqrt{d_1^2 - \frac {r^2} 4 + s^2},\\
d_1 = \sqrt{ \frac{r^2}4 + s^2 \left( \frac{4r^2}{3r^2-4s^2}\right)^2},&{\mbox then }& M=\sqrt{d_1^2 - \frac {r^2} 4 + s^2},\\
d_1 < \sqrt{ \frac{r^2}4 + s^2 \left( \frac{4r^2}{3r^2-4s^2}\right)^2}, &{\mbox then }& M>\sqrt{d_1^2 - \frac {r^2} 4 + s^2}.\\
\end{array}\right.
\end{array}
$$.
\end{lem}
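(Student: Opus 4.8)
The plan is to mirror the structure of the proof of Lemma~\ref{le:S12+S23+}, but now comparing a minimum candidate of $R_3$ with one of $R_{110}$. Placing coordinates $Z_1=(-r/2,0)$, $Z_2=(r/2,0)$, $Z_3=(0,s)$, so that $L=(0,0)$ and $Y_0=(0,s/3)$, the two points of $S_1\cap S_2$ are $S_{12\pm}=(0,\mp\sqrt{d_1^2-r^2/4})$; since the working hypothesis $d_3<s+\sqrt{d_1^2-r^2/4}=\norm{Z_3-S_{12-}}$ places $S_{12-}$ outside the third disk, $S_{12-}\in R_{110}$ and $O(S_{12-})=\left(s+\sqrt{d_1^2-r^2/4}\right)^2-d_3^2$, a clean, strictly decreasing function of $d_3$. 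By contrast $S_{31+}\in R_3$ is a farthest-from-$Y_0$ candidate, so it sits in a different region, and the two objective values must be compared through explicit formulas rather than through equal distances to a common centroid.

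For $O(S_{31+})$ I would use that $S_{31+}$ lies inside the second disk, whence $O(S_{31+})=d_1^2-\norm{Z_2-S_{31+}}^2$ (using $d_2=d_1$), and evaluate $\norm{Z_2-S_{31+}}^2$ by the law of cosines in the triangle $Z_2Z_3S_{31+}$, with $\norm{Z_3-S_{31+}}=d_3$ and $\norm{Z_2Z_3}=\ell:=\sqrt{r^2/4+s^2}$. The needed angle splits as $\angle S_{31+}Z_3Z_2=\angle Z_1Z_3Z_2-\angle S_{31+}Z_3Z_1$ (the branch in which $S_{31+}$ sits on the $Z_2$ side, dual to the $S_{31-}$ computation used in the proof of Theorem~\ref{th:main3}), so that $\cos\angle S_{31+}Z_3Z_2$ expands via $\cos\angle S_{31+}Z_3Z_1=\frac{d_3^2+\ell^2-d_1^2}{2d_3\ell}$ together with the apex values $\cos\angle Z_1Z_3Z_2=\frac{4s^2-r^2}{4s^2+r^2}$ and $\sin\angle Z_1Z_3Z_2=\frac{4sr}{4s^2+r^2}$. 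Setting $O(S_{31+})=O(S_{12-})$ and simplifying (the $-d_3^2$ terms cancel) should collapse to the asserted closed form for $M^2$; the extra factor that distinguished $R^2$ in Lemma~\ref{le:S12+S23+} is absent here precisely because that computation used the centroid triangle ($\norm{Y_0Z_3}=2s/3$) whereas this one uses the full apex triangle ($\norm{Z_3Z_2}=\ell$), which explains the denominators $4s^2+r^2$ in place of $4s^2+9r^2$.

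Once $M$ is identified I would deduce the three iff statements from a single monotonicity claim: $O(S_{12-})-O(S_{31+})$ is strictly decreasing on the admissible interval for $d_3$, hence changes sign exactly once, at $d_3=M$. After the cancellation the difference has the shape (decreasing arc of a circle in the variable $d_3^2$) plus (monotone affine term), which is handled verbatim by the change-of-variable argument used in the proof of Theorem~\ref{le:equid12}; the square-root/angle term is the one to watch, since its branch must be pinned down on the whole interval. The range claim $M\in\left(\sqrt{d_1^2-r^2/4}-s,\sqrt{d_1^2-r^2/4}+s\right)$ is then just the statement that the crossing is interior to the admissible $d_3$-window, which coincides with that window because the hypothesis $d_1\ge\sqrt{r^2/4+s^2}$ forces $\sqrt{d_1^2-r^2/4}\ge s$.

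Finally, the case distinctions are obtained by comparing $M$ with the benchmark $d_3^0=\sqrt{d_1^2-r^2/4+s^2}$, which by Lemma~\ref{le:d10d30} is exactly the $d_3$ at which $O(S_{12+})=O(S_{12-})$; substituting $d_3=d_3^0$ reduces the sign of $M^2-(d_3^0)^2$ to the sign of $4s^2-3r^2$ and, in the subcritical regime $s<\tfrac{\sqrt3}{2}r$, to whether $d_1$ exceeds the threshold $\sqrt{r^2/4+s^2\left(\frac{4r^2}{3r^2-4s^2}\right)^2}$ at which $M=d_3^0$. I expect the main obstacle to be the bookkeeping in the second paragraph: getting the square-root (angle) term to combine with the quadratic part so that everything telescopes into the stated $M^2$, and making sure the correct branch of $\angle S_{31+}Z_3Z_2$ is used throughout so that the coefficient of $2s\sqrt{d_1^2-r^2/4}$ comes out as $+\frac{4s^2-3r^2}{4s^2+r^2}$ rather than its negative.
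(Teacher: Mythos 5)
Your computational route is genuinely different from the paper's, and in principle it can be made to work. The paper never computes $O(S_{31+})$ from the apex triangle at all: it uses the fact that $S_{12-},S_{23+},S_{31+}$ all lie in $\overline{R_{110}}$, where $O(W)=\mathrm{const}-\norm{W-Y_3}^2$ with $Y_3=Z_1+Z_2-Z_3$, so the whole comparison is a comparison of distances to $Y_3$; it then takes the circle centered at $Y_3$ through $S_{12-}$, lets $M_2$ be its second intersection with $S_2$, obtains $\angle Z_3Y_3M_2=\pi-2\angle Z_3Y_3Z_2$ from congruent triangles, and computes $M=\norm{Z_3-M_2}$ by the law of cosines in triangle $Z_3Y_3M_2$ (with $\norm{Z_3-Y_3}=2s$). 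Your apex-triangle computation, carried through correctly, does reproduce the stated $M^2$, and your final paragraph (reducing the displayed inequalities to the sign of $4s^2-3r^2$ and to the threshold value of $d_1$) is exactly how those inequalities are obtained.

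The genuine gap is your monotonicity claim: $O(S_{12-})-O(S_{31+})$ is \emph{not} strictly decreasing in $d_3$ on the admissible window, and the iff statements cannot be extracted that way. Via the $Y_3$ picture, this difference equals $\norm{S_{31+}-Y_3}^2-\norm{S_{12-}-Y_3}^2$, and $\norm{S_{31+}-Y_3}$ first decreases and then increases as $d_3$ grows: in your variable $w=d_3^2$ the square-root ("arc") term peaks at $w=d_1^2+\frac{r^2}4+s^2$, which lies \emph{inside} the window whenever $s\sqrt{d_1^2-r^2/4}>r^2/4$ (always, e.g., in the equilateral case). Moreover, at the right endpoint $d_3=s+\sqrt{d_1^2-r^2/4}$ the circle $S_3$ passes through $S_{12-}$, and since $S_{12-}$ lies on the $Z_2$-side of the line $Z_1Z_3$ it is then the point of $S_1\cap S_3$ nearest $Z_2$; hence $S_{31+}\to S_{12-}$ and the difference tends to $0$ there. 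A strictly decreasing function ending at $0$ would be positive on the whole interior, i.e., your mechanism would prove there is \emph{no} interior sign change — contradicting the very statement (and the paper's own equilateral data: $r=2$, $d_1=4$ gives $M=\sqrt{24}\approx 4.90$ strictly inside the window $(2.14,5.61)$, with the difference negative on $(M,5.61)$ and returning to $0$ at the endpoint). This also explains why the template you borrow from Theorem 7 is inapplicable here: there the pair compared is $S_{31-}$ versus $S_{12+}$, and the corresponding arc has its apex at $u=-r^2$, outside the interval, so that difference really is monotone; for the present pair the apex falls inside. Concretely, when you rationalize your equation you get a quadratic in $d_3^2$ whose two roots are $M^2$ and $\left(s+\sqrt{d_1^2-r^2/4}\right)^2$; you must recognize the second as the degenerate endpoint root and discard it, and then obtain the sign pattern from a test point together with the fact that the difference vanishes only at those two values — or, as the paper implicitly does, from the fact that two distinct circles meet in at most two points, so $S_{23+}$, traversing its arc of $S_2$, crosses the $Y_3$-circle through $S_{12-}$ exactly once before reaching $S_{12-}$ itself. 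Finally, your dismissal of the range claim $M\in\left(\sqrt{d_1^2-r^2/4}-s,\ \sqrt{d_1^2-r^2/4}+s\right)$ as automatic is too quick: in the flat case with $d_1$ beyond the threshold (e.g., $r=2$, $s=\tfrac12$, $d_1=3$, where $M\approx 1.90<\sqrt{d_1^2-r^2/4}-s\approx 2.33$) the crossing is not interior at all, so that containment needs a restriction rather than a one-line remark.
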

\begin{proof}
In this case, we have $Y_3 \in R_{110}$. If we plot the circle centered at $Y_3$ with radius $\norm{Y_3-S_{12-}}$,  the circle meets $S_1$ and $S_2$ at one point except $S_{12-}$, respectively. Let this point be $M_1$ and $M_2$, respectively.  Define $M:=\norm{Z_3-M_1}=\norm{Z_3-M_2}$. 
Let us compute $M$ in detail. Since $\norm{Z_2 - M_2} = \norm{Z_2 - S_{12-}}$ and $\norm{Y_3 - M_2}=\norm{Y_3 - S_{12-}}$, we have
$$\angle Z_2 Y_3 M_2= \angle Z_2 Y_3 M_2 - \angle Z_3 Y_3 Z_2 = \angle Z_2 Y_3 S_{12-} -\angle Z_3 Y_3 Z_2 = \pi - 2\angle Z_3 Y_3 Z_2.$$
Since $\cos\angle Z_3 Y_3 Z_2 = \frac{s}{\sqrt{\frac{r^2}4 +s^2}}$ and $\norm{M_2 Y_3}= \sqrt{d_1^2 - \frac{r^2}4} - s$, we have
\begin{eqnarray*}
 M^2 &=& 4s^2 + \norm{M_2 Y_3}^2 - 4s \norm{M_2 Y_3} \cos\angle Z_3 Y_3 M_2 \\
         &=&  4s^2 + \norm{M_2 Y_3}^2 - 4s \norm{M_2 Y_3} (1-2\cos^2 \angle Z_3 Y_3 Z_2) \\
         &=& d_1^2 - \frac{r^2} 4  + 2s \sqrt{d_1^2 - \frac{r^2} 4} \frac{4s^2-3r^2}{4s^2 + r^2} + s^2 \frac{4s^2 + 9r^2}{4s^2 + r^2},
\end{eqnarray*}
which proved the lemma.
\end{proof}

Note that the constant $P= \sqrt{ \frac{r^2}4 + s^2 \left( \frac{4s^2 + 5r^2}{4s^2-3r^2}\right)^2}$ in Theorem \ref{th:main3} is only defined when $s>\frac{\sqrt 3 r}2$. 
\begin{lem}\label{le:S12+S23-}
Assume that  $d_1 > \sqrt{\frac{r^2}4 + s^2}$. 
Then, there is $d_3^*  \in \left(\sqrt{d_1^2 - \frac {r^2}4}-s, \sqrt{d_1^2 - \frac{r^2}4 - s^2}\right) $ when $s>\frac{\sqrt 3} 2 r  \mbox{ and } d_1 >  P$, such that 
\begin{subeqnarray}
{\mbox If}\; \left( s\le \frac{\sqrt 3} 2 r  \mbox{ or } d_1 <  P \right)&\mbox{ and }\; d_3 \in \left[\sqrt{d_1^2 - \frac {r^2}4}-s, \sqrt{d_1^2 - \frac{r^2}4 - s^2}\right] ,
\;{\mbox then}\;  O(S_{23-}) > O(S_{12+}) .\slabel{eq:S12+S23-1}\\
&\nonumber\\
{\mbox If}\; s>\frac{\sqrt 3} 2 r, d_1 =  P, \;{\mbox and}&
\left\{\begin{array}{ccc}
d_3 \in \left[\sqrt{d_1^2 - \frac {r^2}4}-s, \sqrt{d_1^2 - \frac{r^2}4 - s^2}\right),                 &{\mbox then}\;  O(S_{23-})  >  O(S_{12+}),\\
d_3 = \sqrt{d_1^2 - \frac{r^2}4 - s^2},                                                                                  &{\mbox then}\;  O(S_{23\pm}) = O(S_{12+}).\\
\end{array}\right.\slabel{eq:S12+S23-2}\\
&\nonumber\\
{\mbox If} s>\frac{\sqrt 3} 2 r,  d_1 >  P, \;{\mbox and}& 
\left\{\begin{array}{ccc}
d_3 \in \left[\sqrt{d_1^2 - \frac {r^2}4}-s, d_3^*\right),                                                         &{\mbox then}\;  O(S_{23-})  > O(S_{12+}) ,\\
d_3 = d_3^*,                                                                                                                          &{\mbox then}\;  O(S_{23-})  = O(S_{12+}) ,\\
d_3 \in \left[d_3^*, \sqrt{d_1^2 - \frac{r^2}4 - s^2}   \right],                                                 &{\mbox then}\;  O(S_{23-})  < O(S_{12+}). 
\end{array}\right.\slabel{eq:S12+S23-3}
\end{subeqnarray}
\end{lem}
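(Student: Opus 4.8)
The plan is to collapse the comparison of $O(S_{23-})$ against $O(S_{12+})$ into the sign of a single concave function of $d_3^2$, and then read the three cases off its two endpoint values. Throughout write $a:=\sqrt{d_1^2-\frac{r^2}4}$ and $\ell:=\norm{Z_1 Z_3}=\sqrt{\frac{r^2}4+s^2}$; the hypothesis $d_1>\sqrt{\frac{r^2}4+s^2}$ gives $a>s$. First I would record the two objective values on the interval $\big[a-s,\sqrt{a^2-s^2}\big]$. Since $\norm{Z_3-S_{12+}}=a-s\le d_3$ there and $S_{12+}$ lies in the closed disk bounded by $S_3$, we get $O(S_{12+})=d_3^2-(a-s)^2$. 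For $O(S_{23-})$ I would use $S_{23-}\in R_0$, so that $O(S_{23-})=\norm{S_{23-}-Z_1}^2-d_1^2$, and compute $\norm{S_{23-}-Z_1}^2$ by the law of cosines in $\triangle Z_1 Z_3 S_{23-}$ with $\angle Z_1 Z_3 S_{23-}=\angle Z_1 Z_3 Z_2+\angle Z_2 Z_3 S_{23-}$. Inserting $\cos\angle Z_1 Z_3 Z_2=\frac{4s^2-r^2}{4s^2+r^2}$, $\sin\angle Z_1 Z_3 Z_2=\frac{4rs}{4s^2+r^2}$ together with the law-of-cosines values of $\cos\angle Z_2 Z_3 S_{23-}$ and $\sin\angle Z_2 Z_3 S_{23-}$ yields
\[
O(S_{23-})=\frac{2r^2}{4s^2+r^2}\big(\ell^2+d_3^2-d_1^2\big)+\frac{4rs}{4s^2+r^2}\sqrt{4\ell^2 d_3^2-\big(\ell^2+d_3^2-d_1^2\big)^2}.
\]

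Next I would set $v:=\ell^2+d_3^2-d_1^2$, an increasing affine function of $d_3^2$ that runs over $[-2s(a-s),0]$ as $d_3$ runs over $\big[a-s,\sqrt{a^2-s^2}\big]$; the right end $v=0$ is exactly the relation $d_1^2=d_3^2+\norm{Z_1 Z_3}^2$. Completing the square rewrites the radicand as $4\ell^2 d_1^2-(v-2\ell^2)^2$, so the difference becomes
\[
g(v):=O(S_{23-})-O(S_{12+})=\frac{r^2-4s^2}{4s^2+r^2}\,v+\frac{4rs}{4s^2+r^2}\sqrt{4\ell^2 d_1^2-(v-2\ell^2)^2}+C,
\]
with $C$ independent of $d_3$. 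Because $\frac{4rs}{4s^2+r^2}>0$ and the square root traces an upper semicircle in $v$, the function $g$ is \emph{concave} on the interval for every $r,s$. This concavity is the structural heart of the argument: the superlevel set $\{g\ge 0\}$ is then an interval, so the entire sign pattern of $g$ is forced by its two endpoint values.

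It remains to evaluate the endpoints. At the left end $d_3=a-s$ the three circles are concurrent at $S_{12+}$, so $O(S_{12+})=0$; moreover $O(S_{23-})=0$ would put $S_{23-}\in S_1\cap S_2\cap S_3$, a second common point of three distinct circles, which is impossible, hence $g(a-s)=O(S_{23-})>0$. At the right end $v=0$ we have $d_1^2=d_3^2+\norm{Z_1 Z_3}^2$, so $O(S_{23-})=O(S_{23+})$ by Lemma \ref{le:d10d30} and the configuration meets the hypotheses of Theorem \ref{th:main3}; its conclusion gives $g(0)>0$ when $s\le\frac{\sqrt3}2 r$ or $d_1<P$, $g(0)=0$ when $s>\frac{\sqrt3}2 r$ and $d_1=P$, and $g(0)<0$ when $s>\frac{\sqrt3}2 r$ and $d_1>P$. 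Concavity then finishes everything. If $g(0)>0$, then $g>0$ on the whole closed interval, which is \eqref{eq:S12+S23-1}; if $g(0)=0$, then $g>0$ on the half-open interval with $g=0$ at the right end, which is \eqref{eq:S12+S23-2}; if $g(0)<0$, then convexity of $\{g\ge 0\}$ forces a unique zero $v^*\in(-2s(a-s),0)$, that is, a unique $d_3^*\in\big(a-s,\sqrt{a^2-s^2}\big)$ with $g>0$ before it and $g<0$ after, which is \eqref{eq:S12+S23-3}.

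The step I expect to fight hardest is the law-of-cosines evaluation of $O(S_{23-})$: one must verify that $S_{23-}$ lies on the far side of the line $\overleftrightarrow{Z_2 Z_3}$ from $Z_1$ (so the relevant angle is the sum $\angle Z_1 Z_3 Z_2+\angle Z_2 Z_3 S_{23-}$ rather than the difference), select the positive branch of the square root, and confirm $S_{23-}\in R_0$ so that only the $Z_1$-term of $O$ survives. Once that formula is correct, recognizing the completed-square semicircle, and hence the concavity of $g$, is the decisive simplification; the endpoint bookkeeping and the appeal to Theorem \ref{th:main3} for the sign of $g(0)$ are then routine.
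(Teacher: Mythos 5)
Your reduction is, up to an affine change of variable, essentially the paper's own proof: the same two formulas for $O(S_{12+})$ and $O(S_{23-})$ (the paper writes them for $O(S_{31-})$, equal by symmetry), the same observation that the difference is linear-plus-semicircle and hence concave (the paper parametrizes $d_3^2=u^2+2tsu$, $t\in[0,1]$, which is your $v$ rescaled by $v=2(t-1)su$), and the same case split driven by the sign at the right endpoint. Your one genuine departure --- reading the right-endpoint sign off Theorem \ref{th:main3} instead of computing $g(1)$ explicitly and relating it to $d_1=P$, as the paper does --- is legitimate and non-circular, since Theorem \ref{th:main3} is proved independently of this lemma, and it is a pleasant shortcut. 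The problem is your left-endpoint evaluation, and this is a genuine gap. At $d_3=\sqrt{d_1^2-\frac{r^2}4}-s$ the three circles are concurrent at $S_{12+}$, and $S_{23-}$ does not merely have the same objective value there: it \emph{is} $S_{12+}$. Indeed $S_{12+}$ lies on the opposite side of the line $\overleftrightarrow{Z_2 Z_3}$ from $Z_1$, hence is precisely the point of $S_2\cap S_3$ farthest from $Z_1$, i.e.\ $S_{23-}=S_{12+}$ at that radius. So your objection that $O(S_{23-})=0$ would create ``a second common point of three distinct circles'' does not apply --- there is no second point, only the one concurrency point --- and in fact $g=0$ at the left endpoint, not $g>0$. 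You can verify this inside your own formula: at $v=-2s(a-s)$ the radicand equals $4(a-s)^2(\ell^2-s^2)=r^2(a-s)^2$ and the three terms cancel exactly. The paper's proof states $g(0)=0$ explicitly.

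The damage is concentrated in case \eqref{eq:S12+S23-3}. With the true endpoint values, $g=0$ on the left and $g<0$ on the right, concavity alone no longer forces the claimed pattern: a concave function vanishing at the left endpoint and negative at the right endpoint may simply decrease and remain negative on the whole interior, in which case there is no interior zero $d_3^*$ with $g>0$ to its left. To repair the argument you must additionally show that $g$ increases away from the left endpoint --- for instance compute the one-sided derivative there (the paper's $g'(0)$, which a short calculation shows equals $\frac{2s}{\ell^2}\bigl(u s^2+\frac{u r^2}4+2s^3+\frac{s r^2}2\bigr)>0$ with $u=a-s$) --- or exhibit one interior point where $g>0$; only then does concavity yield the unique interior sign change. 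Cases \eqref{eq:S12+S23-1} and \eqref{eq:S12+S23-2} survive with the corrected endpoint value, since a nonconstant concave function vanishing at the left end and nonnegative at the right end is positive on the open interior; but note that the strict inequality your case (a) asserts \emph{at} $d_3=\sqrt{d_1^2-\frac{r^2}4}-s$ then fails --- a defect your write-up shares with the lemma's closed-interval statement, and which the paper's proof silently corrects by concluding positivity only for $t\in(0,1]$.
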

\begin{proof}  
Using 
$$ \cos\angle S_{31-}  Z_3 Z_2 =\cos(\angle S_{31-} Z_3 Z_1+ 2\angle Z_1 Z_3 L) = \frac {s^2-\frac{r^2}4}{s^2+\frac{r^2}4}  \cos \angle S_{31-} Z_3 Z_1 - \frac {rs}{s^2+\frac{r^2}4} \sin \angle  S_{31-} Z_3 Z_ 1, $$ 
let us compute $O(S_{12+})$ and $O(S_{31-})$ as follows:
\begin{eqnarray*}
O(S_{12+}) &=& d_3^2 - \norm{ Z_3 - S_{12+}}^2 = d_3^2 - \left(\sqrt{d_1^2 - \frac{r^2}4} - s\right)^2 =  d_3^2 - \left(d_1^2 -\frac{r^2}4\right) - s^2 + 2s\sqrt{d_1^2 - \frac{r^2}4} 
\\
O(S_{31-})&=& -d_2^2 + \norm{Z_2 - S_{31-}}^2  = - d_1^2 + \left(  d_3^2 + |Z_1 Z_3|^2 - 2d_3 |Z_1 Z_3| \cos (\angle S_{31-} Z_3 Z_1 + \angle Z_1 Z_3 Z_2) \right)
\\
                   &=& d_3^2 + |Z_1 Z_3|^2 - d_1^2  -\frac 1 {|Z_1 Z_3|^2} \left( (d_3^2 + |Z_1 Z_3|^2 - d_1^2) (s^2 - \frac{r^2}4) - rs \sqrt{4d_3^2|Z_1 Z_3|^2 -  (d_3^2 + |Z_1 Z_3|^2 - d_1^2) ^2} \right)
\end{eqnarray*} 

Let $u =\sqrt{d_1^2 - \frac{r^2} 4} - s$. Then $u>0$. Next let us parametrize $d_3$ with respect to $t$ such as $d_3^2 = u^2 + 2tsu$. Then we have $0\le t \le 1, 0<u\le d_3\le u^2+2su,$ and $d_3^2 - d_1^2 + |Z_1 Z_3|^2 = 2(t-1)su$. 
Inserting these parametrizatons into $O(S_{12+})$ and $O(S_{31-})$, we have the following function for $t\in [0,1]$.
$$
g(t) = O(S_{31-}) - O(S_{12+}) = \frac{2s}{|Z_1 Z_3|^2} \left( -u\left[\left(s^2 - \frac{r^2}4\right)t + \frac{r^2}2\right]   + r\sqrt{ -s^2 u^2 t^2 + 2su(s^2 + \frac{r^2}4 + su)t + \frac{r^2 u^2}4 }\right)
$$
Note that $g(0)=0$ and the first function on the right hand side is a linear function and the second function is an increasing concave function. Hence,  we have $g''<0$ on $t\in[0,1]$.
Therefore, under the condition that $g(1)>0$, we have $g(t)>0$ iff $t\in(0,1]$ and $g(1)=0$ implies $g(t)>0$ if and only if $t\in (0,1)$. And $g(1)<0$ implies that there is a $t^*$ in $(0,1)$ such that 
$$\begin{array}{ccc}
    g(t) >0      &\mbox{is equivalent to} & t\in(0, t^*) \\
    g(t)<0        &\mbox{is equivalent to} & t\in(t^*,1)
\end{array}
$$    
Let $d_3^* = u^2 + 2t^* su, 0<t^*<1$. 

Let us find the condition on $u$ such that $g(1)=0$.  From
\begin{eqnarray*}
g(1) &=&  \frac{2s}{|Z_1 Z_3|^2} \left( -u\left(s^2 + \frac{r^2}4\right)  + r\sqrt{\left(s^2 + \frac{r^2}4\right)( u^2 + 2su)}\right)\\
      &=&   \frac{2sr\sqrt u}{|Z_1 Z_3|} \left(   \sqrt{u + 2s} -\frac{\sqrt{s^2 + \frac{r^2}4}}r \sqrt u\right)
\end{eqnarray*}
we have,
$$
\begin{array}{ccc}
g(1)>0 &\longleftrightarrow   &s\le \frac{\sqrt 3} 2 r  \mbox{ or } u <  Q    \\
g(1)=0 &\longleftrightarrow   &s> \frac{\sqrt 3} 2 r  \mbox{ and } u =  Q   \\
g(1)<0 &\longleftrightarrow   &s> \frac{\sqrt 3} 2 r  \mbox{ and } u >  Q
\end{array}
 $$
 where $Q:=\frac{2r^2s}{s^2- \frac 3 4 r^2}$. Since $u=Q$ implies $d_1 = P$  and vice versa, we have proved the lemma.  
 \end{proof} 
   
\begin{theorem}\label{le:isoscelesd12}
Suppose that $\norm{Z_1-Z_3}=\norm{Z_2-Z_3}$ and $d_1 = d_2$. Then, the source $X$ is determined as follows:
 
\begin{tabular}{|c|c|c||c|}
\hline
    s             &  $ d_1=d_2$                                                                                                 & $d_3$   											& $X$ \\
\hline \hline
$(0,\infty)$ &$\left(0,\sqrt{ \frac{r^2} 4 + \frac {s^2} 9} \right]$                                        & $\left(0, \frac{2s} 3\right]$  								& $Y_0$\\
\hline
                  &$\left(0,\sqrt{ \frac{r^2} 4 + s^2}\right]$                                  		         & $\left[ 2s, \infty \right)$ 									&$Y_3$\\
\hline
                  &$\left(0, \frac r 2\right]$                                                           		         & $\left[\frac {2s}3 , 2s \right]$ 								& $N_3$\\
\hline
                  &$\left(\frac r 2, \sqrt{ \frac{r^2} 4 + \frac {s^2} 9}\right]$        		         & $\left[\frac {2s}3, s- \sqrt{ d_1^2 - \frac{r^2}4 }\right]$  				& $N_3$\\
\hline
                  &$\left(\frac r 2, \sqrt{ \frac{r^2} 4 + s^2}\right]$                     		         & $\left[ s+ \sqrt{ d_1^2 - \frac{r^2}4},   2s \right]$ 					&$N_3$\\    
\hline
                  &$\left(\frac r 2, \sqrt{ \frac{r^2} 4 + \frac {s^2} 9}\right)$        		         & $\left(s - \sqrt{ d_1^2 - \frac{r^2}4}, s + \sqrt{ d_1^2 - \frac{r^2}4 }\right)$          & $\{S_{23+}, S_{31+}\}$  \\
\hline
                  &$\left[\sqrt{ \frac{r^2} 4 + \frac {s^2} 9},\infty\right)$                        	         & $\left(0, \left|s - \sqrt{d_1^2 - \frac{r^2}4}\right| \right]$ 	                      	& $S_{12+}$\\
\hline    
                  &$\left[\sqrt{ \frac{r^2} 4 + \frac {s^2} 9},\sqrt{ \frac{r^2} 4 + s^2}\right)$   & $\left[s - \sqrt{d_1^2 - \frac{r^2}4}, R \right)$                                                        & $S_{12+}$\\
\hline
                  &                                                                                                                     & $R$                                                                                                                        &$\{S_{12+}, S_{23+}, S_{31+}\}$\\
\hline
                  &                                                                                                                     & $\left(R, s + \sqrt{d_1^2 - \frac {r^2} 4}\right]$                                                     &$\{S_{23+},S_{31+}\}$\\                     
\hline                                                                                  
$\left(0, \frac{\sqrt 3 r}2\right)$     & $\left[ \sqrt{ \frac{r^2} 4 + s^2},\sqrt{\frac{r^2}4 + s^2 \left(\frac{4r^2}{3r^2-4s^2}\right)^2}\right)$   & $\left[ \sqrt{d_1^2 - \frac{r^2}4} - s, R\right)$                                                                                &$ S_{12+}$\\ 
\hline
                                                      &                                                                                                                                                                & $R$                                                                                                                                                &$S_{123}^+$\\
\hline
                                                      &                                                                                                                                                                & $\left(R,M \right)$                                                                                                                           &$\{S_{23+},S_{31+}\}$\\
\hline
                                                      &                                                                                                                                                                &$ M$      						           			                                        &$\{ S_{23+}, S_{31+}, S_{12-} \}$\\
\hline 
                                                      &      							                                                                                        &$\left( M, \infty\right) $  	                                                                                                                   &$S_{12-}$\\                                                                                               
\hline
                                                      & $\sqrt{\frac{r^2}4 + s^2 \left(\frac{4r^2}{3r^2-4s^2}\right)^2}$                                                             &$\left[ \sqrt{d_1^2 - \frac{r^2}4}-s,\sqrt{d_1^2 - \frac{r^2}4 + s^2}\right)$                                     &$S_{12+}$\\    
\hline
                                                      &                                                                                                                                                                &$\sqrt{d_1^2 - \frac{r^2}4 + s^2}$                                                                                                  &$\{S_{12\pm},S_{23+},S_{31+}\}$\\ 
\hline
                                                      &                                                                                                                                                                &$\left( \sqrt{d_1^2 - \frac{r^2}4 + s^2},\infty \right)$                                                                      &$S_{12-}$\\ 
\hline                                                      
                                                      & $\left(\sqrt{\frac{r^2}4 + s^2 \left(\frac{4r^2}{3r^2-4s^2}\right)^2},\infty \right)$                                  &$\left[ \sqrt{d_1^2 - \frac{r^2}4}-s,\sqrt{d_1^2 - \frac{r^2}4 + s^2}\right)$                                    &$S_{12+}$\\                                                       
\hline 
                                                      &                                                                                                                                                                & $\sqrt{d_1^2 - \frac{r^2}4 + s^2}$                                                                                                 &$S_{12\pm}$\\                                                       
\hline 
                                                      &                                                                                                                                                                &$\left( \sqrt{d_1^2 - \frac{r^2}4 + s^2},\infty \right)$                                                                      &$S_{12-}$\\ 
\hline                                                      
\end{tabular}

\begin{tabular}{|c|c|c||c|}
\hline
    s                                                  &  $ d_1=d_2$                                                                                                                                                               & $d_3$   												& $X$ \\
\hline \hline
$\left( \frac{\sqrt 3r}2,\infty\right)$ & $\left[ \sqrt{ \frac{r^2} 4 + s^2}, \sqrt{\frac{r^2}4 + s^2 \left(\frac{4s^2+5r^2}{4s^2-3r^2}\right)^2} \right)$           & $\left[ \sqrt{d_1^2 - \frac{r^2}4} - s, R \right)$                                                             &$S_{12+}$\\ 
\hline
                                                       &                                                                                                                                                                                    & $R$                                                                                                                              &$S_{123}^+$\\
\hline
                                                       &                                                                                                                                                                                    &$\left[ R, \sqrt{d_1^2 - \frac{r^2}4 - s^2} \right)$                                                           &$\{S_{23+},S_{31+}\}$\\
\hline                                                        
                                                      & $\sqrt{\frac{r^2}4 + s^2 \left(\frac{4s^2+5r^2}{4s^2-3r^2}\right)^2}$                                                                        & $\left[ \sqrt{d_1^2 - \frac{r^2}4} - s, \sqrt{d_1^2 - \frac{r^2}4 - s^2} \right)$            &$S_{12+}$\\                                                                    
\hline                                                      
                                                      &                                                                                                                                                                                     & $\sqrt{d_1^2 - \frac{r^2}4 - s^2}$                                                                                &$\{S_{12+},S_{23\pm},S_{31\pm}\}$\\  
\hline
                                                      &                                                                                                                                                                                    & $\left( \sqrt{d_1^2 - \frac{r^2}4 - s^2},\sqrt{d_1^2 - \frac{r^2}4 + s^2}\right]$           &$\{S_{23+},S_{31+}\}$\\     
\hline
                                                    &    $\left(\sqrt{\frac{r^2}4 + s^2 \left(\frac{4s^2+5r^2}{4s^2-3r^2}\right)^2}, \infty\right)$                                         & $\left( \sqrt{d_1^2 - \frac{r^2}4} - s,d_3^* \right)$                                                        &  $ S_{12+}$\\
\hline
                                                    &                                              			                                                                                                         & $d_3^*$ 					                      			                    & $\{ S_{12+}, S_{23-}, S_{31-} \}$\\
\hline
		                               &							                                                                                                         & $\left( d_3^*,    \sqrt{d_1^2 -\frac{r^2}4 + s^2} \right)$                                                 & $\{S_{23-}, S_{31-}\}$\\              
\hline
                                                    &                                                                                                                                                                                   & $ \sqrt{d_1^2 - \frac{r^2}4 + s^2}$                                                                                & $\{ S_{23\pm}, S_{31\pm} \}$\\                                  
\hline
                                                     &$\left[ \sqrt{ \frac{r^2} 4 + s^2},  \infty \right)$                                                                                                         & $\left( \sqrt{d_1^2 -\frac{r^2}4 + s^2},  M  \right)$ 		            	                  & $\{ S_{23+}, S_{31+} \}$\\
\hline
                                                     &                                                                                                                                                                                 & $ M$      						           			                  & $\{ S_{23+}, S_{31+}, S_{12-} \}$\\
\hline 
                                                    &      							                                                                                                        & $\left( M, \infty\right) $  	                                                                                             & $S_{12-}$\\                               
\hline                                                                                                                                                                                                                                                                                             
\end{tabular}
\end{theorem}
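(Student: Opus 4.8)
The plan is to reduce the whole table to a finite ranking problem and then to partition the parameter space $(s,d_1,d_3)$ into the geometric regimes already separated in the two tables. First I would use the symmetry $\norm{Z_1 Z_3}=\norm{Z_2 Z_3}$, $d_1=d_2$: by Lemma \ref{le:d10d30} one has $O(S_{23+})=O(S_{31+})$ and $O(S_{23-})=O(S_{31-})$, so the six objective values at the circle intersections collapse to the four distinct numbers $O(S_{12+})$, $O(S_{12-})$, $O(S_{23+})$, $O(S_{23-})$. Thus, in every regime where $X\subset S_{123}^{\pm}$, determining $X$ amounts to ranking these four numbers as functions of $d_3$ and selecting the minimizing point(s).

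Next I would classify the mutual configuration of the disks $D_1,D_2,D_3$ by locating $Y_0$ and $Y_3$, exactly as in the equilateral Theorem \ref{le:equid12}. Taking $Z_1=(-r/2,0)$, $Z_2=(r/2,0)$, $Z_3=(0,s)$ gives $Y_0=(0,s/3)$, so $\norm{Y_0-Z_1}=\sqrt{r^2/4+s^2/9}$ and $\norm{Y_0-Z_3}=2s/3$, which produces the thresholds $d_1=\sqrt{r^2/4+s^2/9}$ and $d_3=2s/3$ in the first rows of the table. In the degenerate regimes---$Y_0$ lying outside all three disks, the small disks $D_1,D_2$ disjoint so that $R_3=\phi$, one disk containing another, or $R_{001}$ nonempty but disconnected---I would invoke the relevant statements of \cite{Kwon,Kwon3} (Theorem 5, Theorem 6(a)--(c), Corollary 14) verbatim, just as in the proof of Theorem \ref{le:equid12}, and identify $X$ as $Y_0$, $Y_3$, the nearest point $N_3$ of $R_{001}$ to $Y_3$, or $S_{12\pm}$, according to which is the nearest (resp.\ farthest) point of the governing region from the appropriate center $Y_j$.

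In the principal regime $d_1\ge\sqrt{r^2/4+s^2/9}$ with $d_3\in\bigl(\lvert s-\sqrt{d_1^2-r^2/4}\rvert,\; s+\sqrt{d_1^2-r^2/4}\bigr)$, all $R_{i_1 i_2 i_3}$ are nonempty and connected, so Theorem \ref{th:main1} yields $X\subset S_{123}^{\pm}$, and it remains to rank the four values. The plan is to feed in the pairwise comparisons already established: Lemmas \ref{le:2DR4} and \ref{le:d10d30} place the threshold for $O(S_{12+})$ versus $O(S_{12-})$ at $d_3=\sqrt{d_1^2-r^2/4+s^2}$ and for $O(S_{23+})$ versus $O(S_{23-})$ at $d_3=\sqrt{d_1^2-r^2/4-s^2}$; Lemma \ref{le:S12+S23+} places the threshold for $O(S_{12+})$ versus $O(S_{23+})$ at $d_3=R$; Lemma \ref{le:S23+S12-} places the threshold for $O(S_{12-})$ versus $O(S_{23+})$ at $d_3=M$; and Lemma \ref{le:S12+S23-} places the threshold for $O(S_{12+})$ versus $O(S_{23-})$ at $d_3=d_3^{*}$ (which exists only when $s>\tfrac{\sqrt3}{2}r$ and $d_1>P$). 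These thresholds cut the admissible $d_3$-interval into sub-intervals on each of which the pointwise minimum of the four numbers is attained at a single candidate, with a tie exactly at each threshold. Reading off this minimizer on every sub-interval, and treating each equality as a limiting multi-point case, then reproduces both tables; the split into $s\lessgtr\tfrac{\sqrt3}{2}r$ and into the critical $d_1$-values $\sqrt{r^2/4+s^2(4r^2/(3r^2-4s^2))^2}$ and $P$ is dictated by \eqref{eq:R} and by the inequalities for $M$ and $d_3^{*}$.

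The main obstacle is not any single estimate but the global consistency of these thresholds: within each regime I must verify that the ordering of $R$, $M$, $d_3^{*}$ and of the endpoints $\sqrt{d_1^2-r^2/4\pm s^2}$ along the $d_3$-axis is exactly the tabulated one, so that the minimizer changes hands in the stated sequence. The two delicate checks are (i) confirming, through the inequalities for $R$ and $M$ in Lemmas \ref{le:S12+S23+}--\ref{le:S23+S12-}, that these thresholds fall on the correct side of $\sqrt{d_1^2-r^2/4\pm s^2}$ as $d_1$ crosses the critical values, and (ii) the triple coincidence at $d_1=P$, $d_3=\sqrt{d_1^2-r^2/4-s^2}$, where Lemma \ref{le:S12+S23-} gives $O(S_{23\pm})=O(S_{12+})$ simultaneously with $O(S_{23+})=O(S_{23-})$ from Lemma \ref{le:d10d30}, so that the five points $S_{12+},S_{23\pm},S_{31\pm}$ tie while $S_{12-}$ is strictly larger, yielding the five-solution row. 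Once these alignments are in place, the remaining boundary rows follow by continuity as limiting cases, exactly as in Theorem \ref{le:equid12}.
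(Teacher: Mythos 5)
Your proposal is correct and follows essentially the same route as the paper's own proof: degenerate cases reduced to the equilateral argument and the cited results of \cite{Kwon,Kwon3}, then in the main regime Theorem~\ref{th:main1} restricts $X$ to $S_{123}^\pm$, the isosceles symmetry collapses the candidates, and Lemmas~\ref{le:d10d30}, \ref{le:S12+S23+}, \ref{le:S23+S12-}, and \ref{le:S12+S23-} supply exactly the thresholds $\sqrt{d_1^2-\frac{r^2}4\pm s^2}$, $R$, $M$, $d_3^*$ whose ordering along the $d_3$-axis determines the tables. The two delicate checks you flag (threshold placement as $d_1$ crosses the critical values, and the five-point tie at $d_1=P$, $d_3=\sqrt{d_1^2-\frac{r^2}4-s^2}$) are precisely the points the paper settles via \eqref{eq:R}, the inequalities for $M$, and \eqref{eq:S12+S23-2}--\eqref{eq:S12+S23-3}, so your plan matches the paper's proof in both structure and substance.
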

\begin{proof}
The proofs for the cases $X=Y_0, X=Y_3,$ and $X=N_3$ are similar to that of Lemma \ref{le:equid12}. 

If $d_1\ge \sqrt{\frac {r^2}4 + \frac{s^2}9}$ and $d_3\le \left|s-\sqrt{d_1^2 - \frac{r^2}4}\right|$ , then $X=S_{12+}$ by similar argument as Lemma \ref{le:equid12}.

If $d_1\ge \sqrt{\frac {r^2}4 + s^2}$ and $d_3\ge s+\sqrt{d_1^2 - \frac{r^2}4}$ , then $X=S_{12-}$ also by similar argument as Lemma \ref{le:equid12}.

If $d_1\in \left( \frac r 2, \sqrt{ \frac{r^2} 4 + \frac {s^2} 9} \right)$ and $d_3 \in \left( s-\sqrt{ d_1^2 - \frac{r^2}4}, s+ \sqrt{ d_1^2 - \frac{r^2}4}\right)$, then 
all three disks meet and $R_{i_1 i_2 i_3}$ has one connected component for every $i_1,i_2,i_3 \in \{0,1\}$  and  we have $X\ge S_{123}^\pm$ by Theorem \ref{th:main1}. 
Using Lemma \ref{le:d10d30} and the similar argument as Lemma \ref{le:equid12}, we have
$X=\{S_{23+}, S_{31+}\}$. 

The remaining cases are when $d_1 \ge \sqrt{ \frac{r^2} 4 + \frac {s^2} 9}$ and $d_3 \in \left(\left|s-\sqrt{d_1^2 - \frac {r^2}4}\right|, s+\sqrt{d_1^2 - \frac {r^2}4}\right)$. 
In this all three disks meet and $R_{i_1 i_2 i_3}$ has one connected component for every $i_1,i_2,i_3 \in \{0,1\}$  and  we have $X\ge S_{123}^\pm$ by Theorem \ref{th:main1}. 
Using Lemma \ref{le:d10d30}, we have
\begin{subeqnarray}\label{eq:SAclassify}
&\mbox{If }& \sqrt{ \frac{r^2} 4 + \frac {s^2} 9} \le d_1 < \sqrt{\frac{r^2}4 + s^2} \mbox{ and } d_3 \in \left(s-\sqrt{d_1^2 - \frac {r^2}4}, \sqrt{d_1^2 - \frac{r^2}4 + s^2}\right), \nonumber\\
&& \mbox{ then } X \subset \{ S_{12+}, S_{23+}, S_{31+}  \}. \slabel{eq:SAclassify1}\\
&\mbox{If }& d_1 > \sqrt{\frac{r^2}4 + s^2} \mbox{ and } d_3 \in \left(\sqrt{d_1^2 - \frac {r^2}4}-s, \sqrt{d_1^2 - \frac{r^2}4 - s^2}\right), 
\mbox{ then } X\subset  \{S_{12+}, S_{23-}, S_{31-} \}.\slabel{eq:SAclassify2}\\
&\mbox{If }& d_1 > \sqrt{\frac{r^2}4 + s^2} \mbox{ and } d_3 =  \sqrt{d_1^2 - \frac{r^2}4 - s^2}, 
\mbox{ then } X\subset  \{S_{12+}, S_{23\pm}, S_{31\pm} \}.\slabel{eq:SAclassify3}\\
&\mbox{If }& d_1 > \sqrt{\frac{r^2}4 + s^2} \mbox{ and } d_3 \in \left(\sqrt{d_1^2 - \frac{r^2}4 - s^2}, \sqrt{d_1^2 - \frac{r^2}4 + s^2} \right), 
\mbox{ then } X\subset  \{S_{12+}, S_{23+}, S_{31+} \}.\slabel{eq:SAclassify4}\\
&\mbox{If }& d_3 = \sqrt{d_1^2 - \frac{r^2}4 + s^2},
\mbox{ then } X\subset  \{S_{12\pm}, S_{23+}, S_{31+} \}.\slabel{eq:SAclassify5}\\
&\mbox{If }& d_3 \in \left(\sqrt{d_1^2 - \frac{r^2}4 + s^2}, s+ \sqrt{d_1^2-\frac {r^2}4}\right),
\mbox{ then } X\subset  \{S_{23+}, S_{31+}, S_{12-} \}.\slabel{eq:SAclassify6}
\end{subeqnarray}

When $d_1\in \left[\sqrt{ \frac{r^2} 4 + \frac {s^2} 9},\sqrt{ \frac{r^2} 4 + s^2}\right) $ and $d_3 \in \left( \sqrt{d_1^2 - \frac{r^2}4 + s^2}, s + \sqrt{d_1^2 - \frac{r^2}4}\right)$, 
by using \eqref{eq:SAclassify6} we have $X\ge X_{001}$. Since the minimum points are the farthest points from $Y_3$ and $Y_3$ is not contained in $R_{001}$, we have $X=X_{001}=\{S_{23+},S_{31+}\}$.
Note that in case  \eqref{eq:SAclassify1}, the constant $R$ in Lemma \ref{le:S12+S23+} satisfies
$$ s - \sqrt{d_1^2 - \frac{r^2}4} < R < \sqrt{d_1^2 - \frac{r^2}4 + s^2} . $$
From the above and by using \eqref{eq:SAclassify5} and Lemma \ref{le:S12+S23+}, we can prove all the case for $d_1\in \left[\sqrt{ \frac{r^2} 4 + \frac {s^2} 9},\sqrt{ \frac{r^2} 4 + s^2}\right) $ and $d_3 \in \left( \sqrt{d_1^2 - \frac{r^2}4 + s^2}, s + \sqrt{d_1^2 - \frac{r^2}4}\right)$.

Thus far, we have proved all the cases except $d_1\ge \sqrt{\frac{r^2}4+s^2}$ and $d_3 \in \left[ \sqrt{d_1^2 - \frac{r^2}4} -s, \sqrt{d_1^2 - \frac{r^2}4 + s^2}\right]$. In the case, we should divide the original case into two cases: 
the flat isosceles ($s<\frac{\sqrt 3 r}2$) and the sharp isosceles ($s>\frac{\sqrt 3 r}2$) .

Let us consider the flat isosceles with $s< \frac{\sqrt 3} 2 r$. In the case in \eqref{eq:SAclassify2}, we have $X\subset \{S_{12+},S_{23-},S_{31-}\}$. 
Using \eqref{eq:S12+S23-1}, we have $X=S_{12+}$.
In the other cases, by applying Lemma \ref{le:S12+S23+}, Lemma \ref{le:S23+S12-}, and \eqref{eq:SAclassify3}-\eqref{eq:SAclassify6}, we can prove all the cases for $s<\frac{\sqrt 3}2r$.

Then, let us consider the sharp isosceles with $s> \frac{\sqrt 3}2r$. The case $d_1\ge \sqrt{\frac{r^2}4 + s^2}$ and $d_3\ge \sqrt{d_1^2 - \frac{r^2}4 + s^2}$ can be proved with Lemma \ref{le:S23+S12-}, \eqref{eq:SAclassify5}, and \eqref{eq:SAclassify6}.
In the other cases, by applying Lemma \ref{le:S12+S23+}, Lemma \ref{le:S23+S12-}, \eqref{eq:S12+S23-2},\eqref{eq:S12+S23-3}, and \eqref{eq:SAclassify3}-\eqref{eq:SAclassify6}, we can prove all the cases for $s<\frac{\sqrt 3}2r$.

Hence, we proved all the cases of the theorem.
\end{proof}     

In Fig. \ref{fig:lem15ss}, \ref{fig:lem15sl1}, and \ref{fig:lem15sl3},  interesting examples when $R_{i1,i2,i3}$ are all nonempty and connected for $i_1,i_2,i_3\in \{0,1\}$ are shown in Theorem {le:isoscelesd12}. The objective values for the cases shown in the figures are displayed in 
Table \ref{tab:th15}, in which red ones in each row represent the objective values for the solutions in each case.
\begin{table}
\begin{centering}
\begin{tabular}{|c||c|c|c|c|c|c|}
\hline
                                        &$O(S_{12+})$             &$O(S_{31+})$            &$O(S_{23+})$              &$O(S_{12-})$           &$O(S_{31-})$    &$O(S_{23-})$ \\
\hline \hline
Fig. \ref{fig:lem15ss} (a)  &{\color{red}2.8643}       &{\color{red}2.8643}     &{\color{red}2.8643}       &3.2429                       &6.4858                &6.4858\\
\hline
Fig. \ref{fig:lem15ss} (b)  &3.4103                          &{\color{red}2.5370}      &{\color{red}2.5370}       &2.6969                       &7.2504                &7.2504\\
\hline
Fig. \ref{fig:lem15ss} (c)  &{\color{red}0.5567}       & 4.6636                        &4.6636                           &7.4433                       &1.7770                &1.7770\\
\hline
Fig. \ref{fig:lem15ss} (d)  &{\color{red}4.0000}      &{\color{red}4.0000}      &{\color{red}4.0000}        &{\color{red}4.0000}   &8.0000                  &8.0000\\
\hline
Fig. \ref{fig:lem15ss} (e)  &{\color{red}4.0491}       &13.4171                        &13.4171                         &13.3865                     &8.0797                  &8.0797\\ 
\hline
Fig. \ref{fig:lem15ss} (f) &{\color{red}8.7178}       &10.4900                        &10.4900                         &{\color{red}8.7178}   &14.4900                   &14.4900\\
\hline
Fig. \ref{fig:lem15sl1} (a)  &{\color{red}6.5105}     &12.9986                        &12.9986                        &53.7055                     &10.7596                    &10.7596\\
\hline
Fig. \ref{fig:lem15sl1} (b)  &{\color{red}16.0720}    &{\color{red}16.0720}    &{\color{red}16.0720}     &44.1440                    &17.6576                     &17.6576\\
\hline
Fig. \ref{fig:lem15sl1} (c)  &22.6289                        &{\color{red}16.4606}    &{\color{red}16.4606}     &37.5872                    &20.6689                     &20.6689\\
\hline
Fig. \ref{fig:lem15sl1} (d)  &{\color{red}10.6491}    & 20.5469                       &20.5469                         &73.3509                    &15.2066                     &15.2066 \\
\hline
Fig. \ref{fig:lem15sl1} (e)  &{\color{red}24.0000}    &{\color{red}24.0000}    &{\color{red}24.0000}      &60.0000                    &{\color{red}24.0000}   &{\color{red}24.0000}\\
\hline
Fig. \ref{fig:lem15sl1} (f)  &32.5832                         &{\color{red}24.2271}   &{\color{red}24.2271}      &51.4168                     &27.6604                       &27.6604\\ 
\hline
Fig. \ref{fig:lem15sl3} (a)  &{\color{red}28.6572}    &36.0460                       &36.0460                         &94.5497                    &30.0675                        &30.0675\\
\hline
Fig. \ref{fig:lem15sl3} (b)  &{\color{red}31.8414}   &36.5462                        &36.5462                         &91.3655                    &{\color{red}31.8414}    &{\color{red}31.8414}\\
\hline 
Fig. \ref{fig:lem15sl3} (c)  &42.4272                       &37.2617                        &37.2617                         &80.7797                    &{\color{red}36.7912}  &{\color{red}36.7912}\\
\hline
\end{tabular}       
\caption{The comparison of objective values at $S_{12\pm}, S_{31\pm},$ and $S_{23\pm}$  for Theorem \ref{le:isoscelesd12}. The red numbers are the minimum objective values.
}
\label{tab:th15}
\end{centering}
\end{table}
 \begin{figure}
\begin{minipage}[t]{8cm}
\centerline{\epsfig{file=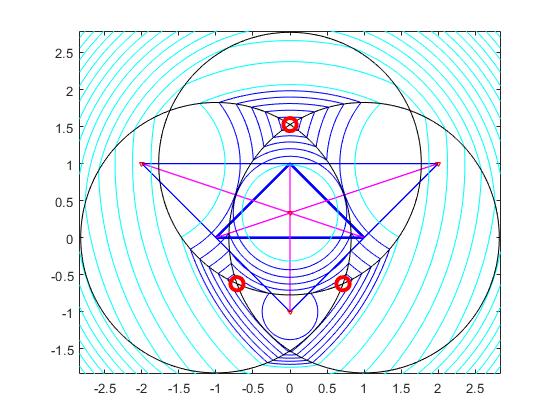, height=4cm,width=6cm,clip=1cm}}
\end{minipage}
\begin{minipage}[t]{8cm}
\centerline{\epsfig{file=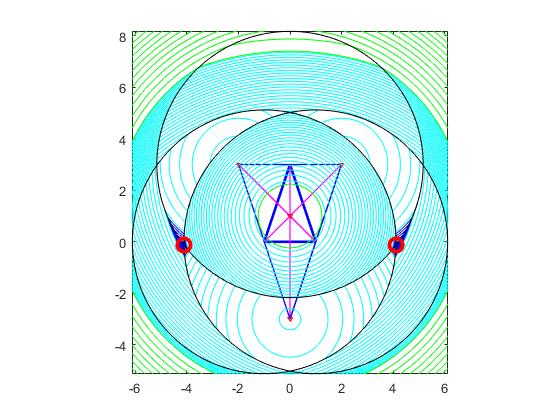, height=4cm,width=6cm,clip=1cm}}
\end{minipage}
\begin{center}
(a)\qquad\qquad\qquad\qquad\qquad\qquad\qquad\qquad\qquad\qquad(b)
\end{center}
\begin{minipage}[t]{8cm}
\centerline{\epsfig{file=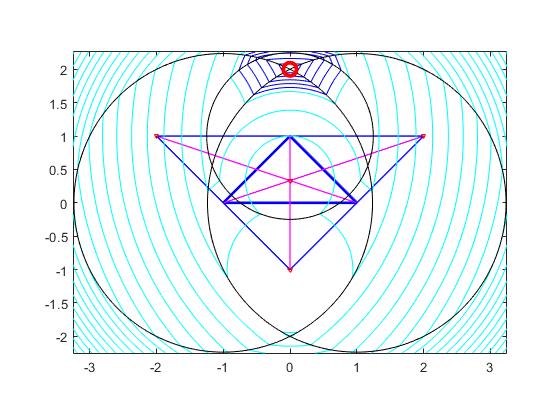, height=4cm,width=6cm,clip=1cm}}
\end{minipage}
\begin{minipage}[t]{8cm}
\centerline{\epsfig{file=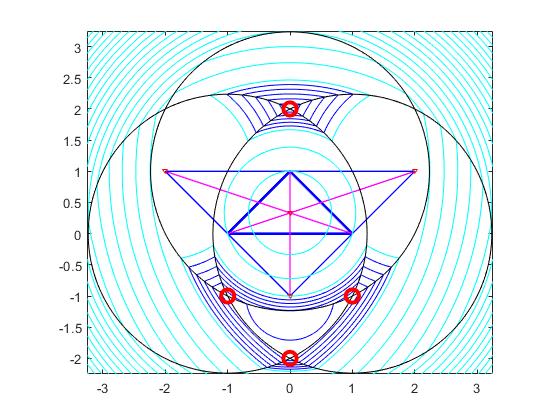, height=4cm,width=6cm,clip=1cm}}
\end{minipage}
\begin{center}
(c)\qquad\qquad\qquad\qquad\qquad\qquad\qquad\qquad\qquad\qquad(d)
\end{center}
\begin{minipage}[t]{8cm}
\centerline{\epsfig{file=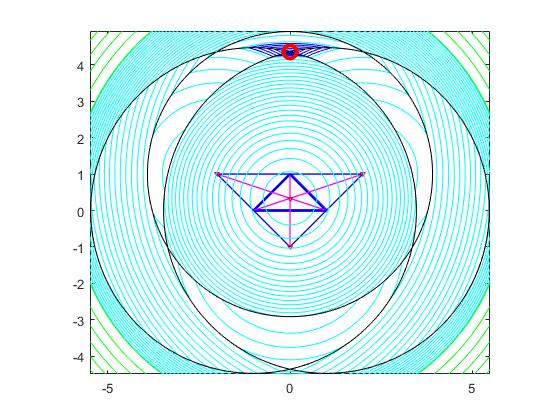, height=4cm,width=6cm,clip=1cm}}
\end{minipage}
\begin{minipage}[t]{8cm}
\centerline{\epsfig{file=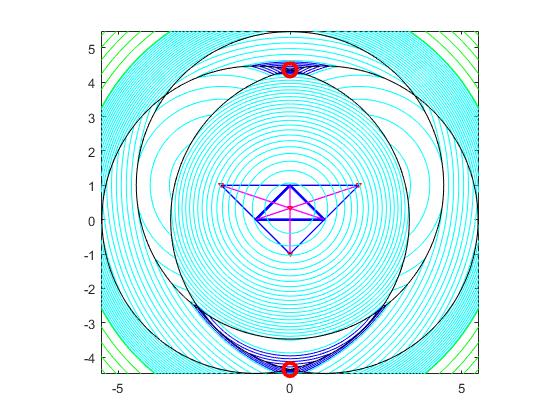, height=4cm,width=6cm,clip=1cm}}
\end{minipage}
\begin{center}
(e)\qquad\qquad\qquad\qquad\qquad\qquad\qquad\qquad\qquad\qquad(f)
\end{center}
\caption{Solution positions for flat isoscleles $ (s<\frac{\sqrt 3 r}2)$ with $s=1$ and $r=2$ in Theorem \ref{le:isoscelesd12}, when $(d_1, d_3)= $  (a) $(1.8251,1.7725)$ 
(b) $(1.8251,1.9204)$ (c) $(2.2361, 1.2477)$ (d) $(2.2361,2.2361)$ (e) $(4.4721,3.9155)$ (f) $(4.4721,4.4721)$.}
\label{fig:lem15ss}
\end{figure}

 \begin{figure}
\begin{minipage}[t]{8cm}
\centerline{\epsfig{file=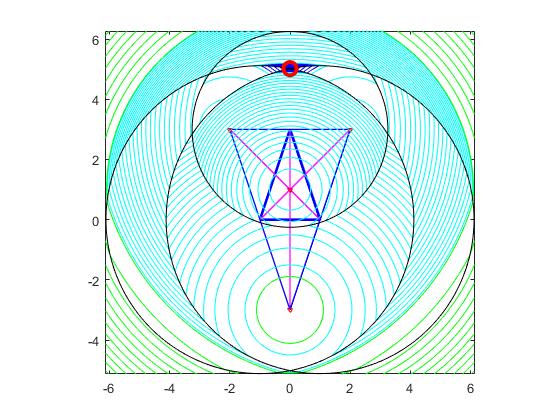, height=4cm,width=6cm,clip=1cm}}
\end{minipage}
\begin{minipage}[t]{8cm}
\centerline{\epsfig{file=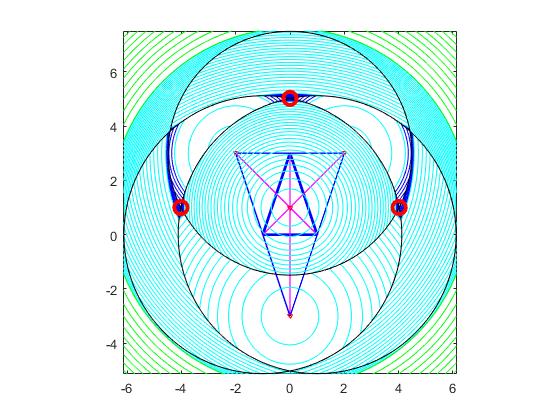, height=4cm,width=6cm,clip=1cm}}
\end{minipage}
\begin{center}
(a)\qquad\qquad\qquad\qquad\qquad\qquad\qquad\qquad\qquad\qquad(b)
\end{center}
\begin{minipage}[t]{8cm}
\centerline{\epsfig{file=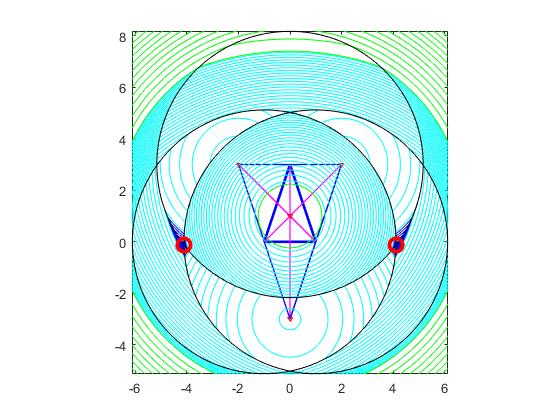, height=4cm,width=6cm,clip=1cm}}
\end{minipage}
\begin{minipage}[t]{8cm}
\centerline{\epsfig{file=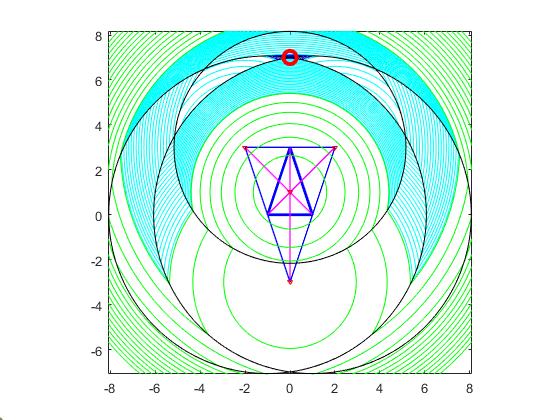, height=4cm,width=6cm,clip=1cm}}
\end{minipage}
\begin{center}
(c)\qquad\qquad\qquad\qquad\qquad\qquad\qquad\qquad\qquad\qquad(d)
\end{center}
\begin{minipage}[t]{8cm}
\centerline{\epsfig{file=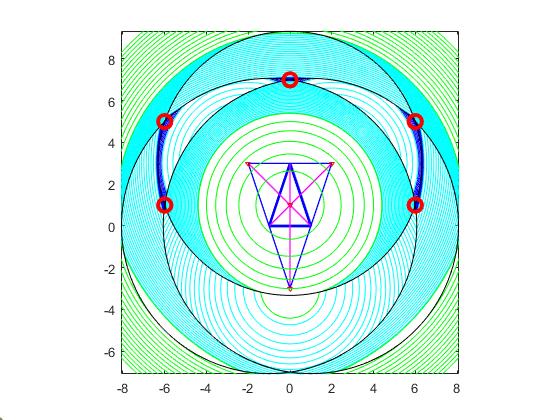, height=4cm,width=6cm,clip=1cm}}
\end{minipage}
\begin{minipage}[t]{8cm}
\centerline{\epsfig{file=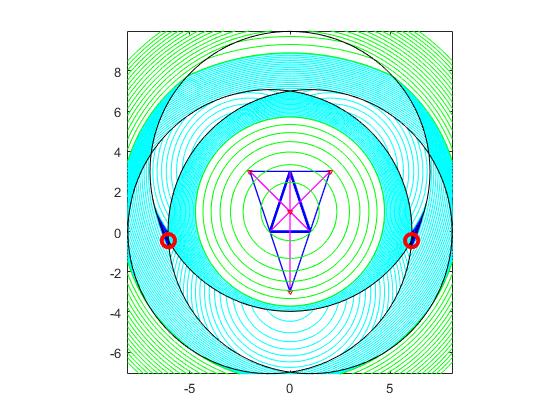, height=4cm,width=6cm,clip=1cm}}
\end{minipage}
\begin{center}
(e)\qquad\qquad\qquad\qquad\qquad\qquad\qquad\qquad\qquad\qquad(f)
\end{center}
\caption{Solution positions for sharp isosceles $(s>\frac{\sqrt 3 r}2)$ with $s=3,\; r=2,$  and  $\sqrt{\frac{r^2}4 +s^2}\le d_1\le \sqrt{\frac{r^2}4 + s^2 \left(\frac{4s^2+5r^2}{4s^2-3r^2}\right)^2}$ (Theorem \ref{le:isoscelesd12}) when $(d_1,d_3)= $ (a) $(5.1167,3.2531)$ (b) $(5.1167,4.4882)$ (c) $( 5.1167,5.1673)$ (d) $(7.0711,5.1623)$ (e) $(7.0711,6.3246)$ (f) $(7.0711, 6.9702)$.}
\label{fig:lem15sl1}
\end{figure}
 \begin{figure}
\begin{minipage}[t]{8cm}
\centerline{\epsfig{file=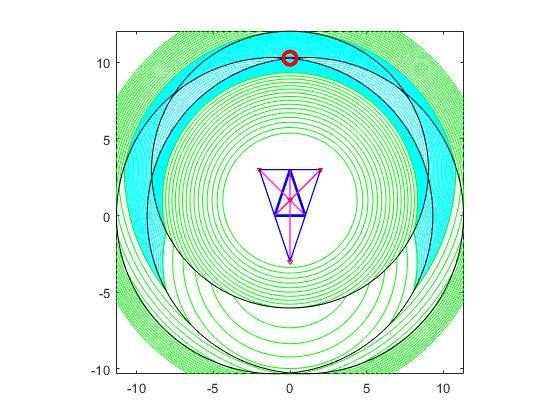, height=4cm,width=6cm,clip=1cm}}
\end{minipage}
\begin{minipage}[t]{8cm}
\centerline{\epsfig{file=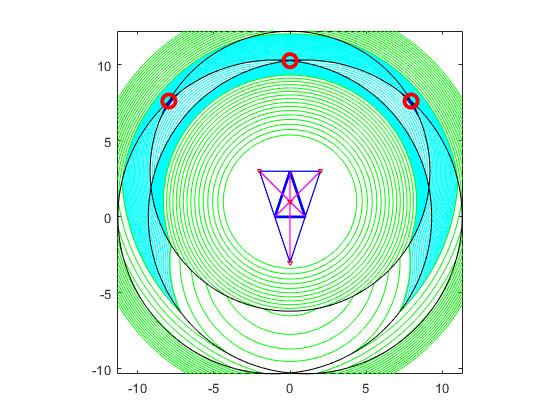, height=4cm,width=6cm,clip=1cm}}
\end{minipage}
\begin{center}
(a)\qquad\qquad\qquad\qquad\qquad\qquad\qquad\qquad\qquad\qquad(b)
\end{center}
\begin{minipage}[t]{8cm}
\centerline{\epsfig{file=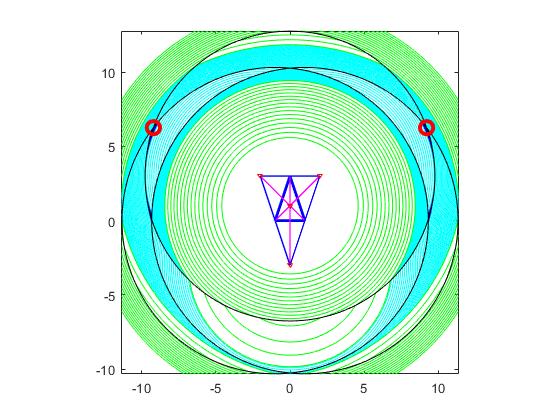, height=4cm,width=6cm,clip=1cm}}
\end{minipage}
\begin{minipage}[t]{8cm}
\centerline{\epsfig{file=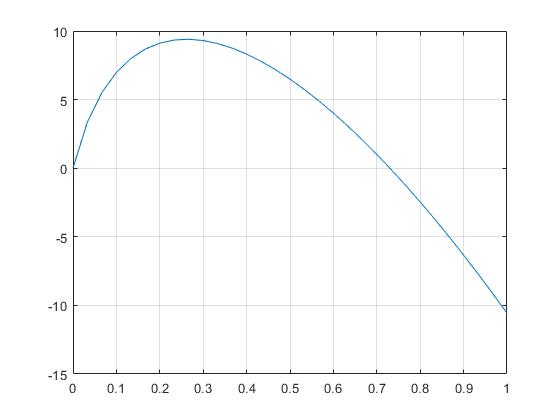, height=4cm,width=6cm,clip=1cm}}
\end{minipage}
\begin{center}
(c)\qquad\qquad\qquad\qquad\qquad\qquad\qquad\qquad\qquad\qquad(d)
\end{center}
\caption{Solution positions for sharp isosceles $(s>\frac{\sqrt 3 r}2)$ with $s=3, r=2,\;\sqrt{\frac{r^2}4 + s^2 \left(\frac{4s^2+5r^2}{4s^2-3r^2}\right)^2}<d_1=10.3158\le P$ (Theorem \ref{le:isoscelesd12}) and $d_3^*=9.2008$ when (a) $d_3=9.0261<d_3^*$ (b) $d_3=d_3^*$ (c) $d_3=9.7591 >d_3^*$. (d) The graph for $f(t)$ in Lemma \ref{le:S12+S23-}, in which the $x$-intercept point of $f(t)$ is $t^*=0.7302 $ (up to 4 decimals).}
\label{fig:lem15sl3}
\end{figure}


\begin{corollary}\label{le:isoNumber}
Suppose that $\norm{Z_1-Z_3}=\norm{Z_2-Z_3}$ and $d_1 = d_2$. Then, we have the followings: 
\begin{itemize}
\item{ $X=\{S_{12+},S_{23\pm},S_{31\pm}\}$ iff $\left[s> \frac{\sqrt 3}2 r, d_1 = \sqrt{\frac{r^2}4 +s^2\left(\frac{4s^2+5r^2}{4s^2-3r^2} \right)^2},   d_3 = \frac{4sr}{4s^2-3r^2}\sqrt{4s^2+r^2}\right]$.}
\item{ $X=\{S_{23\pm}, S_{31\pm} \}$ iff $\left[ s>  \frac{\sqrt 3}2 r, d_1>\sqrt{\frac{r^2}4 +s^2\left(\frac{4s^2+5r^2}{4s^2-3r^2} \right)^2},  d_3 = \sqrt{d_1^2 - \frac {r^2}4 - s^2}\right]$.}
\item{ $X=\{S_{12\pm}, S_{23+}, S_{31+}\}$ iff $\left[ s< \frac{\sqrt 3}2 r, d_1 = \sqrt{\frac{r^2}4 + s^2\left(\frac{4r^2}{3r^2-4s^2}\right)^2},  d_3  = \frac{\sqrt{25r^4 - 24r^2 s^2 +16s^4}}{3r^2-4s^2}\right]$.}
\item{$X=\{ S_{12-},S_{23+}, S_{31+}\}$ iff $\left[ d_1> \sqrt{\frac{r^2}4 + s^2}, d_3=M\right]$.}
\item{$X=\{ S_{12+},S_{23-},S_{31-}\}$ iff $\left[s>\frac{\sqrt 3}2 r, d_1>\sqrt{\frac{r^2}4 + s^2 \left(\frac{4s^2+5r^2}{4s^2-3r^2}\right)^2}, d_3=d_3^*\right]$.}
\item{$X=\{ S_{12+}, S_{23+}, S_{31+}\}$ iff $\left[\sqrt{\frac{r^2}4 + \frac{s^2}9}<d_1<\sqrt{\frac{r^2}4 + s^2}, d_3=R \right]$, or $\left[ s=\frac{\sqrt 3 r}2. d_1>r, d_3=d_1 \right]$,\\
or $\left[ s<\frac{\sqrt 3}2 r, \sqrt{ \frac{r^2} 4 + s^2}\le d_1 <\sqrt{\frac{r^2}4 + s^2 \left(\frac{4r^2}{3r^2-4s^2}\right)^2}, d_3 = R\right]$,\\ 
or $\left[ s>\frac{\sqrt 3}2 r, \sqrt{ \frac{r^2} 4 + s^2}\le d_1 <\sqrt{\frac{r^2}4 + s^2 \left(\frac{4s^2+5r^2}{4s^2-3r^2}\right)^2},  d_3= R\right]$.}
\item{$X=S_{12\pm}$ iff $\left[ s<\frac{\sqrt 3}2 r,  d_1>\sqrt{\frac{r^2}4 + s^2\left(\frac{4r^2}{3r^2-4s^2}\right)^2}, d_3 = \sqrt{d_1^2 - \frac {r^2}4 - s^2}\right]$.}
\item{$X=\{S_{23-},S_{31-}\}$ iff $\left[ s>\frac{\sqrt 3}2 r,  d_1> \sqrt{\frac{r^2}4 +s^2\left(\frac{4s^2+5r^2}{4s^2-3r^2} \right)^2},  d_3^*< d_3< \sqrt{d_1^2 - \frac{r^2}4 - s^2}\right]$.}
\item{$X=\{S_{23+}, S_{31+}\}$ iff $\left[ \frac r 2<d_1< \sqrt{\frac{r^2}4 +\frac{s^2}9},  s-\sqrt{d_1^2-\frac{r^2}4 }<d_3<s+ \sqrt{d_1^2 - \frac{r^2}4 }\right]$, \\
or $\left[ \sqrt{\frac{r^2}4 +\frac{s^2}9}<d_1\le \sqrt{\frac{r^2}4 + s^2},  R<d_3<s+ \sqrt{d_1^2 - \frac{r^2}4 }\right]$, 
or $\left[ s=\frac{\sqrt 3 r}2, d_1>r, R<d_3<\sqrt{d_1^2 + 2r^2}\right]$,\\ 
or $\left[ s<\frac{\sqrt 3}2 r,  \sqrt{ \frac{r^2} 4 + s^2}< d_1 <\sqrt{\frac{r^2}4 + s^2 \left(\frac{4r^2}{3r^2-4s^2}\right)^2}, R<d_3<M\right]$, \\
or $\left[ s>\frac{\sqrt 3}2 r, d_1 > \sqrt{\frac{r^2}4 + s^2}, \max(R, \sqrt{d_1^2-\frac{r^2}4-s^2})<d_3<M\right]$.}
\end{itemize} 
\end{corollary}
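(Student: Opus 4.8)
The plan is to read the corollary directly off the exhaustive case tables of Theorem \ref{le:isoscelesd12} (the strictly flat branch $s<\frac{\sqrt3}2 r$ and the strictly sharp branch $s>\frac{\sqrt3}2 r$), supplemented by Theorem \ref{le:equid12} for the degenerate equilateral value $s=\frac{\sqrt3}2 r$. Under the shared hypotheses $\norm{Z_1-Z_3}=\norm{Z_2-Z_3}$ and $d_1=d_2$, these tables partition the admissible region $(s,d_1,d_3)$ into finitely many cells on each of which the solution set $X$ is explicitly named. For each solution type $T$ listed in the corollary I would first collect every cell whose entry equals $T$ — this is where a single bullet fans out into the several `or' clauses covering the flat, sharp and equilateral regimes — and then translate that cell's defining inequalities into the closed-form conditions stated. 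Because the cells form a partition and the assignment (cell) $\mapsto$ (solution type) is single-valued, the `if' and `only if' directions come together at once: the locus $\{X=T\}$ is exactly the union of the $T$-cells, so no separate converse argument is required.

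The substantive work is to convert the implicitly defined thresholds $R$, $M$, $d_3^*$ and $P$ appearing in the tables into the explicit radicands of the corollary, which for the critical curves reduces to evaluating $R$ and $M$ at the critical value of $d_1$. For instance, on the sharp branch the middle case of Lemma \ref{le:S12+S23+} gives $R=\sqrt{d_1^2-\frac{r^2}4-s^2}$ exactly when $d_1=P=\sqrt{\frac{r^2}4+s^2(\frac{4s^2+5r^2}{4s^2-3r^2})^2}$; substituting this $d_1^2$ into $d_1^2-\frac{r^2}4-s^2$ and using the factorization $(4s^2+5r^2)^2-(4s^2-3r^2)^2=16r^2(4s^2+r^2)$ yields $d_3=\frac{4sr}{4s^2-3r^2}\sqrt{4s^2+r^2}$, which is the formula in the first bullet. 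The flat-branch bullets are handled the same way, now using the middle cases of Lemmas \ref{le:S12+S23+} and \ref{le:S23+S12-} (where $R$ and $M$ coincide at $\sqrt{d_1^2-\frac{r^2}4+s^2}$) together with the identity $(\frac{4r^2}{3r^2-4s^2})^2+1=\frac{25r^4-24r^2s^2+16s^4}{(3r^2-4s^2)^2}$.

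The one genuinely delicate point is the treatment of the measure-zero boundary values of $d_3$ at which a $+$ point and its reflected $-$ point share the same objective value, since there the half-open intervals of the tables must be reinterpreted. By Lemma \ref{le:d10d30}, $O(S_{23+})=O(S_{23-})$ (and likewise for the $31$ pair) precisely when $\norm{Z_1 Z_3}^2+d_3^2=d_1^2$, i.e. $d_3=\sqrt{d_1^2-\frac{r^2}4-s^2}$; and by Lemma \ref{le:2DR4} the $12$ pair ties precisely at $d_3=d_3^0=\sqrt{d_1^2-\frac{r^2}4+s^2}$. At such a value, whichever reflected point was the minimizer is joined by its partner, so an entry such as $\{S_{23-},S_{31-}\}$ is upgraded to $\{S_{23\pm},S_{31\pm}\}$ and $\{S_{12+},S_{23+},S_{31+}\}$ to $\{S_{12\pm},S_{23+},S_{31+}\}$; this is exactly what separates, e.g., the second bullet ($d_3=\sqrt{d_1^2-\frac{r^2}4-s^2}$) from the surrounding open interval on which only the $-$ points minimize. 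I expect this boundary bookkeeping — rather than any single computation — to be the main obstacle, since for each bullet one must check that the stated $d_3$ really lies on the correct tie-locus and that no further coincidence inflates the solution set.
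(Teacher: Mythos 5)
Your proposal matches the paper's approach: the paper gives no separate proof of this corollary, treating it (as stated in its Appendix) as a direct ``reorganization'' of the case tables of Theorems \ref{le:equid12} and \ref{le:isoscelesd12}, which is precisely your cell-collection argument, and your explicit evaluations of $R$, $M$, $d_3^*$ at the critical values of $d_1$ supply exactly the substitutions the paper leaves implicit. Incidentally, your own identity $16r^4+(3r^2-4s^2)^2=25r^4-24r^2s^2+16s^4$ yields $d_3=\frac{s\sqrt{25r^4-24r^2s^2+16s^4}}{3r^2-4s^2}$ in the third bullet, which exposes a missing factor of $s$ in the formula as printed in the corollary.
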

Note that the cases in Corollary \ref{le:isoNumber} are all the nonunique cases for the source $X$, when $\norm{Z_1-Z_3}=\norm{Z_2-Z_3}$ and $d_1 = d_2$.

\section{Appendix}                                                                                                                                                                                                                                                                                                                                                                                                                                                                                                                                                                                                                                                                                                                                                                                                                                                                                                                                                                                                                                                                                                                                                                                                                                                                                                                                                                                                                                                                                                                                                                                                                                                                                                                                                                                                                                                                                                                                                                                                                    
We list the solutions in dictionary order for $d_1=d_2$ and $d_3$ in each cases for the isosceles measurement triangle with $ |Z_1 Z_3|=|Z_2 Z_3| $.
The number of solutions $|X|$ are also listed in the following tables. 
These results are the reorganization of  Theorems \ref{le:equid12}  and  \ref{le:isoscelesd12}. 

\newpage
\subsection{ Equilateral triangle case with $s=\frac{\sqrt 3}2 r$}\label{sec:equi}
This table is the reorganization of the table in Theorem \ref{le:equid12}.

\begin{tabular}{|c|c||c|c|}
\hline
$ d_1=d_2$                                        &$d_3$                                                                                                                                                        &$X$                                                        &$|X|$\\
\hline \hline
$\left(0,\frac r 2\right]$                      &$\left(0, \frac r{\sqrt 3}\right]$                                                                                                                    &$Y_0$                                                    &1\\
\hline
                                                           &$\left[ \frac r{\sqrt 3},\sqrt 3 r \right]$                                                                                                         &$N_3$                                                   &1\\
\hline                                                           
                                                           &$\left[ \sqrt 3 r, \infty \right)$                                                                                                                      &$Y_3$                                                   &1\\  
\hline
$\left(\frac r 2, \frac r{\sqrt 3}\right]$ &$\left(0, \frac r{\sqrt 3}\right]$                                                                                                                    &$Y_0$                                                   &1\\                                                        
\hline
                                                           &$\left[\frac r{\sqrt 3}, \frac{\sqrt 3}2 r- \sqrt{ d_1^2 - \frac{r^2}4 }\right]$                                                 &$N_3$                                                   &1\\
\hline
                                                           &$\left(\frac{\sqrt 3}2 r - \sqrt{ d_1^2 - \frac{r^2}4}, \frac{\sqrt 3}2 r + \sqrt{ d_1^2 - \frac{r^2}4 }\right)$ &$\{S_{23+}, S_{31+}\}$                   &2\\
\hline
                                                           &$\left[ \frac{\sqrt 3}2 r + \sqrt{ d_1^2 - \frac{r^2}4},   \sqrt 3 r \right]$                                                       &$N_3$                                                   &1\\
\hline                                                           
                                                           &$\left[ \sqrt 3 r, \infty \right)$                                                                                                                       &$Y_3$                                                   &1\\                                                             
\hline
$\left(\frac r{\sqrt 3}, r\right]$             &$\left(0, d_1\right)$                                                                                                                                     &$S_{12+}$                                           &1\\
\hline    
                                                           &$d_1$                                                                                                                                                          &$\{ S_{12+}, S_{23+}, S_{31+} \}$ &3\\                                                
\hline
                                                           &$\left( d_1, \frac{\sqrt 3}2 r+ \sqrt{ d_1^2 - \frac{r^2}4} \right)$                                                                &$\{S_{23+}, S_{31+}\}$                    &2 \\
\hline
                                                           &$\left[ \frac{\sqrt 3}2 r + \sqrt{ d_1^2 - \frac{r^2}4},   \sqrt 3 r \right]$                                                       &$N_3$                                                   &1\\
\hline                                                           
                                                           &$\left[ \sqrt 3 r, \infty \right)$                                                                                                                       &$Y_3$                                                   &1\\ 
\hline
$\left( r,\infty\right)$                           &$\left(0, d_1\right)$                                                                                                                                      &$S_{12+}$                                           &1\\
\hline    
                                                           &$d_1$                                                                                                                                                          &$\{ S_{12+}, S_{23+}, S_{31+} \}$ &3\\ 
\hline
                                                           &$\left( d_1,  \sqrt{d_1^2 + 2r^2}   \right)$                                                                                                   &$\{ S_{23+}, S_{31+} \}$                   &2\\
\hline
                                                          &$ \sqrt{d_1^2 + 2r^2}$                                                                                                                                 &$\{ S_{23+}, S_{31+}, S_{12-} \}$   &3\\
\hline 
                                                          &$\left(  \sqrt{d_1^2 + 2r^2},  \infty\right) $                                                                                                   &$S_{12-}$                                              &1\\                                                                  
\hline                                                                                                                                                                                                                                                                                                                                
\end{tabular}

\newpage
\subsection{ Flat isoscele trianlge case with $s<\frac{\sqrt 3}2 r$}
This table is the reorganization of the table in Theorem \ref{le:isoscelesd12}.

\begin{tabular}{|c|c|c||c|c|}
\hline
$ d_1=d_2$                                                                                                                                    &$d_3$   											&$X$                                                      &$|X|$\\
\hline \hline
$\left(0,\frac r 2 \right]$                                                                                                                  &$\left(0, \frac{2s} 3\right]$  								&$Y_0$                                                  &1\\
\hline
                                                                                       		                                             &$\left[\frac {2s}3 , 2s \right]$ 								&$N_3$                                                  &1\\
\hline
                                                                                       		                                             &$\left[ 2s, \infty \right)$ 									&$Y_3$                                                  &1\\
\hline
$\left(\frac r 2, \sqrt{ \frac{r^2} 4 + \frac {s^2} 9}\right]$                                                               &$\left(0, \frac{2s} 3\right]$  								&$Y_0$                                                  &1\\
\hline
                                                                                       		                                             &$\left[\frac {2s}3, s- \sqrt{ d_1^2 - \frac{r^2}4 }\right]$  				&$N_3$                                                  &1\\
\hline
                                                                                       		                                             &$\left(s - \sqrt{ d_1^2 - \frac{r^2}4}, s + \sqrt{ d_1^2 - \frac{r^2}4 }\right)$       &$\{S_{23+}, S_{31+}\}$                  &2\\
\hline
                                                                                       		                                             &$\left[ s+ \sqrt{ d_1^2 - \frac{r^2}4}, 2s \right]$ 					&$N_3$                                                 &1\\
\hline
                                  		                                                                                                  &$\left[ 2s, \infty \right)$ 									&$Y_3$                                                  &1\\
\hline
$\left(\sqrt{ \frac{r^2} 4 + \frac {s^2} 9},\sqrt{ \frac{r^2} 4 + s^2}\right]$                                     &$\left(0, R \right)$                                                                                                &$S_{12+}$                                         &1\\
\hline
                                                                                                                                                       &$R$                                                                                                                      &$\{S_{12+}, S_{23+}, S_{31+}\}$ &3\\
\hline
                                                                                                                                                       &$\left(R, s + \sqrt{d_1^2 - \frac {r^2} 4}\right)$                                                   &$\{S_{23+},S_{31+}\}$                    &2\\  
\hline
                                                                                       		                                            &$\left[ s+ \sqrt{ d_1^2 - \frac{r^2}4},   2s \right]$ 					&$N_3$                                                  &1\\
\hline
                                                                                       		                                            &$\left[ 2s, \infty \right)$ 									&$Y_3$                                                  &1\\
\hline                                                                                                                                                                                                                                                                                                                                
$\left(\sqrt{\frac{r^2}4+s^2},\sqrt{\frac{r^2}4+s^2\left(\frac{4r^2}{3r^2-4s^2}\right)^2}\right)$  &$\left(0,R\right)$                                                                                                   &$ S_{12+}$                                         &1\\ 
\hline
                                                                                                                                                      &$R$                                                                                                                       &$S_{123}^+$    &3\\
\hline
                                                                                                                                                      &$\left(R,M\right)$                                                                                                  &$\{S_{23+},S_{31+}\}$                    &2\\                                                        
\hline
                                                                                                                                                      &$M$      						           			           &$\{ S_{23+}, S_{31+}, S_{12-} \}$ &3\\
\hline 
                                           							                                &$\left( M,\infty \right) $  	                                                                                      &$S_{12-}$                                           &1\\ 
\hline
$\sqrt{\frac{r^2}4 + s^2 \left(\frac{4r^2}{3r^2-4s^2}\right)^2}$                                                   &$\left(0,\sqrt{d_1^2 - \frac{r^2}4 + s^2}\right)$                                                     &$S_{12+}$                                          &1\\    
\hline
                                                                                                                                                     &$\sqrt{d_1^2 - \frac{r^2}4 + s^2}$                                                                         &$\{S_{12\pm},S_{23+},S_{31+}\}$&4\\
\hline 
                                           							                                &$\left(\sqrt{d_1^2 - \frac{r^2}4 + s^2},\infty\right) $  	                                           &$S_{12-}$                                           &1\\    
\hline
$\left(\sqrt{\frac{r^2}4 + s^2 \left(\frac{4r^2}{3r^2-4s^2}\right)^2},\infty \right)$                        &$\left(0,\sqrt{d_1^2 - \frac{r^2}4 + s^2}\right)$                                                     &$S_{12+}$                                           &1\\                                                       
\hline 
                                                                                                                                                     &$\sqrt{d_1^2 - \frac{r^2}4 + s^2}$                                                                         &$S_{12\pm}$                                      &2\\
\hline
                                           							                                &$\left(\sqrt{d_1^2 - \frac{r^2}4 + s^2},\infty\right) $  	                                           &$S_{12-}$                                            &1\\                                                         
\hline
\end{tabular}

\newpage
\subsection{ Shapr isosceles triangle case with $s>\frac{\sqrt 3}2 r$}
This table is also the reorganization of the table in Theorem \ref{le:isoscelesd12}. \newline
\begin{tabular}{|c|c|c||c|c|}
\hline
$ d_1=d_2$                                                                                                                                                   &$d_3$   											&$X$                                                        &$|X|$\\
\hline \hline
$\left(0,\frac r 2 \right]$                                                                                                                                 &$\left(0, \frac{2s} 3\right]$  								&$Y_0$                                                    &1\\
\hline
                                                                                       		                                                           &$\left[\frac {2s}3 , 2s \right]$ 							&$N_3$                                                    &1\\
\hline
                                                                                       		                                                           &$\left[ 2s, \infty \right)$ 								&$Y_3$                                                     &1\\
\hline
$\left(\frac r 2, \sqrt{ \frac{r^2} 4 + \frac {s^2} 9}\right]$                                                                             &$\left(0, \frac{2s} 3\right]$  								&$Y_0$                                                     &1\\
\hline
                                                                                       		                                                           &$\left[\frac {2s}3, s- \sqrt{ d_1^2 - \frac{r^2}4 }\right]$  				&$N_3$                                                     &1\\
\hline
                                                                                       		                                                           &$\left(s - \sqrt{ d_1^2 - \frac{r^2}4}, s + \sqrt{ d_1^2 - \frac{r^2}4 }\right)$    &$\{S_{23+}, S_{31+}\}$                     &2\\
\hline
                                                                                       		                                                           &$\left[ s+ \sqrt{ d_1^2 - \frac{r^2}4}, 2s \right]$ 					&$N_3$                                                     &1\\
\hline
                                  		                                                                                                                &$\left[ 2s, \infty \right)$ 								&$Y_3$                                                      &1\\
\hline
$\left(\sqrt{ \frac{r^2} 4 + \frac {s^2} 9},\sqrt{ \frac{r^2} 4 + s^2}\right]$                                                   &$\left(0, R \right)$                                                                                             &$S_{12+}$                                             &1\\
\hline
                                                                                                                                                                     &$R$                                                                                                                   &$\{S_{12+}, S_{23+}, S_{31+}\}$     &3\\
\hline
                                                                                                                                                                     &$\left(R, s + \sqrt{d_1^2 - \frac {r^2} 4}\right)$                                                &$\{S_{23+},S_{31+}\}$                        &2\\  
\hline
                                                                                       		                                                          &$\left[ s+ \sqrt{ d_1^2 - \frac{r^2}4},   2s \right]$ 					&$N_3$                                                       &1\\
\hline
                                                                                       		                                                          &$\left[ 2s, \infty \right)$ 								&$Y_3$                                                        &1\\
\hline
$\left(\sqrt{\frac{r^2}4+s^2}, \sqrt{\frac{r^2}4 + s^2\left(\frac{4s^2+5r^2}{4s^2-3r^2}\right)^2}\right)$    &$\left(0, R\right)$                                                                                              &$S_{12+}$                                               &1\\ 
\hline
                                                                                                                                                                     &$R$                                                                                                                   &$S_{123}^+$         &3\\
\hline
                                                                                                                                                                     &$\left(R,M\right)$                                                                                              &$\{S_{23+},S_{31+}\}$                         &2\\
\hline
                                                                                                                                                                     &$M$      						           			           &$\{ S_{23+}, S_{31+}, S_{12-} \}$      &3\\
\hline 
                                                                 							                          &$\left( M,\infty\right) $  	                                                                           &$S_{12-}$                                                &1\\  
\hline
$\sqrt{\frac{r^2}4 + s^2 \left(\frac{4s^2+5r^2}{4s^2-3r^2}\right)^2}$                                                         &$\left(0, \sqrt{d_1^2 - \frac{r^2}4 - s^2} \right)$                                               &$S_{12+}$                                               &1\\                                                                    
\hline                                                      
                                                                                                                                                                     &$\sqrt{d_1^2 - \frac{r^2}4 - s^2}$                                                                     &$\{S_{12+},S_{23\pm},S_{31\pm}\}$ &5 \\  
\hline
                                                                                                                                                                     &$\left( \sqrt{d_1^2 - \frac{r^2}4 - s^2},M\right)$                                               &$\{S_{23+},S_{31+}\}$                         &2\\     
\hline
                                                                                                                                                                     &$M$      						           			           &$\{ S_{23+}, S_{31+}, S_{12-} \}$      &3\\
\hline 
                                                                 							                          &$\left( M,\infty\right) $  	                                                                           &$S_{12-}$                                                &1\\ 
\hline
$\left(\sqrt{\frac{r^2}4 + s^2 \left(\frac{4s^2+5r^2}{4s^2-3r^2}\right)^2}, \infty\right)$                              &$\left(0,d_3^*\right)$                                                                                        &$ S_{12+}$                                              &1\\
\hline
                                             			                                                                                          &$d_3^*$ 					                      			           &$\{ S_{12+}, S_{23-}, S_{31-} \}$       &3\\
\hline
							                                                                                          &$\left( d_3^*,    \sqrt{d_1^2 -\frac{r^2}4 - s^2} \right)$                                     &$\{S_{23-}, S_{31-}\}$                          &2\\              
\hline
                                                                                                                                                                    &$\sqrt{d_1^2 - \frac{r^2}4 - s^2}$                                                                      &$\{ S_{23\pm}, S_{31\pm} \}$             &4\\
\hline
                                                                                                                                                                     &$\left( \sqrt{d_1^2 - \frac{r^2}4 - s^2},M\right)$                                               &$\{S_{23+},S_{31+}\}$                         &2\\
\hline
                                                                                                                                                                     &$M$      						           			           &$\{ S_{23+}, S_{31+}, S_{12-} \}$      &3\\
\hline 
                                                                 							                          &$\left( M,\infty\right) $  	                                                                           &$S_{12-}$                                                &1\\
\hline
\end{tabular}

\centerline{{\bf Acknowledgements}}
This paper was supported by NRF (National Research Foundation of Korea) grant funded by Korea government (Ministry of Science and ICT:RS-2023-00242308). 

\centerline{{\bf Conflicts of interest}}
The author declares that the publication of this paper has no conflict of interest.


\end{document}